\let\csname equation*\endcsname\relax
\let\csname endequation*\endcsname\relax
\newcommand{\Z}{\mathbb{Z}}
\newcommand{\R}{\mathbb{R}}
\newcommand{\A}{\mathcal{A}}
\newcommand{\C}{\mathbb{C}}
\newcommand{\M}{\mathcal{M}}
\newcommand{\W}{\mathcal{W}}
\newcommand{\CS}{{C^\infty_0(\M)}}
\newcommand{\dd}{\mathrm{d}}
\newcommand{\rr}[1]{\left(#1\right)}
\newcommand{\bx}{{\bm{x}}}
\newcommand{\bk}{{\bm{k}}}
\newcommand{\sx}{\mathsf{x}}
\newcommand{\sy}{\mathsf{y}}
\newcommand{\sz}{\mathsf{z}}
\newcommand{\ii}{\mathsf{i}}
\DeclareMathOperator{\supp}{\text{supp}}
\renewcommand{\tr}{\mathrm{Tr}}
\renewcommand{\tilde}{\widetilde}
\renewcommand{\bar}{\overline}
\newcommand{\Sol}{\mathsf{Sol}}
\renewcommand{\Re}{\mathrm{Re}}
\renewcommand{\Im}{\mathrm{Im}}
\newcommand{\Wk}{{\tilde{\mathcal{W}}(\R^n)}}
\newcommand{\ketbra}[2]{{\left| {#1} \right\rangle \!\!\left\langle {#2} \right|}}
\newcommand{\kg}{{\textsc{KG}}}
\newcommand{\fock}{{\mathfrak{F}(\mathcal{H})}}
\DeclareMathOperator{\Id}{\text{Id}}
\newtheorem{proposition}{Proposition}
\newtheorem{assumption}{Assumption}
\newtheorem{definition}{Definition}
\newtheorem{lemma}{Lemma}
\newcommand{\openone}{\mathds{1}}
\long\def\@makefntext#1{\parindent 1em\noindent 
 \makebox[1em][l]{\footnotesize\rm$\m@th^{\arabic{footnote}}$}%
 \footnotesize\rm #1}
\def\@makefnmark{\hbox{$^{\arabic{footnote}}\m@th$}}
\def\@thefnmark{\arabic{footnote}}
\begin{document} 

\title{The Unruh-DeWitt model and its joint interacting Hilbert space}

\author{Erickson Tjoa\footnote{Corresponding author.}}
\ead{erickson.tjoa@mpq.mpg.de}
\address{Max-Planck-Institut f\"ur Quantenoptik, Hans-Kopfermann-Stra\ss e 1, D-85748 Garching, Germany}

\author{Finnian Gray}
\ead{finnian.gray@univie.ac.at}
\address{University of Vienna, Faculty of Physics, Boltzmanngasse 5, A 1090, Vienna}

\date{\today}

\begin{abstract}
   In this work we make the connection between the Unruh-DeWitt particle detector model applied to quantum field theory in curved spacetimes and the rigorous construction of the spin-boson model. With some modifications, we show that existing results about the existence of a spin-boson ground state can be adapted to the Unruh-DeWitt model. In the most relevant scenario involving massless scalar fields in (3+1)-dimensional globally hyperbolic spacetimes, where the Unruh-DeWitt model describes a simplified model of light-matter interaction, we argue that common choices of the spacetime smearing functions regulate the ultraviolet behaviour of the model but can still exhibit infrared divergences. In particular, this implies the well-known expectation that the joint interacting Hilbert space of the model cannot be described by the tensor product of a two-dimensional complex Hilbert space and the Fock space of the vacuum representation. We discuss the conditions under which this problem does not arise and the relevance of the operator-algebraic approach for better understanding of particle detector models and their applications. Our work clarifies the connection between obstructions due to Haag’s theorem and infrared bosons in the spin-boson models, and paves the way for more rigorous study of entanglement and communication in the UDW framework involving multiple detectors.
\end{abstract}

\maketitle
\flushbottom
\setcounter{footnote}{0}


\section{Introduction}

Broadly speaking, relativistic quantum information (RQI) is concerned with the role of relativity in quantum information-processing tasks, or the use of quantum information theory for better understanding of fundamental theories such as relativistic quantum field theories (QFT). Even if we exclude approaches to quantum gravity, investigations in RQI vary vastly in methodologies and objectives. A highly non-exhaustive list includes studying relativistically consistent measurement theories \cite{fewster2020measurement,bostelmann2021impossible,polo2021detectorbased,jubb2022causal,pranzini2023detector}; operationalizing field-theoretic phenomena such as Unruh and Hawking effects using localized quantum-mechanical probes \cite{hawking1975particle,wald1994quantum,Crispino2008review,Unruh1979evaporation,DeWitt1979,tjoa2022unruh}; rigorous study of the entanglement structure and complexity of quantum field theories \cite{hollands2017entanglement,hollands2023channel,sanders2023separable,summers1985bell,summers1987bell,casini2020entanglement,casini2021entropic,longo2018relative}; understanding the relationship between information-theoretic causality and relativistic causality \cite{vilasini2022embedding,Hardy2007towards,zych2019bell,Costa2016causal,paunkovic2020causal}; constructing relativistic protocols \cite{reznik2003entanglement,reznik2005violating,Valentini1991nonlocalcorr,pozas2015harvesting,pozas2016entanglement,Jonsson2014cavityQED,tjoa2021harvesting,Simidzija2020capacity,Landulfo2016communication,tjoa2022teleport,lapponi2023relativistic,kent1999bitcommit,lo1998quantum,adlam2015crypto,buhrman2014position,vilasini2019composable}; and many others. 

The use of localized quantum-mechanical probes to study relativistic QFT is by now well-known, with the simplest example being the so-called Unruh-DeWitt (UDW) particle detector model \cite{Unruh1979evaporation,DeWitt1979}. This model is based on coupling a two-level system (``qubit detector'') to the field locally in spacetime, and it has been generalized in many ways \cite{Lopp2021deloc,Tales2020GRQO,Bruno2020time-ordering,Tjoa2020vaidya,Aubry2014derivative,perche2022localized,Nadine2021delocharvesting,doukas2013unruh,hu2022qhorad,hotta2020duality,perche2023particle,gale2023relativisticCOM,tjoa2023qudit}. For practical reasons, the calculations are mostly made based on the interaction picture to isolate the effect of the interaction in the Hamiltonian (see, e.g., \cite{Jonsson2024chainmappingmethods,hu2007exactQHO}, for some exceptions), and in certain restricted cases the unitary associated with the interacting piece can be computed exactly \cite{Landulfo2016communication,tjoa2023nonperturbative,lapponi2023relativistic,tjoa2022teleport,tjoa2022fermi,tjoa2022capacity,Simidzija2018no-go,Simidzija2020capacity}. Note that even in this latter case we cannot claim that we are ``exactly'' solving the dynamics: this requires us to either diagonalize the \textit{full} Hamiltonian of the UDW model, or at the very least show that the joint system has a ground state. Both tasks are clearly non-trivial given that even the {(simpler)} quantum Rabi model and some of its generalizations were only exactly solved recently \cite{braak2011rabi,xie2017quantum}.

In this work we revisit the UDW model from a different direction to re-assess the validity of the model. The reason is closely related to {Haag's theorem} \cite{earman2006haag}, or more generally the fact that there are many unitarily inequivalent representations {of} systems with infinitely many degrees of freedom. Under very generic situations,  one can show that interaction picture is of limited applicability in QFT with interactions and the UDW model is not expected to be exempt from this. Since the interaction picture forms the basis of virtually all known calculations in the literature on the UDW model, knowing its limitation is of fundamental importance especially since in some cases one would like to make “non-perturbative” claims, such as when using gapless detector or delta-coupled detectors \cite{Landulfo2016communication,tjoa2023nonperturbative,lapponi2023relativistic,tjoa2022teleport,tjoa2022fermi,tjoa2022capacity,Simidzija2018no-go,Simidzija2020capacity}. For example, it is known that for relativistic communication protocol through a quantum field, it is essential that the interaction involves strong coupling beyond perturbation theory for the channel capacity to be non-trivial \cite{tjoa2022capacity,Landulfo2016communication,Simidzija2020capacity}.

In more precise terms, our task is to compare the UDW model with the known rigorous formulation of the spin-boson model \cite{fannes1988equilibrium,spohn1989spinboson,amann1991spinboson,hasler2011ground,hasler2021existence,DeRoeck2015SBscatter,deBievre2006unruh} and see how obstructions similar to Haag’s theorem can arise in the UDW framework. Fortunately, we are greatly aided by extensive literature on spin-boson Hamiltonian \cite{fannes1988equilibrium,spohn1989spinboson,amann1991spinboson,hasler2011ground,hasler2021existence,DeRoeck2015SBscatter,deBievre2006unruh}, van Hove Hamiltonian {\cite{VanHove:1952jxs,derezinski2003vanHove,fewster2019algebraic}}, and more generally the Pauli-Fierz Hamiltonian \cite{derezinski2004scattering} and hence we do not need to reinvent the wheel. Inspired by the van Hove model describing a scalar field sourced by a classical time-independent current (see  {\cite{VanHove:1952jxs,derezinski2003vanHove,fewster2019algebraic}}), we first identify that Haag’s theorem applied to the “infrared (IR) regime” of the van Hove model and infinite many soft boson productions in the spin-boson model are essentially the same problem. Hence, by lifting this to the UDW model, we show that the UDW model suffers from the same obstruction in certain regimes where certain integrals quantifying the particle production is IR-divergent. 

Two interesting applications of this result  {are} that (1) massless fields in curved spacetimes can avoid IR divergences of this type, as spacetime curvature can modify the dispersion relation (by giving effective mass to the field modes), and (2) commonly used Gaussian smearing functions cannot in general remove the IR divergences, although Haag’s obstruction in the “ultraviolet (UV) regime” is always avoided.
In order to lift the spin-boson result to the UDW model, we need to modify the model such that the joint UDW Hamiltonian is time-independent, i.e., without a compactly supported switching function. We argue that this is advantageous for two reasons. First, we would like to be able to view the UDW model as a closed system, the same way that physical models in quantum optics such as quantum Rabi model or Jaynes-Cummings model are used. Second, for technical reasons this allows us to rigorously construct the UDW model, its Hilbert space representations and its dynamics using operator-algebraic techniques, rather than working with the model via the scattering theory formalism (e.g., Haag-Ruelle type \cite{derezinski2004scattering,DeRoeck2015SBscatter,dybalski2019scattering}) that is much harder and less natural for the typical problems of interest in the UDW framework. 

Our modification of the UDW model comes with an apparent price, namely that we lose the apparent temporal localization property of the UDW model. In the standard UDW model, it is possible to have two particle detectors to have their interactions be fully spacelike separated because the interaction can be appropriately switched off. The time-independent formulation does not have this property, and we argue that this is not a problem precisely because in operator-algebraic language we are demanding that observables at spacelike separation commute and not at the level of the joint Hamiltonian. Indeed, the time-independent formulation of the UDW model is closer in spirit to the way we view many closed interacting systems in the Standard Model where interactions are “always on” at the level of the Hamiltonian and Lagrangian, and relativistic causality is enforced, for example, through the cluster decomposition property. This will be physically relevant when studying multiple detector settings, and our work is a first step before going towards the multipartite generalization of the modified UDW model.


This paper is organized as follows. In Section~\ref{sec: setup} we briefly review the algebraic framework for quantization of scalar field theory in curved spacetimes and discuss the possible of ultraviolet and infrared divergences when a classical current source is present (the so-called \textit{van Hove model}  {\cite{VanHove:1952jxs,fewster2019algebraic,derezinski2003vanHove}}). In Section~\ref{sec: UDW-model} we review the standard construction of the UDW model and compare it with the rigorous formulation of the spin-boson model. In Section~\ref{sec: consequences} we revisit the UDW model and assess the validity of the model based on the analysis of the spin-boson model. Finally, in Section~\ref{sec: discussion} we provide some discussions on the implications of our analysis. We adopt the mostly-plus signature for the metric and use the natural units $c=\hbar=1$.

\section{Scalar field theory in curved spacetimes}
\label{sec: setup}

Let us first review the construction of scalar field theory in curved spacetimes (see, e.g., \cite{birrell1984quantum,wald1994quantum,fewster2019algebraic,Khavkhine2015AQFT} for detailed reviews). 
Let $\M$ be an $(n+1)$-dimensional globally hyperbolic spacetime with metric $g_{ab}$. A Klein-Gordon scalar field $\phi:\M\to \R$ satisfies the curved spacetime generalization of the wave equation, known as the Klein-Gordon equation, given by
\begin{align}
    (\nabla_a\nabla^a-m^2-\xi R)\phi = 0\,,
    \label{eq: KGE}
\end{align}
where $\nabla_a\nabla^a = g^{ab}\nabla_a\nabla_b$, $\nabla_a$ is the covariant derivative with respect to the Levi-Civita connection, $R$ is the Ricci scalar curvature. The parameter $\xi\geq 0$ prescribes (non-)minimal coupling to the scalar curvature and $m$ is the mass parameter. Eq.~\eqref{eq: KGE} generalizes the standard wave equation $(\partial_a\partial^a-m^2)\phi=0$ in flat spacetime, where $\partial_a\partial^a=\eta^{ab}\partial_a\partial_b$ and $\eta_{ab}$ is the flat Minkowski metric. Global hyperbolicity guarantees that the spacetime admits a foliation  $\M\cong \R\times \Sigma$ where $\Sigma$ is a Cauchy surface and there is a good notion of global time ordering, i.e., we can speak of ``constant-time slices''.

We start by considering quantization of the scalar field in the language of algebraic quantum field theory (AQFT) \cite{fewster2019algebraic,hollands2017entanglement,wald1994quantum,Khavkhine2015AQFT}. In the algebraic framework,  we first construct the \textit{algebra of observables} $\A(\M)$ for the field theory as well as quantum states $\omega$ on which $\A(\M)$ acts. The canonical quantization procedure is then recovered through a representation of the canonical commutation relations (CCR) associated to the pair $(\A(\M),\omega)$ via the so-called \textit{Gelfand-Naimark-Segal} (GNS) \textit{construction}. Finally, one then restricts the choice of algebraic states to a physically reasonable subclass called\textit{ Hadamard states }that have the correct short-distance singularities and for which the quantum stress-tensor has well-defined vacuum expectation values \cite{KayWald1991theorems,Radzikowski1996microlocal}.

Let $f\in \CS$ be a smooth compactly supported test function on $\M$ and let  $E^{\text{R/A}}\equiv E^{\text{R/A}}(\sx,\sy)$ be the {\textit{retarded/advanced propagators} associated with the Klein-Gordon operator $\hat P$ that satisfies $\hat P(E^{\text{R/A}} f) = f$}, where
\begin{align}
    E^{\text{R/A}} f\equiv (E^{\text{R/A}} f)(\sx) \coloneqq \int \dd V'\, E^{\text{R/A}} (\sx,\sx')f(\sx') \,,
\end{align}
and $\dd V' = \dd^n\sx'\sqrt{-g}$ is the proper volume element. The \textit{causal propagator} is defined to be the {advanced-minus-retarded propagator\footnote{The conventions used in the literature are not completely standardized, in addition to the different choice of metric signature. See \ref{appendix: green-function} for more details. } $E=E^\text{A}-E^\text{R}$}. If $O$ is an open neighbourhood of some Cauchy surface $\Sigma$ and $\varphi \in \Sol_\R(\M)$ is any real solution to Eq.~\eqref{eq: KGE} with compact Cauchy data, then it is known that there exists $f\in \CS$ with $\supp(f)\subset O$ such that $\varphi=Ef$ \cite{Khavkhine2015AQFT}. In other words, the compactly supported test functions serve as labels for the space of solutions for the Klein-Gordon equation.

\subsection{Algebra of observables}

The quantum theory for the scalar field $\phi$ is defined by specifying its \textit{algebra of observables}  $\A(\M)$. The quantization of $\phi$ is viewed as an $\R$-linear mapping {from the space of test functions into the algebra,} 
\begin{align}
    \hat\phi: C^\infty_0(\mathcal{M})&\to \A(\M)\,,\quad f\mapsto \hat\phi(f)
\end{align}
where the \textit{smeared field operator} is given by
\begin{align}
    \hat{\phi}(f)\coloneqq \int\dd V\,\hat{\phi}(\sx)f(\sx)\,.
\end{align}
That is, $\A(\M)$ is a free algebra generated by $\hat{\phi}(f)$, its formal adjoint $\hat{\phi}(f)^\dagger$, and the identity operator $\openone$ where $f\in \CS$, together with the relations:
\begin{enumerate}[leftmargin=*,label=(\alph*)]
    \item $\hat\phi(f)^\dag = \hat\phi(f)$ for all $f\in \CS$;
    \item $\hat\phi(\hat Pf) = 0$ for all $f\in \CS$, where $\hat{P}=\nabla_a\nabla^a-m^2-\xi R$ is the Klein-Gordon wave operator.
    \item $[\hat\phi(f),\hat\phi(g)] = \ii E(f,g)\openone $ for all $f,g\in \CS$, where $E(f,g)$ is the \textit{smeared} causal propagator
    \begin{align}
        E(f,g)\coloneqq \int \dd V f(\sx) (Eg)(\sx)\,.
    \end{align}
\end{enumerate}

Therefore $\A(\M)$ forms a unital $*$-algebra. The usual unsmeared field operator $\hat{\phi}(\sx)$ is to be regarded as an operator-valued distributions. By the time-slice axiom, it suffices that the algebra is generated by smeared field operators where $\supp(f)\subset O$ for some open subset $O\subset \M$ that contains a Cauchy slice $\Sigma$, though for non-interacting scalar fields this is comes for free \cite{benini2015models}.

For our purposes we may also work with the ``exponentiated version'' of $\hat\phi(f)$ that generates a \textit{Weyl algebra} $\W(\M)$. This is a unital $C^*$-algebra generated by elements that formally take the form 
\begin{align}
    W(Ef) \equiv 
    {e^{\ii\hat\phi(f)}}\,,\quad f\in \CS\,,    \label{eq: Weyl-generator}
\end{align}
obeying \textit{Weyl relations}:
\begin{equation}
    \begin{aligned}
    W(Ef)^\dagger &= W(-Ef)\,,\\
    W(E (Pf) ) &= \openone\,,\\
    W(Ef)W(Eg) &= e^{-\frac{\ii}{2}E(f,g)} W(E(f+g))
    \end{aligned}
    \label{eq: Weyl-relations}
\end{equation}
where $f,g\in \CS$. The Weyl algebra $\W(\M)$ is more convenient as it admits a $C^*$-norm so that it can be promoted to a unique (up to isomorphism) unital $C^*$-algebra \cite{bratteli2013operator}. We can thus realize elements of the Weyl algebras as bounded operators acting on a Hilbert space via the GNS construction \cite{wald1994quantum,Khavkhine2015AQFT,fewster2019algebraic} that relates the algebraic framework with the standard canonical quantization.

\subsection{Algebraic states and quasifree representation}

An \textit{algebraic state} is defined to be a $\C$-linear functional $\omega:\W(\M)\to \C$ (similarly for $\A(\M)$) such that 
\begin{align}
    \omega(\openone) = 1\,,\quad  \omega(A^\dagger A)\geq 0\quad \forall A\in \W(\M)\,.
    \label{eq: algebraic-state}
\end{align}
In words, the algebraic state is essentially a map from observables to expectation values. We say that the state $\omega$ is pure if it cannot be written as $\omega= \alpha \omega_1 + (1-\alpha)\omega_2$ for any $\alpha\in (0,1)$ and any two algebraic states $\omega_1,\omega_2$ and it is mixed otherwise. 

The basic set of physical quantities in the scalar QFT is given by its \textit{$n$-point correlation function} (also known as the Wightman $n$-point functions) of the field operators
\footnote{ {The notation for the Wightman function $\mathsf{W}(f_1,...,f_n)$ should neither be confused with the italicized $W(Ef)$ describing the Weyl generator nor with the calligraphic $\cal W(M)$ representing the whole algebra.}}
\begin{align}
    \mathsf{W}(f_1,...,f_n)\coloneqq \omega(\hat\phi(f_1)...\hat\phi(f_n))\,,\quad f_j\in \CS\,.
    \label{eq: n-point-functions}
\end{align}
Although the RHS seems to involve the algebra of unbounded operators $\A(\M)$, the GNS representation of the Weyl algebra $\W(\M)$ lets us calculate these correlation functions by taking derivatives: for example, the smeared Wightman two-point function is formally computed as
\begin{align}\label{eq: Wightman-formal-bulk}
    &\mathsf{W}(f,g) \equiv -\frac{\partial^2}{\partial s\partial t}\Bigg|_{s,t=0}\!\!\!\!\!\!\!\!\omega(e^{\ii\hat\phi(sf)}e^{\ii\hat\phi(tg)})\,.
\end{align}
The RHS is only formal as the Weyl algebra itself does not have the right topology to allow both exponentiation and taking derivatives. The RHS must therefore be understood in a sufficiently regular GNS representations with respect to a state $\omega$ \cite{fewster2019algebraic,Khavkhine2015AQFT,derezinski2006introduction}.

Not every algebraic state is physically relevant and it is generally accepted that any physically reasonable state for the scalar QFT should be a \textit{Hadamard state} \cite{KayWald1991theorems,Radzikowski1996microlocal}. These are the states whose renormalized stress-energy tensor (and hence the Hamiltonian) is well-defined and Hadamard states have the correct short-distance singularities. Furthermore, there is a subfamily of states known as \textit{quasifree states} where explicit constructions can be given. This family has three nice properties: (i) the GNS representation (called \textit{quasifree representations}) will correspond to the usual \textit{Fock representation} in the canonical quantization approach, (ii) the state is completely specified by its two-point correlation functions; (iii) the quasifree representation is sufficiently well-behaved that if $\pi_\omega$ is a quasifree representation for $\A(\M)$ and $\Pi_\omega$ is a quasifree representation for $\W(\M)$, then
\begin{align}
    \pi_\omega(\hat{\phi}(f)) = -\ii\frac{\dd}{\dd t}\Pi_\omega(W(tEf))\Bigr|_{t=0}\,.
\end{align}
In other words, in the quasifree representation we can indeed view $\Pi_\omega(W(Ef))=e^{\ii\pi_\omega(\hat{\phi}(f))}$, which justifies the shorthand in Eq.~\eqref{eq: Weyl-generator} (see \cite{ruep2021weakly} for nice explanation of this point). Indeed, a representation where $\hat{\phi}(f)$ makes sense is called a \textit{regular representation}\footnote{{In a regular representation, the smeared field operator $\pi_\omega(\hat{\phi}(f))$ is essentially self-adjoint over a dense invariant linear subspace $\mathcal{D}_\omega$ of the GNS Hilbert space $\mathcal{H}_\omega$ for all $f\in\CS$ \cite{Khavkhine2015AQFT}.}} \cite{derezinski2006introduction,Khavkhine2015AQFT}.

To see how the Fock representation arises, we first consider how the definition of quasifree states is used in practice. First, we note that the solution space $\Sol_\R(\M)$ is a real symplectic vector space equipped with a symplectic form $\sigma:\Sol_\R(\M)\times\Sol_\R(\M)\to \R$ 
\begin{align}
    \sigma(\varphi_1,\varphi_2) \coloneqq \int_{\Sigma_t}\!\! {\dd\Sigma^a}\,\Bigr[\varphi_{{1}}\nabla_a\varphi_{{2}} - \varphi_{{2}}\nabla_a\varphi_{{1}}\Bigr]\,,
    \label{eq: symplectic form}
\end{align}
where $\dd \Sigma^a = -{n}^a \dd\Sigma$, $-{n}^a$ is the inward-directed unit normal {(i.e. past directed)} to the Cauchy surface $\Sigma_t$, and $\dd\Sigma = \sqrt{h}\,\dd^3\bx$ is the induced volume form on $\Sigma_t$ \cite{poisson2009toolkit,wald2010general}. 

Any quasifree state $\omega_\mu$ is associated with a \textit{real-bilinear inner product} $\mu:\Sol_\R(\M)\times\Sol_\R(\M)\to \R$ that satisfies \cite{KayWald1991theorems}
\begin{align}
    |\sigma(Ef,Eg)|^2 \leq \mu(Ef,Ef)\mu(Eg,Eg)\,,
\end{align}
where we recall that $Ef,Eg\in \Sol_\R(\M)$ for all $f,g\in\CS$. The inequality is saturated if $\omega_\mu$ is a pure state. A quasifree state is then \textit{defined} to be those that satisfy
\begin{align}
    \omega_\mu(W(Ef)) \coloneqq e^{-\mu(Ef,Ef)/2}\,.
    \label{eq: quasifree-def}
\end{align}
Due to the form of \eqref{eq: quasifree-def}, in modern context quasifree states are sometimes called \textit{Gaussian states} \cite{Khavkhine2015AQFT}. In what follows we simply write $\omega$ instead of $\omega_\mu$ and the context should make it clear.

The definition of quasifree states is not useful unless we can compute the induced norm $||Ef|| \coloneqq \sqrt{\mu(Ef,Ef)}$. To achieve this, we first extend $\Sol_\R(\M)$ into the space of \textit{complex} solutions $\Sol_\C(\M)$ of \eqref{eq: KGE} and define the \textit{Klein-Gordon (KG) bilinear product} by 
\begin{align}
    \braket{\varphi_1,\varphi_2}_{\textsf{KG}}\coloneqq \ii\sigma(\varphi_1^*,\varphi_2)
\end{align}
for $\varphi_1,\varphi_2\in \Sol_\C(\M)$. It was shown in \cite{KayWald1991theorems} that for a quasifree state associated with the bilinear map $\mu$, there exists a subspace $\mathcal{H}\subset\Sol_\C(\M)$ such that $(\mathcal{H},\braket{\cdot,\cdot}_{\textsf{KG}})$ is a Hilbert space --- known as the  \textit{one-particle Hilbert space} --- and an $\R$-linear map $K:\Sol_\R(\M)\to\mathcal{H}$ such that for all $\varphi_1,\varphi_2\in\Sol_\R(\M)$ \cite{KayWald1991theorems}
\begin{enumerate}[label=(\alph*)]
    \item $K\Sol_\R(\M)+\ii K\Sol_\R(\M)$ is dense in $\mathcal{H}$;
    \item $\mu(\varphi_1,\varphi_2) = \Re\braket{K\varphi_1,K\varphi_2}_{\textsf{KG}}$;
    \item $\sigma(\varphi_1,\varphi_2) = 2\Im\braket{K\varphi_1,K\varphi_2}_{\textsf{KG}}$;
    \item $\Sol_\C(\M)\cong \mathcal{H}\oplus\mathcal{\overline H}$, where $\mathcal{\overline{H}}$ is the complex-conjugate Hilbert space of $\mathcal{H}$ abd $\braket{u,v}_{\textsf{KG}}=0$ for all $u\in\mathcal{H}$ and $v\in\mathcal{\overline H}$.
\end{enumerate}
In the language of canonical quantization, the map $K$ retains the ``positive-frequency part'' of a real solution to the Klein-Gordon equation. The specification of $(\mathcal{H},K)$ is known as the \textit{one-particle structure} \cite{KayWald1991theorems}.

The Wightman two-point function is given by \cite{KayWald1991theorems}
\begin{align}
    \mathsf{W}(f,g) &= \braket{KEf,KEg}_{\textsf{KG}} = \mu(Ef,Eg) + \frac{\ii}{2}E(f,g)\,,
\end{align}
where we have used the fact that $\sigma(Ef,Eg) = E(f,g)$. Since $E(f,g)$ is antisymmetric, it follows that $||Ef||^2 = \mu(Ef,Ef) \equiv \mathsf{W}(f,f)$. Hence the quasifree condition boils down to computing symmetrically smeared Wightman function $\mathsf{W}(f,f)$. The calculations of $\mathsf{W}(f,g)$ can be done in any sufficiently regular representations via the GNS construction.

\subsection{Relationship with canonical quantization}

The Fock representation that arises from the GNS representation of $\A(\M)$ can be summarized as follows (see \cite{fewster2019algebraic,wald1994quantum,Khavkhine2015AQFT,bratteli2002operatorv2} for more details). The GNS Hilbert space $\mathcal{H}_\omega$ for the quasifree representation is given by the Fock space over the one-particle Hilbert space $\mathcal{H}$
\begin{align}
    \mathcal{H}_\omega \equiv \fock = \bigoplus_{n=0}^\infty \mathcal{H}^{\odot n}\,,
\end{align}
where $\mathcal{H}^{\odot 0}\cong \C$ and $\mathcal{H}^{\odot n}$ means symmetrized direct sum (the $n$-particle sector of the Fock space). In the Fock representation, the field operator can be written as
\begin{align}
    \hat{\phi}(f) \equiv \pi_\omega(\hat\phi(f)) = \hat{a}(({KEf})^*) + \hat{a}^\dagger(KEf)\,,
\end{align}
where we drop $\pi_\omega$ when it is clear that we are in the Fock representation. Note that if we consider complex smearing functions $f:\M\to \C$, we can write
\begin{align}
    \hat{\phi}(f) \equiv \hat{\phi}(\Re f) + \ii \hat{\phi}(\Im f)\,.
\end{align}
The operators $\hat{a}(u^*),\hat{a}^\dagger(v)$ are \textit{smeared ladder operators}\footnote{Note that there are various conventions on how to label the smeared ladder operators (see \cite{tjoa2022capacity}). The convention we pick here agrees with \cite{wald1994quantum} in that it maintains $\hat{a},\hat{a}^\dagger$ as being linear in their arguments. The convention in \cite{fewster2019algebraic} that does not include complex conjugation in the argument for $\hat{a}(\cdot)$ would make $\hat{a}$ an anti-linear map while $\hat{a}^\dagger(\cdot)$ is linear.} obeying the CCR
\begin{align}
    [\hat{a}(u^*),\hat{a}^\dagger(v)] = \braket{u,v}_{\textsf{KG}}\openone
\end{align}
on a suitable dense domain of the Fock space (since these operators are unbounded).

The standard canonical approach is recovered by working with the unsmeared field operator $\hat{\phi}(\sx)$ and considering the Fourier mode decomposition
\begin{align}
    \hat{\phi}(\sx) = \int\dd^n\bk\,\hat{a}_\bk u_\bk(\sx) + \hat{a}_\bk^\dagger u_\bk^*(\sx)\,,
\end{align}
where $u_\bk(\sx)$ and $u_\bk^*(\sx)$ are positive- and negative-frequency modes normalized to Dirac delta functions via the Klein-Gordon product:
\begin{subequations}
\begin{align}
    \braket{u_\bk,u_{\bk'}}_{\textsf{KG}} &= \delta^n(\bk-\bk')\,,\\
    \braket{u^*_\bk,u^*_{\bk'}}_{\textsf{KG}} &= -\delta^n(\bk-\bk')\,,\\
    \braket{u_\bk,u^*_{\bk'}}_{\textsf{KG}} &= 0\,.
\end{align}
\end{subequations}
These modes are not proper elements of the one-particle Hilbert space but they are convenient to work with. In particular, we see that $\hat{a}^{\phantom{*}}_\bk\equiv \hat{a}(u_\bk^*)$ and $\hat{a}_\bk^\dagger\equiv \hat{a}^\dagger(u_\bk)$. Using these, we can decompose the solution $Ef$ in the ``eigenmode basis'' $\{u_\bk,u_\bk^*\}$
\begin{align}
    Ef \equiv \int\dd^n\bk \,\braket{u_\bk,Ef}_{\textsf{KG}}u_\bk + \braket{u^*_\bk,Ef}_{\textsf{KG}}u_\bk^*\,.
\end{align}
Using the fact that
\begin{align}
   {\ii}E(\sx,\sx') &= \int\dd^n\bk\,u_\bk(\sx)u_\bk^*(\sx') - u_\bk^*(\sx)u_\bk(\sx')\,, 
\end{align}
the positive-frequency part $KEf$ of $Ef$ reads
\begin{align}
    KEf &= \int\dd^n\bk \,\braket{u_\bk,Ef}_{\textsf{KG}}u_\bk \equiv \int\dd^n\bk\,{f_\bk} u_\bk  
\end{align}
where
\begin{align}
    {f}_\bk\coloneqq {-\ii}\int\dd V\,f(\sx)u_\bk^*(\sx)\,.
    \label{eq: f-fourier}
\end{align} 

Furthermore, since $KEf$ is an element of the one-particle Hilbert space $\mathcal{H}$, it is convenient to use the notation $\ket{KEf}\in \mathcal{H}$ and the (improper) basis as $\ket{u_\bk}$. In this notation, we write
\begin{align}
    \ket{KEf} = \int\dd^n\bk \, {{f}_\bk} \ket{u_\bk}\,,
    \label{eq: KEf-notation}
\end{align}
This suggests that it is useful to instead label the ladder operators using the ``momentum space smearing function'' ${f}_\bk$ rather than the real space function $KEf$. For this reason we will often write the argument with respect to the $\bk$-space counterpart:
\begin{align}
    \hat{\phi}(f) &= \hat{a}({f}_\bk^*) + \hat{a}^\dagger({f}_\bk) \equiv \int\dd^n\bk\,\rr{\hat{a}_\bk^{\phantom{\dagger}} {{f}^*_\bk} + \hat{a}^\dagger_\bk {{f}_\bk}} \notag\\ 
    &\equiv \hat{\phi}(f_\bk)\,.
    \label{eq: fock-representation}
\end{align}
Using $f_\bk$ essentially means that we are using the elements of the one-particle Hilbert space $\mathcal{H}$ to label the field observables rather than using the set of spacetime smearing functions $\CS$. For many practical calculations such as time evolution of observables this relabeling is also more natural as they are adapted to Fock space structure (see Section~\ref{sec: consequences}).

We remark that the textbook version of canonical quantization refers to the Fock representation that is unitarily equivalent to the \textit{GNS representation of the vacuum state} $\omega_0$ (which is quasifree), which we call the \textit{vacuum representation}. In this representation, the GNS cyclic vector is the Fock vacuum $\ket{0}$ satisfying $\hat{a}_\bk\ket{0}$ for all $\bk$. Importantly, not all quasifree representations are unitarily equivalent even if they are all Fock representations. The simplest example in quasifree setting is the \textit{thermal representation} obtained from the \textit{Kubo-Martin-Schwinger (KMS) states} \cite{kubo1957statistical,martinSchwinger1959theory}. It is well-known that for infinite systems obeying the CCR algebra, the GNS representations of two KMS states at different temperatures are not unitarily equivalent \cite{takesaki1970disjointness,muller1980disjointness}, and in particular finite-temperature thermal representation will not be unitarily equivalent to the zero-temperature vacuum representation. This can be traced to the fact that the vacuum Fock representation only supports \textit{finitely many particles}, while the thermal representations can support \textit{finite particle density} but infinite particle content\footnote{One encounters this when calculating the particle content in using the Bogoliubov transformation approach in the context of Unruh effect \cite{birrell1984quantum}. This is the statement that the Fock representations of the Rindler vacuum and the Minkowski vacuum are unitarily inequivalent, and with respect to the Rindler time  the Minkowski vacuum is a KMS state \cite{bisognano1975duality}.}. This observation is relevant for us since in the UDW context and interacting QFT more generally, some problems arise when we try to shoehorn the interacting ground state into the vacuum  representation of the free theory.

\subsection{Ultraviolet and infrared problems in bosonic field theory}
\label{sec: IR-divergences}

Since we are interested in UDW as a simplified model of light-matter interaction, it is imperative that what we do applies  (at least) for a massless scalar field with linear dispersion $\omega_\bk = |\bk|$. As is well-known to the mathematical physics community, this turns out to be a non-trivial problem. We will use this section to illustrate the IR problem that plagues massless fields. In fact, an analogous problem already arises in the context of \textit{van Hove Hamiltonians} that describe a scalar field theory driven by classical source.

The van Hove model is an exactly solvable model for a scalar field theory in Minkowski spacetime driven by a classical {time-independent} external source $J$: the equation of motion is given by
\begin{align}
    (\partial_a\partial^a-m^2)\phi(t,\bx) = J(\bx)\,.
    \label{eq: van-hove-KGE}
\end{align}
The Hamiltonian at $t=0$ is given by\footnote{{Note that while the classical wave equation \eqref{eq: van-hove-KGE} can be solved for time-dependent $J$, the resulting Hamiltonian will be time-dependent and this does not allow us to speak of the ground state of the model without further assumptions on the time-dependence.}}
\begin{align}
    \hat{H}_J = \int \dd^n\bk\rr{\omega_\bk^{\phantom{\dagger}} \hat{a}_\bk^\dagger\hat{a}_\bk^{\phantom{\dagger}}  + \hat{a}_\bk^{\phantom{\dagger}}  z^*_\bk + \hat{a}_\bk^\dagger z_\bk^{\phantom{*}} } + C\openone \,.
\end{align}
Here $C$ is an arbitrary (possibly infinite) constant and the complex function $z_\bk$ is related to the Fourier transform of $ J(0,\bx)$ via
\begin{align}
    z_\bk =  \frac{\Tilde{J}_\bk}{\sqrt{2(2\pi)^n\omega_\bk}}\,,
\end{align}
where $\tilde{J}_\bk$ is the spatial Fourier transform of $J(\bx)\equiv J(0,\bx)$:
\begin{align}
    \Tilde{J}_\bk \coloneqq \int\dd^n\bx\,J(\bx) e^{-\ii\bk\cdot\bx}\,.
\end{align}
It is worth noting that the van Hove Hamiltonian can be considered as a special case of the Pauli-Fierz Hamiltonian \cite{derezinski2003vanHove}.

There are two kinds of van Hove model. First set $C=0$. This Hamiltonian defines a family of self-adjoint operators if the condition holds \cite{derezinski2003vanHove}:
\begin{align}
    \int_{B_0}\dd^n\bk\, |z_\bk|^2 + \int_{\R^n\setminus B_0}\hspace{-0.75cm} \dd^n\bk \,\frac{|z_\bk|^2}{\omega_\bk^2} <\infty
    \label{eq: IR+UV-condition}
\end{align}
where $B_0\coloneqq \{\bk\in \R^n:\omega_\bk\leq \omega_0\}$ for some fixed positive value $\omega_0$. Note that the first term imposes some IR regularity while the second term is a statement about the ultraviolet (UV) regularity. One way to see this is to observe that the first term of Eq.~\eqref{eq: IR+UV-condition} is always finite for massive fields ($\omega_\bk=\sqrt{|\bk|^2+m^2}$) even if $J(\bx)=\delta^n(\bx)$, while the second term would diverge. Similarly, for massless fields ($\omega_\bk=|\bk|$) the second term will be finite for $z_\bk$ that decays sufficiently fast, such as $e^{-\alpha^2|\bk|^2}$ for some $\alpha>0$, but the first term can be divergent if $z_\bk\sim |\bk|^{-\alpha}$ for sufficiently large $\alpha$.

Suppose instead that we pick
\begin{align*}
    C=\int\dd^n\bk\frac{|z_\bk|^2}{\omega_\bk} \neq 0\,,
\end{align*}
which allows us to complete the square and define the second type of van Hove model
\begin{align}
    \hat{H}_{J}' &= \int\dd^n \bk\,\omega_\bk^{\phantom{\dagger}}\hat{b}_\bk^\dagger\hat{b}_\bk^{\phantom{\dagger}}\,,\qquad \hat{b}_\bk\coloneqq \hat{a}_\bk+\frac{z_\bk^*}{\omega_\bk}\openone \,.
\end{align}
This Hamiltonian is self-adjoint if a stronger IR regularity condition is satisfied:
\begin{align}
    \int_{B_0}\dd^n\bk\, \frac{|z_\bk|^2}{\omega_\bk} + \int_{\R^n\setminus B_0}\hspace{-0.75cm} \dd^n\bk \,\frac{|z_\bk|^2}{\omega_\bk^2} <\infty
\end{align}
In particular, if we impose that
\begin{align}
    \int\dd^n\bk\,\frac{|z_\bk|^2}{\omega_\bk} <\infty\,,
    \label{eq: IR-regular-full}
\end{align}
then the two Hamiltonians only differ by $C$, otherwise $C=\infty$ in which case the first van Hove model has a UV divergence. However, the second model can still be well-defined by  an ultraviolet renormalization --- using an infinite counterterm to subtract $C$ \cite{derezinski2003vanHove}.

We can organize these results by first defining the following integral
\begin{equation}
    \begin{aligned}
    R^\Lambda_j(z_\bk,n)&\coloneqq\int_\Lambda\dd^n\bk\,\frac{|z_\bk|^2}{\omega_\bk^j}\,,\qquad \Lambda\subseteq \R^n\,,\\
     R^{\Lambda^c}_j(z_\bk,n) &\coloneqq R^{\R^n\setminus\Lambda}_j(z_\bk,n)\,.
    \end{aligned}
\end{equation}
We will write $R_j(z_\bk,n)$ when $\Lambda=\R^n$. We are interested in situations where there is no UV problem in $R_j(z_\bk,n)$, i.e., $z_\bk$ decays sufficiently fast at large $|\bk|$. If $R_{j}<\infty$ then $R_{i}<\infty$ for all $0\leq i < j$ since removing powers of $1/\omega_\bk$ always improves the IR and the UV remains well-regulated if $z_\bk$ decays superpolynomially. For our purposes we will assume the following UV regularity:
\begin{assumption}[UV regularity]
    \label{assumption: sensible}
    The choice of $z_\bk$ is such that 
    \begin{align}
        R_0(z_\bk,n) <\infty \,.
    \end{align}
\end{assumption} 
\noindent All known calculations in the UDW literature satisfy this requirement: in particular, this holds for the hydrogen-like atoms \cite{pozas2015harvesting}. This will be the bare minimum we need for the subsequent discussions.

As we will discuss in Section~\ref{sec: UDW-model}, while the UDW model is expected to have good UV behaviour due to the spatial localizability of the detector\footnote{In situations where both pointlike limit and sharp switching are convenient (see Section~\ref{sec: UDW-model} for the description of the UDW model), as is often the case in the master equation approaches \cite{kaplanek2020hot,floreanini2004unruh,moustos2017nonmarkov,tjoa2023effective}, a UV cutoff needs to be imposed by hand.}, Assumption~\ref{assumption: sensible} alone is not sufficient and some care is needed to control the IR behaviour of the model. In the standard interaction picture calculations, this issue is not visible as this concerns the existence of ground states (and hence the validity) of the model. We end this section by mentioning that the spin-boson Hamiltonian and van Hove Hamiltonians are special case of the Pauli-Fierz Hamiltonian which has also been extensively studied in the literature to varying degrees of generality (see, e.g.,  \cite{derezinski1999asymptotic,gerard2000existence,griesemer2001ground} and references therein).

\section{UDW model and the spin-boson model} 
\label{sec: UDW-model}

In this section we will discuss the UDW model in curved spacetime and its connection to the so-called \textit{spin-boson} model in the literature. We will see that there is a sense in which the two models involve the same kind of physical system but attempting to address two quite different physical scenarios, distinguished by the \textit{temporal localizability} of the interactions.

\subsection{Standard UDW detector model in curved spacetimes}

The global hyperbolicity of the spacetime $\M$ guarantees the existence of a global foliation $\M\cong \R\times \Sigma$ where $\Sigma$ is a Cauchy surface in $\M$. We have a coordinate system $(t,\bx)$ that labels the foliation so that we have a one-parameter family of Cauchy surfaces $\Sigma_t$ that defines ``constant-$t$ slices''. {In particular, this means that the function $t$ which labels these surfaces also defines their normals: $n_a=-N(\sx)\partial_at$, where $N(\sx)$ is the lapse function effectively determined by the normalization of $n$. We also assume that the congruence of curves connecting any two Cauchy surfaces have tangent vector $t^a$ so that $t^a\partial_a t=1$. }

In what follows, we will restrict our attention to only \textit{static spacetimes}, i.e., those that admit global timelike Killing vector field $t^a$ such that in the ``adapted coordinates'' we have $t^a\equiv (\partial_t)^a$ and the metric takes the time-independent form
\begin{align}
    \dd s^2 = -N(\bx)^2\dd t^2 + h_{ij}(\bx)\dd x^i\dd x^j\,,
\end{align}
where $h_{ij}$ is the spatial metric on each $\Sigma_t$. {We note in this case $t^a=N(\bx) n^a$.} We will assume also that there is no coupling to Ricci scalar curvature, i.e., $\xi=0$ in the Klein-Gordon equation \eqref{eq: KGE}.

The existence of the global timelike Killing vector guarantees that the free Hamiltonian of the scalar field is \textit{time-independent}. In more detail, {for a minimally coupled scalar field ($\xi=0$ in Eq.~\eqref{eq: KGE})} the stress-energy density for the scalar field theory in the adapted coordinates is given by 
\begin{align}
    T_{00}(\sx) =
    (\partial_t\phi)^2 + \frac{1}{2}N(\bx)^{2}\rr{\partial_i\phi\partial^i\phi + m^2\phi^2}\,,
\end{align}
hence the free Hamiltonian of the scalar theory at time $t$ is given by 
\begin{align}
    {H}_0 \equiv -\int_{\Sigma_t}T^{a b}t_{a}d\Sigma_b \,.
\end{align}
Setting $t=0$, the standard quantization procedure allows us to write the quantized Hamiltonian as ``normal-ordered'' operator
\begin{align}
    \hat{H}_0 &= \int\dd^n\bk\,\omega^{\phantom{\dagger}}_\bk\hat{a}_\bk^\dagger\hat{a}_\bk^{\phantom{\dagger}}\,,
    \label{eq: free-Hamiltonian}
\end{align}
which is invariant under time translations if the spacetime is static. 

Note that in generic curved spacetimes we cannot perform normal ordering as there is no ``preferred vacuum state'' to subtract the UV-divergent contribution. However, since physically reasonable states are required to be Hadamard states, one way to handle this is to consider \textit{point-splitting regularization}\footnote{This in itself is not without drawbacks --- see \cite{moretti2003comments}.}. In essence, the regularization works by first considering the ``point-split'' stress-energy density 
\begin{align}
    & T_{00}(\sx,\sy) 
    \coloneqq \partial_t\phi(\sx)\partial_t\phi (\sy) +\frac{1}{2}N(\bx)N(\bm{y})\rr{\partial_i\phi(\sx)\partial^{i}\phi(\sy) + m^2\phi(\sx)\phi(\sy)}\,,
\end{align}
and the stress-energy tensor is obtained in the coincidence limit $\sy\to\sx$. After quantization, the expectation value with respect to some Hadamard state is given by $ \omega(\hat{T}^{00}(\sx,\sy))$ which is highly singular in the coincidence limit. However, since the difference between two Hadamard states is a \textit{smooth} bi-distribution, the \textit{renormalized stress-energy density} will be well-defined and is defined via
\begin{align}
    \omega(\hat{T}_{00}^{\text{reg}}(\sx)) \coloneqq \lim_{\sy\to \sx}\omega(\hat{T}_{00}(\sx,\sy)) - \omega_0(\hat{T}_{00}(\sx,\sy))
\end{align}
where $\omega_0$ is some reference Hadamard state. This is morally equivalent to normal ordering in the sense that 
\begin{align}
    \hat{T}^{\text{reg}}_{00}(\sx) \equiv \lim_{\sy\to \sx}\hat{T}_{00}(\sx,\sy) -\omega_0(\hat{T}_{00}(\sx,\sy))\openone \,.
\end{align}
The point-splitting is ambiguous in the sense that there is no preferred reference Hadamard state, but for static spacetimes there exist a preferred vacuum state $\omega_0$ that is invariant under the Killing time translation. Once this is fixed, the resulting quantum Hamiltonian $\hat{H}_0$ will be given precisely by Eq.~\eqref{eq: free-Hamiltonian}.

The Hamiltonian of the commonly used UDW model takes the form of a \textit{time-dependent} Hamiltonian
\begin{align}
    \hat H(t)\coloneqq \hat{H}_0 + \hat{h}_0 + \hat{H}_I(t)\,,
    \label{eq: full-UDW-Hamiltonian}
\end{align}
where $\hat{h}_0=\Omega\hat{\sigma}_z + \Delta\hat{\sigma}_x$ is the free Hamiltonian of the qubit detector with $\Omega\geq 0$ and $\Delta\in \R$. The interaction Hamiltonian is defined to be
\begin{align}
    \hat{H}_I(t) &= \lambda\int_{\Sigma_t}\dd^n\bx\sqrt{h}\,f(\sx)\hat{\sigma}_x(\tau(\sx))\otimes \hat{\phi}(\sx)\,.
\end{align}
Here $\lambda$ is the coupling constant, $f(\sx)$ is the spacetime smearing function prescribing the interaction region between the qubit detector and the field, and $\tau$ is the proper time of the qubit's {center of mass} (COM). 

In principle, one can then proceed to the dynamics of the system, but the dependence of $\tau$ on $\sx$ makes any practical calculations difficult without further simplifications such as the form of $f$. We need a coordinate system that is adapted to the neighbourhood of the COM trajectory of the qubit (where the interaction occurs).  This is given by the Fermi normal coordinates (FNC) \cite{Tales2020GRQO,Bruno2020time-ordering} $(\tau,\bar{\bx})$, which is the coordinate system where the COM trajectory is labelled by $\bar{\bx}=0$ and neighbouring points at fixed $\tau$ have proper distance $r=\sqrt{\Bar{\bx}\cdot\Bar{\bx}}$. Physically, this is the comoving frame of the observer moving along some fixed trajectory $\mathsf{z}(\tau)$ in curved spacetime.

One physical input we need to make progress is to assume that in the FNC, the spacetime smearing function \textit{factorizes} into spatial and temporal part:
\begin{align}
    f(\sx(\bar{\sx}))\equiv \chi(\tau)F(\Bar{\bx})\,,
\end{align}
which corresponds to the statement that in the rest frame of the observer carrying the qubit they must be able to distinguish the spatial profile of the qubit from the ``knob'' $\chi$ (called the \textit{switching function}) that turns on and off the interaction with the field. 

Once we make this assumption, the interaction Hamiltonian reads
\begin{align}
    \hat{H}_I(\tau) &= \lambda\int_{\mathcal{E}_\tau}\dd^n\bar{\bx}\sqrt{\bar{h}}\,\chi(\tau)F(\bar{\bx})\hat{\sigma}_x(\tau) \otimes \hat{\phi}(\sx(\bar{\sx}))\,.
\end{align}
This is slightly simpler because $\hat{\sigma}_x$ only depends on $\tau$, which is analogous to the pointlike model in the original model \cite{Unruh1979evaporation,DeWitt1979}. However, for generic interactions we still run into a problem because for generic trajectories the timelike vector $\partial_\tau$ is not parallel to the Killing vector $\partial_t$. This means that Eq.~\eqref{eq: full-UDW-Hamiltonian} will have mixed time parameter that is not uniformly convenient: if we use $\tau$, the free Hamiltonian is not time-translation invariant with respect to $\tau$, and if we use $t$ the interaction Hamiltonian does not factorize cleanly into spatial and temporal part. In principle, one is free to quantize the \textit{total} system in terms of the proper time $\tau$, but {this is generically intractable since the wave equation may not admit simple expressions in arbitrary FNC coordinates.}

\subsection{Temporal localization problem}
The standard formulation of the spin-boson (SB) model is given by the following Hamiltonian \cite{amann1991spinboson,spohn1989spinboson,fannes1988equilibrium,hasler2011ground,hasler2021existence}
\begin{align}
    \hat{H}_{\text{SB}} &= \int\dd^n\bk\,\omega_\bk^{\phantom{\dagger}}\hat{a}_\bk^\dagger\hat{a}_\bk^{\phantom{\dagger}} + \Omega\hat{\sigma}_z + \Delta\hat{\sigma}_x + \lambda \hat{\sigma}_x\otimes \int\dd^n\bk\,g_\bk^{\phantom{\dagger}}\rr{\hat{a}_\bk^{\phantom{\dagger}}+\hat{a}_\bk^\dagger} \,.
    \label{eq: spin-boson-hamiltonian}
\end{align}
 Here $\Omega$ is the bare energy gap, $\Delta$ can be viewed as the detuning parameter, $\lambda$ is the coupling strength, and $g_\bk$ is traditionally called the coupling function which defines the \textit{spectral density} of the interaction. The special case where there is only one mode in the bosonic sector is known as the \textit{quantum Rabi model}\footnote{ The quantum Rabi model has been solved \textit{exactly} \cite{xie2017quantum,braak2011rabi}.}, whose Hamiltonian reads
\begin{align}
    \hat{H}_{\textsc{R}} &= \omega\hat{a}^\dagger\hat{a}^{\phantom{\dagger}} + \Omega\hat{\sigma}_z+\Delta\hat{\sigma}_x + g \hat{\sigma}_x\otimes \rr{\hat{a}+\hat{a}^\dagger} \,.
    \label{eq: rabi-hamiltonian}
\end{align}
where the coupling function $g$ is now a single number. It is clear that the UDW Hamiltonian \eqref{eq: full-UDW-Hamiltonian} is structurally the same as spin-boson Hamiltonian or the Rabi Hamiltonian in that they involve linear coupling between a spin observable and a bosonic field operator. 

Here is the crucial observation: the spin-boson Hamiltonian \eqref{eq: spin-boson-hamiltonian} (or the Rabi Hamiltonian \eqref{eq: rabi-hamiltonian}) is viewed as a {time-independent Hamiltonian} and hence we are free to define it with respect to some fixed time slice, i.e., a `time-zero' operator. The resulting time evolution operator is then given by unitary time evolution
\begin{align}
    \hat{U}_{\text{SB}}(t) &= e^{-\ii t\hat{H}_{\text{SB}}}
\end{align}
with
\begin{align}
    U_{\text{SB}}(t+s) = U_{\text{SB}}(t)U_{\text{SB}}(s)\,.
\end{align}
Notice that this property fails for time-dependent Hamiltonians. That is, in the language of the UDW framework the spin-boson Hamiltonian \eqref{eq: spin-boson-hamiltonian} has an interaction that is \textit{switched on at all times}, and this does not fit the standard UDW framework where the interaction to be switched on and off (preferably, adiabatically) by suitable choice of the switching function $\chi(\tau)$. In other words, the essential distinction between the spin-boson Hamiltonian and the UDW Hamiltonian is the  \textit{temporal localization} of the detector-field interaction. Observe that in the standard UDW framework we typically want to make a case for observers carrying detectors being localized \textit{in spacetime}, i.e., the spacetime smearing function $f$ is effectively compactly supported
\footnote{Gaussians can be regarded as compactly supported for all practical purposes if the tails are neglected for a given precision.}. 

In the case where we can factorize the spacetime smearing into switching function and spatial profile $f=\chi(\tau) F(\bar\bx)$, we simply get $\supp f = \supp(\chi)\times\supp(F)$. Note that if the spatial profile $F(\Bar{\bx})$ is strongly localized (e.g., compactly supported or exponentially decaying in $\bar{\bx}$), it would provide a natural UV cutoff for the detector model through the coupling function $g_\bk$ with strong Fourier decay. If one wishes to consider pointlike model, this is equivalent to imposing an ``experimental'' UV cutoff. In contrast, the spin-boson Hamiltonian amounts to setting the switching function $\chi(\tau) = 1$ for all $\tau\in \R$, and therefore $\supp(f) = \R\times \supp(F)$, which is at best only compactly supported along the spacelike direction. 

{From the physical point of view, the discrepancy between the standard UDW paradigm and the spin-boson (hereafter SB) paradigm has to do with what problems we are modeling and how we view the total system. On the one hand, the standard UDW paradigm views the problem as one where the detector and the field are initially independent subsystems, and we ask what happens to the detector and/or the field if we \textit{externally} switch on and off the interaction for a given choice of interaction Hamiltonian. In particular, this is a ``semiclassical'' model where there is an external classical control that modulates the interaction. On the other hand, the SB model (and also the van Hove model) views the detector and the field as a \textit{closed system}, hence by construction the joint system evolves with a time-independent Hamiltonian. Thus as a joint system there is no freedom to switch off the interaction, in the same way that in quantum electrodynamics we cannot demand that an electron does not respond to an electromagnetic field. } 

The difference between the two paradigms suggests that we view the notion of local observers as being distinct from having localized interactions between the detector and the field. For example, suppose that in the SB paradigm the interaction is localized along a worldtube $\mathsf{z}(U) \coloneqq \R\times U$ where $U$ is some compact spacelike subset and the COM trajectory $\sz(\tau)\subset \mathsf{z}(U)$. An observer that only has access to distant region $R$ can only measure observables supported in $R$. Even though the interaction lasts forever in this paradigm, the effect of the interaction decays very quickly with the geodesic distance between any point in $R$, so for all practical purposes the interaction is spatiotemporally localized in time. In the UDW paradigm, the temporal localization of the interaction means that it is possible for the interaction region $\supp(f)$ to be completely causally disconnected from $R$, but this requires a physical scenario where one would like to insist on being able to isolate, say, an atom from any electromagnetic environment. In this sense, the SB paradigm is closer to the way one views interaction in the Standard Model of particle physics where one is not free to switch off fundamental forces of nature\footnote{{In the Standard Model formalism, the Lagrangian that by construction the interaction is never switched off \textit{anywhere} in spacetime, while in the SB model it is at least localized along the spacelike direction. There the \textit{cluster decomposition theorem} guarantees that physical processes between two sufficiently well-separated regions cannot influence one another \cite{weinberg1995quantum}, so the effect of interactions are effectively localized spatiotemporally.}}.

What about localization of the observers? If we adopt the SB paradigm, it is necessary that a local observer is viewed as as an agent who can only access a restricted set of field observables through the field. Recall that the interaction occurs along the timelike worldtube $\sz(U)\cong \R\times U$ where $U$ is the spatial localization of the detector around the COM trajectory $\sz(\tau)$. A local observer having access to spacetime region $R$ can only access field observables $\hat{\phi}(f)$ where $\supp(f)\subset R$, and \textit{a priori} $R$ does not need to intersect or coincide with $\sz(U)$. Indeed, if the observer moves along some \textit{finite-time} trajectory $\gamma(\tau)$ with $\tau\in [\tau_0,\tau_1]\subset \R$, then the accessible algebra $\A(\gamma(\tau))$ coincides with the algebra of the \textit{timelike envelope}\footnote{The timelike envelope of a timelike curve $\gamma$ between two points is the set points that can be reached by smooth deformations of $\gamma$, keeping the endpoints fixed, such that the deformed curves are timelike, see Figure~\ref{fig: envelope}.} $\A(\mathcal{E}(\gamma))$ \cite{witten2023algebras}. Note that even if $\gamma(\tau)\subset \sz(\tau)$, the timelike envelope $\mathcal{E}(\gamma)$ will not be contained in $\sz(U)$ unless the two endpoints are very close to one another. If $R$ is sufficiently far away from $\sz(U)$ then the expectation values of the field observables in $R$ can be well-approximated by the values when the interaction is not present: for example, one can check that the ground state expectation value of the field sector is well-approximated by the vacuum expectation value of the free theory if $R$ is spatially well-separated from $\sz(U)$.

\begin{figure}[tp]
    \centering
    \includegraphics[scale=0.5]{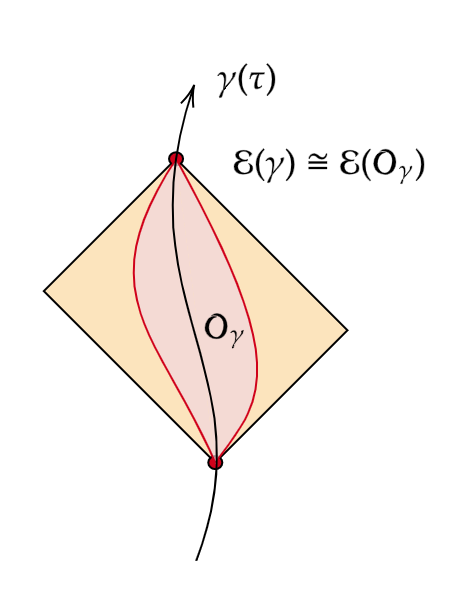}
    \caption{The timelike tube and envelope along a timelike trajectory $\gamma(\tau)$ with endpoints given by the red dots. {Here $\mathcal{O}_\gamma$ is the set of timelike trajectories with the same fixed endpoints that can be obtained by smooth deformation of $\gamma$. The timelike envelope of the curve $\mathcal{E}(\gamma)$ is the set of all points reachable by such a deformation.} Note that in general, even if $\gamma(\tau)$ coincides with $\mathsf{z}(\tau)$ in the sense that $\gamma(\tau)\subset \sz(\tau)$, the timelike envelope $\mathcal{E}(\gamma)$ will not be contained in the support of the UDW interaction $\R\times U$, thus the algebra of observables associated with the observer generically does not coincide with the support of the UDW interaction profile.
    }
    \label{fig: envelope}
\end{figure}

The temporal localization problem suggests that we should not associate the interaction region $\supp(\chi(\tau)F(\bx))$ of the UDW model with the localization of the observers since the latter is specified by the algebra of observables $\mathcal{A}(\gamma(\tau))$. This allows us to consider the UDW model in the SB paradigm where the interaction is always active ($\chi(\tau)=1$ for all $\tau\in\R$) without losing the notion of local observers who only have access to a restricted set of observables, namely those supported in the timelike envelope $\mathcal{E}(\gamma)$. This brings us closer to the standard quantum-optical formulation of a two-level system coupled to a bosonic field, and crucially it makes sense to view the joint system as being essentially a closed system. In what follows we will thus view the UDW model in the SB paradigm where the joint detector-field Hamiltonian is given by a \textit{time-independent} Hamiltonian analogous to the spin-boson model \eqref{eq: spin-boson-hamiltonian}, i.e.,  
\begin{align}
     \hat{H}_{\text{UDW}} &= \int\dd^n\bk\,\omega_\bk^{\phantom{\dagger}}\hat{a}_\bk^\dagger\hat{a}_\bk^{\phantom{\dagger}} + \Omega\hat{\sigma}_z+\Delta\hat{\sigma}_x + \lambda \hat{\sigma}_x\otimes \int\dd^n\bk\,{F}_\bk^{{*}}\hat{a}_\bk^{\phantom{\dagger}}+{F}_\bk^{\phantom{*}}\hat{a}_\bk^\dagger\,,
    \label{eq: full-UDW-hamiltonian-new}
\end{align}
where the difference with the SB model is that instead of considering an \textit{ad hoc} coupling function  $g_\bk\in L^2(\R^n)$, the UDW coupling function is obtained from a ``generalized Fourier transform'' of the spatial smearing function:
\begin{align}
    {F}_\bk\coloneqq \int_{\Sigma_{0}}\dd^n\bx\sqrt{h_0}\,F(\bx) u_\bk^*(0,\bx)
\end{align}
where $h_0$ is the induced metric at $t=0$ Cauchy slice $\Sigma_0$. In what follows we will call the time-independent UDW Hamiltonian as the UDW Hamiltonian, and the usual time-independent case as the `standard' UDW Hamiltonian. 

\subsection{Existence of a ground state and the UV/IR behaviour of the UDW model}

For the standard UDW Hamiltonian \eqref{eq: full-UDW-Hamiltonian}, the notion of ground state cannot generically be defined for all times since the Hamiltonian is time-dependent. Even in situations where the interaction is adiabatically switched on and off, the construction of the vacuum state only makes sense at the asymptotic past or future. In such a setup, the validity of the standard UDW model has to be viewed in the language of scattering theory. In this paper we will not pursue this route since we are not interested in scattering processes. 

Moreover, there are also subtleties involved in scattering theory for massless fields associated with the notion of asymptotic completeness, where this subject has a long and complicated history (see, e.g., \cite{derezinski1999asymptotic,derezinski2004scattering,DeRoeck2015SBscatter,dybalski2019scattering}). {In fact, adiabatic interactions in QFT do not remove the problem because the well-known adiabatic theorem \cite{barrysanders2004adiabatic} requires that the appropriate $S$-matrix is well-defined, but this is not guaranteed when IR divergences occur, even if we disregard the subtlety in correctly applying adiabatic-theorem type of arguments \cite{barrysanders2004adiabatic}. In any case, the standard UDW paradigm was not primarily meant to study scattering physics, so in principle the SB paradigm is closer to the conventional way the standard UDW model is used.}

In contrast, the (time-independent) UDW Hamiltonian \eqref{eq: full-UDW-hamiltonian-new} allows us to define a joint ground state that also respects the time-translation symmetry of the full Hamiltonian. If the UDW model is to be valid as a physical model, it must be the case that the Hamiltonian \eqref{eq: full-UDW-hamiltonian-new} describing the model has a stable ground state. This question is also relevant for the standard UDW model because if the UDW Hamiltonian \eqref{eq: full-UDW-hamiltonian-new} does not admit a ground state, we cannot expect the standard UDW Hamiltonian \eqref{eq: full-UDW-Hamiltonian} to admit an \textit{instantaneous} ground state either, since both expressions agree at $t=0$. {Consequently, while the validity of the (time-independent) UDW model does not rigorously guarantee the validity of the standard UDW model, the non-existence of the ground state in the UDW model immediately implies that the standard UDW model cannot be a valid physical model.}

To answer this question, first let us denote the UDW Hamiltonian \eqref{eq: full-UDW-hamiltonian-new} by $\hat{H}_{\Omega,\Delta,\lambda}$ to make the parameter dependence explicit, and set $\Omega=0$. 
\begin{proposition}
    \label{proposition: unitary-equivalence}
    The expectation value of the gapless qubit Hamiltonian $\hat{H}_{0,\Delta,\lambda}$ has a lower bound given by 
    \begin{align}
        \braket{\hat{H}_{0,\Delta,\lambda}} \geq -|\Delta| - R_1(\lambda F_\bk,n)\,.
    \end{align}
    In particular, the Hamiltonian is bounded from below if $R_1(\lambda F_\bk,n)$ is finite. If $\hat{H}_{0,\Delta,\lambda}$ admits a ground state $\ket{\psi_0}$ then the ground state energy is precisely
    \begin{align}
        E_0 =  - R_1(\lambda F_\bk,n) -|\Delta|\,.
    \end{align}
\end{proposition}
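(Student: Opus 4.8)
The strategy is to exploit the fact that setting $\Omega=0$ removes the only detector operator that fails to commute with the interaction: with $\hat\sigma_z$ gone, the surviving detector term $\Delta\hat\sigma_x$ commutes both with the free field Hamiltonian $\hat H_0$ and with $\lambda\hat\sigma_x\otimes\int\dd^n\bk\,(F_\bk^*\hat a_\bk+F_\bk\hat a_\bk^\dagger)$. Hence $\hat\sigma_x$ is conserved, and $\hat H_{0,\Delta,\lambda}$ decomposes over its two eigenspaces $\hat\sigma_x=\pm1$, on each of which it is a van Hove Hamiltonian of the first type ($C=0$) sourced by $z_\bk=\pm\lambda F_\bk$ and shifted by the constant $\pm\Delta$. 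I would then (i) complete the square to read off the lower bound, and (ii) compute $\inf\mathrm{spec}(\hat H_{0,\Delta,\lambda})$ exactly by supplementing that bound with an explicit variational family; the latter immediately fixes the energy of any ground state that happens to exist.

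For (i), using $[\hat\sigma_x,\hat a_\bk]=[\hat\sigma_x,\hat a_\bk^\dagger]=0$ and $\hat\sigma_x^2=\openone$, one checks the operator identity
\begin{align}
    \hat H_{0,\Delta,\lambda}=\int\dd^n\bk\,\omega_\bk\,\hat c_\bk^\dagger\hat c_\bk\;-\;R_1(\lambda F_\bk,n)\,\openone\;+\;\Delta\hat\sigma_x\,,\qquad \hat c_\bk\coloneqq\hat a_\bk+\frac{\lambda\,\hat\sigma_x\,F_\bk}{\omega_\bk}\,,
\end{align}
where $\hat c_\bk^\dagger$ is the genuine adjoint of $\hat c_\bk$ (the $c$-number term is finite exactly when $R_1(\lambda F_\bk,n)<\infty$; otherwise the asserted bound is vacuous). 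Since $\omega_\bk>0$, the first term has nonnegative expectation in any vector of its form domain, and $\Delta\hat\sigma_x\geq-|\Delta|\openone$ because $\hat\sigma_x$ has spectrum $\{\pm1\}$; hence for any normalized $\ket\Psi$, $\braket{\hat H_{0,\Delta,\lambda}}=\int\dd^n\bk\,\omega_\bk\|\hat c_\bk\Psi\|^2-R_1(\lambda F_\bk,n)+\Delta\braket{\hat\sigma_x}\geq-R_1(\lambda F_\bk,n)-|\Delta|$, which is the claimed bound, with boundedness from below for $R_1(\lambda F_\bk,n)<\infty$ an immediate consequence.

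For (ii), a ground state realizes $\inf\mathrm{spec}(\hat H_{0,\Delta,\lambda})$ by definition, so it is enough to show $\inf\mathrm{spec}(\hat H_{0,\Delta,\lambda})=-R_1(\lambda F_\bk,n)-|\Delta|$; part (i) supplies one inequality. For the reverse I would take $s_0=-\mathrm{sgn}(\Delta)$ (so $s_0\Delta=-|\Delta|$) and work in the $\hat\sigma_x=s_0$ eigenspace, where $\hat c_\bk=\hat a_\bk+s_0\lambda F_\bk/\omega_\bk$ is a $c$-number displacement of $\hat a_\bk$. Setting $B_\epsilon\coloneqq\{\bk:\omega_\bk\leq\epsilon\}$, let $\ket{\Psi_\epsilon}$ be the $\hat\sigma_x=s_0$ detector eigenvector tensored with the field coherent state whose $\hat a_\bk$-amplitude is $-s_0\lambda F_\bk/\omega_\bk$ for $\bk\in B_\epsilon^c$ and $0$ for $\bk\in B_\epsilon$. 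This is a genuine normalized Fock vector because $\int_{B_\epsilon^c}\dd^n\bk\,|\lambda F_\bk|^2/\omega_\bk^2\leq\epsilon^{-1}R_1(\lambda F_\bk,n)<\infty$; moreover $\hat c_\bk$ annihilates $\ket{\Psi_\epsilon}$ for $\bk\in B_\epsilon^c$ while $\|\hat c_\bk\Psi_\epsilon\|^2=\lambda^2|F_\bk|^2/\omega_\bk^2$ for $\bk\in B_\epsilon$, so the completed-square form gives $\braket{\hat H_{0,\Delta,\lambda}}_{\Psi_\epsilon}=R^{B_\epsilon}_1(\lambda F_\bk,n)-R_1(\lambda F_\bk,n)-|\Delta|$. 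By dominated convergence (dominant $|\lambda F_\bk|^2/\omega_\bk\in L^1$ since $R_1(\lambda F_\bk,n)<\infty$), $R^{B_\epsilon}_1(\lambda F_\bk,n)\to0$ as $\epsilon\to0$, so $\inf\mathrm{spec}(\hat H_{0,\Delta,\lambda})\leq-R_1(\lambda F_\bk,n)-|\Delta|$; together with (i) this yields equality, and hence any ground state has precisely the energy $E_0=-R_1(\lambda F_\bk,n)-|\Delta|$.

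The calculation is essentially bookkeeping; the one slightly non-routine ingredient is the infrared truncation $B_\epsilon$ in step (ii), forced on us because the naive displaced vacuum (the formal $\epsilon=0$ state) is normalizable only under the stronger condition $R_2(\lambda F_\bk,n)<\infty$. Beyond that, the only genuine care needed is in the rigour of the unbounded-operator steps — justifying the interchange of the $\bk$-integral with operator products, checking that $\int\dd^n\bk\,\omega_\bk\hat c_\bk^\dagger\hat c_\bk$ is a well-defined positive self-adjoint operator (a displaced second-quantized multiplication operator), and verifying normalizability of the truncated coherent states — all of which is controlled by Assumption~\ref{assumption: sensible} (making the interaction a bona fide smeared field operator) together with $R_1(\lambda F_\bk,n)<\infty$. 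Note that the argument never needs to settle whether $R_2(\lambda F_\bk,n)<\infty$, i.e.\ whether a ground-state \emph{vector} actually exists in the Fock representation — when it does, the displacement $\hat c_\bk\mapsto\hat a_\bk$ is implemented by a Weyl unitary and the ground state is simply the displaced vacuum — which is the sharper question addressed separately afterwards.
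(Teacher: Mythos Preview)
Your proof is correct and follows the same core strategy as the paper: block-diagonalize over the $\hat\sigma_x$ eigenspaces and complete the square in each block to expose the van Hove structure. The paper writes this out explicitly as $\hat H_{0,\Delta,\lambda}=\ketbra{+}{+}\otimes\hat H_++\ketbra{-}{-}\otimes\hat H_-$ with displaced oscillators $\hat b_{\bk,\pm}=\hat a_\bk\pm\lambda F_\bk/\omega_\bk$, whereas you package both blocks at once via $\hat c_\bk=\hat a_\bk+\lambda\hat\sigma_x F_\bk/\omega_\bk$; these are the same decomposition.

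The one substantive difference is in part (ii). The paper simply says that \emph{if} the dressing unitary $\hat U_\pm=\hat D(\pm\alpha_\bk)$ exists, then the ground state is the displaced vacuum $\ket{\pm}\otimes\ket{\mp\alpha}$ and reads off $E_0$ from there; but the existence of $\hat U_\pm$ (equivalently, normalizability of $\ket{\mp\alpha}$) is precisely the condition $R_2(\lambda F_\bk,n)<\infty$, which the proposition does not assume. Your IR-truncated variational family $\ket{\Psi_\epsilon}$ sidesteps this by producing genuine Fock vectors whose energies approach $-R_1-|\Delta|$ using only $R_1<\infty$, thereby establishing $\inf\mathrm{spec}(\hat H_{0,\Delta,\lambda})=-R_1-|\Delta|$ unconditionally. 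This is a cleaner match to the hypothesis ``if $\hat H_{0,\Delta,\lambda}$ admits a ground state'' in the proposition, since it pins down the ground-state energy without presupposing that the ground state lives in the vacuum Fock sector --- exactly the distinction the paper goes on to emphasize in the discussion following the proposition.
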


\begin{proof}
    Let $\ket{\pm}$ be the eigenbasis of $\hat{\sigma}_x$ with eigenvalue $\pm$ 1. Then we can write
    \begin{align}
        \hat{H}_{0,\Delta,\lambda} &= \ketbra{+}{+}\otimes \hat{H}_{+}+\ketbra{-}{-}\otimes \hat{H}_-\,,
    \end{align}
    with the field sector given by
    \begin{align}
        \hat{H}_\pm &= \hat{h}_0 \pm \rr{ \hat{a}(\lambda F^*_\bk) + \hat{a}^\dagger(\lambda F_\bk)} \pm \Delta\openone \notag\\
        &= \int\dd^n\bk \,\omega_\bk \hat{b}_{\bk,\pm}^\dagger\hat{b}_{\bk,\pm}^{\phantom{\dagger}} - \rr{\int\dd^n\bk\,\frac{\lambda^2|F_\bk|^2}{\omega_\bk}\mp\Delta} \openone \notag\\
        &\equiv \hat{h}_{0,\pm} - (\rr{R_1(\lambda F_\bk,n)\mp \Delta}\openone \,,
    \end{align}
    where the ladder operators are obtained by displacement:
    \begin{align}
        \hat{b}_{\bk,\pm} &= \hat{a}_{\bk} \pm \frac{\lambda F_\bk}{\omega_\bk}\openone \equiv \hat{D}(\pm \alpha_\bk)^\dagger\hat{a}_\bk\hat{D}(\pm \alpha_\bk)\,,
    \end{align}
    with $\alpha_\bk\coloneqq \lambda F_\bk/\omega_\bk$. Note that $\hat{h}_{0,\pm}$ is nothing but the van Hove Hamiltonian of the second type, which is unitarily equivalent to the standard free quadratic Hamiltonian provided the \textit{dressing unitary operator} $\hat{U}_\pm\coloneqq \hat{D}(\pm\alpha_\bk)$ exists:
    \begin{align}
        \hat{h}_{0,\pm} &= \hat{U}_\pm^\dagger\hat{h}_0\hat{U}_\pm\,,\quad  \hat{U}_\pm = e^{\pm\rr{\hat{a}^\dagger(\alpha_\bk)-\hat{a}(\alpha_\bk^*)}}\,.
    \end{align}
    If $\hat{U}_\pm$ exists, then the ground state of $\hat{h}_{0,\pm}$ is given by the coherent state $\ket{\mp \alpha}\equiv \hat{U}^\dagger_\pm\ket{0}$, where $\ket{0}$ is the Fock vacuum of the free Hamiltonian $\hat{h}_0$. The Hamiltonian $\hat{H}_{0,\Delta,\lambda}$ has lower bound attained by $\ket{\psi_0^\pm}\coloneqq \ket{\pm}\otimes\ket{\mp\alpha}$ depending on the sign of $\Delta$: if $\Delta>0$, then the ground state is $\ket{\psi^{-}_0}$, and hence
    \begin{align}
        \braket{\psi_0^-|\hat{H}_{0,\Delta,\lambda}|\psi_0^-} &= \braket{\alpha|\hat{h}_{0,-}|\alpha} - (R_1(\lambda F_\bk,n)+\Delta) \notag\\
        &=  -R_1(\lambda F_\bk,n) - \Delta
    \end{align}
    and similarly for $\Delta<0$ which corresponds to the ground state $\ket{\psi_0^+}$. Regardless of the sign of $\Delta$, the ground state has lowest energy
    \begin{align}
        E_0 = -R_1(\lambda F_\bk,n)-|\Delta|\,.
    \end{align}
    $E_0$ is finite if and only if $R_1(\lambda F_\bk,n)$ is finite. 
    
\end{proof}

Let us make some remarks about the Hamiltonian in Proposition~\eqref{proposition: unitary-equivalence}. First, if the detuning parameter or bias is removed $(\Delta=0)$ then the joint system has two-fold degenerate ground states. This follows since $\hat{H}_\pm$ now has the same ground state energy $E_0=-R_1(\lambda F_\bk,n)$ for both $\ket{\psi^\pm_0}=\ket{\pm}\otimes\ket{\mp\alpha}$. Second, if $E_0$ is finite then the inclusion of finite energy gap of the detector will only decrease the lower bound by at most $-\Omega$, so the Hamiltonian is still bounded below. However, as we have seen this depends essentially on whether $R_1(\lambda F_\bk,n)$ is finite. Proposition~\ref{proposition: unitary-equivalence} shows that if $R_1(\lambda F_\bk,n)$ is finite, then the UDW Hamiltonian with $\Omega=0$ is bounded from below with lowest eigenvalue $E_0$. 

{Finally, the theorem by itself does \textit{not} guarantee that the ground state lives in the tensor product of the \textit{non-interacting Hilbert space} $\C^2\otimes \fock$, where $\fock$ is the vacuum representation of the free field theory. It only says that \textit{if} it does live in $\C^2\otimes \fock$ then it takes the form $\ket{\pm}\otimes \ket{\mp \alpha}$ depending on the sign of the detuning parameter $\Delta$.} 

Proposition~\ref{proposition: unitary-equivalence} demands that we impose a constraint on the choice of spatial profile $F(\bx)$:
\begin{assumption}[Finiteness of ground state energy]
    \label{assumption: finite-energy}
    We assume that the spatial smearing function $F$ satisfies
    \begin{align}
        R_1(\lambda F_\bk,n) \equiv \int\dd^n\bk\frac{\lambda^2|F_\bk|^2}{\omega_\bk} <\infty\,.
    \end{align}
\end{assumption}
\noindent Unlike the SB model where this assumption would be viewed as a mathematical constraint on both the UV and the IR behaviour of the coupling function, in the context of the UDW model  Assumption~\ref{assumption: finite-energy} is mainly IR in nature. If we demand that the spatial profile $F$ is (effectively) compactly supported, then we expect $F_\bk$ to decay superpolynomially at large $|\bk|$. For hydrogenlike atoms in $(3+1)$-dimensional spacetime, $R_1$ will indeed be finite since at large $|\bk|$ the integrand of $R_1$ decays at least as fast as $|\bk|^{-2}$. Thus the finiteness of $R_1$ is mainly about what happens near $|\bk|=0$. Indeed, for massive scalar field the integral is always finite in the IR unless the spatial profile is chosen to exhibit additional IR singularity (e.g., by choosing $F$ such that $F_\bk$ contains some powers of $|\bk|^{-1}$). 

To make things more concrete in the massless case where $\omega_\bk=|\bk|$, let us consider the flat spacetime $\mathcal{M} = \R^{1,n}$. The spatial part has $F_\bk\sim |\bk|^{-1/2}\tilde{F}(\bk)$ where $\tilde{F}(\bk)$ is the Fourier transform of the spatial profile. For $R_1(\lambda F_\bk,n)$ to be IR-finite it is necessary that for some $\omega_0>0$ we have
\begin{align}
    \int_0^{\omega_0} \dd |\bk|\,\frac{|\tilde{F}(\bk)|^2}{|\bk|^{{3-n}}}<\infty\,.
    \label{eq: mild-IR-smearing}
\end{align}
In the pointlike limit (which does not modify the IR behaviour) this integral will be IR-finite for {all $n\geq 3$ but is IR-divergent for $n=1,2$}. In other words, a good UV behaviour alone is not sufficient to make the UDW model well-defined since the Hamiltonian can be unbounded from below (e.g., if we choose the spatial smearing to be Gaussian): one has to also ensure that the interaction profile regulates the IR. {In particular, for $n=3$ the UDW model with the commonly used choice of non-negative spatial profile (Gaussian, Lorentzian), $R_1$ is finite and hence the full Hamiltonian is bounded from below; in contrast, for compactly supported smearing functions (which have an additional $|\bk|^{-\alpha}$ in the Fourier space for some $0<\alpha<1$) is not guaranteed to be IR-finite.}

There are two natural ways to ensure that $R_1$ is IR-finite for the UDW model:
\begin{enumerate}[label=(\alph*),leftmargin=*]
    \item We can demand that the spatial smearing directly regulates the IR: this can be done by simply demanding that $F(\bx)$ has Fourier transform of the form
    \begin{align}
        \Tilde{F}(\bk)\sim |\bk|^a g(\bk)\,, \quad a\geq \frac{3-n}{2}\,,
    \end{align}
    where $\lim_{|\bk|\to 0}g(\bk)$ is finite. {This would imply that the spatial profile $F$ \textit{must have negative component} whenever $a>0$}. In itself this is not surprising since for the hydrogenlike atoms the corresponding spatial profile has the form $\bm{F}_{ij}(\bx)\sim \psi_i^*(\bx){\bx}\psi_j(\bx)$ associated with two energy levels $i,j$ that can have negative components (see, e.g.,  \cite{pozas2016entanglement}).

    \item We can adopt an ``experimental'' IR cutoff and regulate the IR using hard cut-off or through small mass parameter $m$ (effectively considering massive fields). IR cutoffs are natural given that in any realistic experiment the setup is always confined somewhere and any measurement devices involved in the experiment has an IR cutoff scale\footnote{Soft bosons are physically relevant in different settings. For instance, the seminal paper by Low \cite{low1958soft} suggests that soft photon radiation occurs in high-energy collisions and it turns out that there are still large discrepancy between the theoretical predictions and experimental results (see, e.g., \cite{wong2014overview}). Another example is in the study of Bose-Einstein condensation \cite{derezinski2004scattering}.}. Note that if we impose a hard IR cutoff directly, then the UDW model is automatically IR-finite.
\end{enumerate}
In what follows we focus on (a) since the standard UDW model does not automatically assume the existence of an IR cutoff. That is, we would like to know under what circumstances the spatial profile of the interaction leads to a well-defined UDW model. 

The important insight that can be extracted from the SB literature (see \cite{fannes1988equilibrium,amann1991spinboson,spohn1989spinboson}) is that even if $R_1(\lambda F_\bk,n)$ is finite, the ground state of the UDW model is not necessarily a state in the standard Fock space of the free theory. This occurs, in particular, if $R_2(\lambda F_\bk,n)=\infty$.
As we saw earlier from the van Hove Hamiltonian in Section~\ref{sec: IR-divergences}, physically the divergence of $R_2(\lambda F_\bk,n)$ means that the ground state must contain infinitely many IR bosons ({soft photons} in the context of quantum electrodynamics). A state that has finite energy but infinitely many bosons cannot lie in the vacuum sector of the Fock representation since by construction the standard all states in the standard Fock space must have finite particle number\footnote{{There are other Fock representations that are not unitarily equivalent to the vacuum Fock representation (see \ref{appendix: araki-woods})}}. 

Consequently, having infinitely many soft bosons implies that {the interacting Hilbert space is not unitarily equivalent to $\C^2\otimes\fock$}: the two Hilbert space representations are disjoint.  One way to see this is to consider what happens if we ask for the probability of finding a finite  number of bosons with frequency up to some $\omega_0$, i.e., in a ball of radius $\omega_0$ centred at the origin of the momentum space $\R^n$. The probability has a Poisson distribution \cite{spohn1989spinboson}
\begin{align}
    \Pr(n(\Lambda)=N)=\frac{1}{N!}(\lambda^2R^\Lambda_2(\lambda F_\bk,n))^N e^{-\lambda^2 R^\Lambda_2(\lambda F_\bk,n)}
\end{align}
where $\Lambda = \{\bk\in \R^n:|\bk|<\omega_0\}$. Assuming good UV behaviour, we can extend $\Lambda\to \R^n$ and hence the probability distribution is finite if $R_2(\lambda F_\bk,n) <\infty$. 

Following the same argument that goes into Assumption~\ref{assumption: finite-energy} shows that $R_2(\lambda F_\bk,n)$ is IR-finite if
\begin{align}
    \int_0^{\omega_0}\dd|\bk|\,\frac{|\tilde{F}(\bk)|^2}{|\bk|^{{4-n}}}< \infty
    \label{eq: harsh-IR-smearing}
\end{align}
In other words, if we would like the Fock space of the free theory to contain states from the interacting Hilbert space, we need stronger IR regularity such as
\begin{align}
    \tilde{F}(\bk)\sim |\bk|^{a}g(\bk)\,, \quad a\geq \frac{4-n}{2}
    \label{eq: IR-finite-smearing}
\end{align}
for some function $g$ that  is finite in the IR. For example, again this automatically implies that for massless scalar fields, the spatial smearing of the UDW model \textit{cannot be a simple positive function on each Cauchy slice}: any $F$ whose Fourier transform is of the form \eqref{eq: IR-finite-smearing} must have negative component using the Fourier transform of $|\bk|^{a}$ and the convolution theorem. These results are stable in the sense that adding some finite energy gap will not remove the IR divergence.

From a more practical point of view, the above analysis suggests that for the specific case of massless field coupled to a qubit detector, one either has to use an ``experimental'' IR cutoff or sufficiently IR-regular spatial profile for the detector for the model to admit a ground state that lives in the standard Fock space $\C^2\otimes\fock$. The analysis of the van Hove model adapted to the SB or UDW model essentially shows that the Hilbert space representation for the interacting system is unitarily equivalent to the Hilbert space representation of the free theory (on which the interaction picture is based on) if and only if $R_2(\lambda F_\bk,n)$ is finite.  Physically it is the statement that the interacting Hilbert space is only unitarily equivalent to the free one if the ground state has finite energy \textit{and} particle number. Note that this sort of phenomenon is not surprising at all, since it is known that there exists many unitarily inequivalent representations of the CCR algebra in quantum statistical mechanics \cite{bratteli2002operatorv2} and this forms the basis for Unruh and Hawking effects.

\section{Revisiting the UDW model}
\label{sec: consequences}

In this section we apply the analysis we described earlier to revisit the phenomenology of the UDW model. The operator-algebraic framework is particularly suitable since it allows us to deal with the existence of unitarily inequivalent representations of the CCR algebra, and in particular what to do if we are in a situation where the free Fock space is not the correct interacting Hilbert space. We follow closely the exposition of \cite{spohn1989spinboson,fannes1988equilibrium,amann1991spinboson} with suitable modifications.

\subsection{Operator-algebraic framework for the UDW model}

First let us build the joint algebra for the UDW model. First, recall that by writing $\hat{\phi}(f)\equiv \hat{\phi}(f_\bk)$, i.e., by labeling operators through its $\bk$-space test functions, we get an equivalent $\bk$-space Weyl algebra $\tilde{\W}(\R^n)$ with Weyl relations
\begin{align}
    W(f_\bk)W(g_\bk) = W(f_\bk+g_\bk)e^{-\ii \Im(f_\bk,g_\bk)_\mathcal{H} }\,.
    \label{eq: k-space Weyl}
\end{align}
The inner product on the positive-frequency Hilbert space $\mathcal{H}$ is simply the re-writing of the original inner product $\braket{\cdot,\cdot}_\textsf{KG}$: using the notation \eqref{eq: KEf-notation}, we see that
\begin{align}
    \frac{1}{2}E(f,g) &= \Im\,\mathsf{W}(f,h) \equiv \Im\braket{KEf,KEg}_{\textsf{KG}}\notag\\
    &\equiv \Im\int\dd^n\bk\int\dd^n\bk' f_\bk^* g_{\bk'}^{\phantom{*}} \braket{u_{\bk}|u_{\bk'}} \notag\\
    &= \Im \int\dd^n\bk\, f_\bk^* g_{\bk}^{\phantom{*}} \equiv \Im(f_\bk,g_\bk)_{\mathcal{H}}\,.
    \label{eq: k-space-causal-propagator}
\end{align}
The last equality is nothing but the standard Hilbert-space inner product $L^2(\R^n)$, hence we identify
\begin{align}
    \mathsf{W}(f,g)\equiv \braket{KEf,KEg}_{\textsf{KG}} \equiv (f_\bk,g_\bk)_{\mathcal{H}}\,,
\end{align}
and regard $\mathcal{H}$ as a dense subspace of $L^2(\R^n)$. This relabeling of the Weyl algebra in terms of the $\bk$-space functions is useful in order to straightforwardly port the known results about spin-boson model.

Before we proceed, there is one important subtlety regarding the choice of spacetime smearing function $f$ and its $\bk$-space representation $f_\bk$. We advocated that the localization of the interaction should be distinguished from the localization of the observers and the observables that they can access. Consequently, even though the choice of coupling function $F_\bk$ in the UDW Hamiltonian \eqref{eq: full-UDW-hamiltonian-new} depends only on the spatial profile $F(\bx)$ and some restrictions are required to make the model well-defined (namely, Assumptions~\ref{assumption: sensible} and \ref{assumption: finite-energy}), this is independent of the choice of spacetime smearing functions for the field observables the observer is interested in measuring. That said, since the UDW Hamiltonian affects how the detector and the field observers evolve in time, we will see that this requires us to view $W(F_\bk)$ as an element of the $\bk$-space Weyl algebra $\tilde{W}(\R^n)$ even though $F(\bx)$ is not strictly speaking an element of $\CS$. The key that makes this work is that since the scalar field used to build the UDW model is a non-interacting field theory, spatially smeared field operators $\hat{\phi}(F)\equiv \hat{\phi}(F_\bk)$ with appropriate Fourier decay (such as compactly supported functions or strongly localized function such as Gaussian) are well-defined operators \cite{witten2023algebras}\footnote{This is one reason why some of the ``non-perturbative'' calculations in the UDW literature \cite{tjoa2022capacity,tjoa2022fermi,Simidzija2018no-go,Simidzija2020capacity,gallock-yoshimura2023tripartite} involving delta-coupled models have sensible results, as the Wightman functions remains finite even if the field operators are only smeared along the spatial directions.}.  

Given the CCR algebra $\tilde{\mathcal{W}}(\R^n)$ and the algebra of observables of a qubit which is a two-dimensional matrix algebra $M_2(\C)$, the joint algebra of observables is the $C^*$-algebra \cite{fannes1988equilibrium}
\begin{align}
    \mathfrak{B} \coloneqq M_2(\C)\otimes \tilde{\mathcal{W}}(\R^n)\,, 
\end{align}
which is unique as a $C^*$ tensor product. An arbitrary element $A$ of the joint algebra has the form
\begin{align}
    A= \begin{bmatrix}
        A_{11}& A_{12} \\ A_{21} & A_{22}
    \end{bmatrix}\,,\qquad A_{ij}\in \tilde{\mathcal{W}}(\R^n)\,.
\end{align}
An algebraic state on the joint algebra is the positive linear functional $\omega: \mathfrak{B} \to \C$ that has the form
\begin{align}
    \omega = \begin{bmatrix}
        \omega_{11} & \omega_{12} \\ \omega_{21} & \omega_{22}
    \end{bmatrix}\,,
\end{align}
where each $\omega_{ij}$ is a state for $\Wk$ and
\begin{align}
    \omega(A) = \sum_{i,j=1}^{2} \omega_{ij}(A_{ij})
\end{align}
and satisfies positivity and normalization conditions for all $X,Y\in \Wk$:
\begin{subequations}
    \begin{align}
        \sum_{j} \omega_{jj}(\openone) = 1\,,\qquad \omega_{jj}(X^\dagger X)\geq 0\,,\\
        |\omega_{12}(X^\dagger Y)| \leq \omega_{11}(X^\dagger X)\omega_{22}(Y^\dagger Y)\,,\\
        \omega_{12}(X) = (\omega_{21}(X^\dagger))^*\,.
    \end{align}
\end{subequations}
Note that in this language, the existence of degenerate ground state of $\hat{H}_{0,0,\lambda}$ can be seen from a purely symmetry argument: the total Hamiltonian has $\mathbb{Z}_2$-symmetry generated by the automorphism $\tau \in \text{Aut}(\mathfrak{B})$ such that $\tau^2=\openone_{\mathfrak{B}}$ and 
\begin{align}
    \tau(\hat{\sigma}_x) = -\hat{\sigma}_x\,,\quad \tau(W(F_\bk)) = W(-F_\bk)\,,
\end{align}
the latter formally equivalent to $\tau(\hat{a}_\bk) = -\hat{a}_\bk$ \cite{spohn1989spinboson,fannes1988equilibrium}.

{Let us make several comments on some conditions we need to impose on our states. As we saw in Section~\ref{sec: setup}, at the very least we would like our state $\omega$ to be a regular state, since only then it makes sense to speak of the generators of Weyl algebra and write formally $W(Ef)\equiv e^{\ii\hat{\phi}(f)}$, and furthermore in such a representation the map $t\mapsto W(tEf)$ will be continuous and $\{W(tEf):t\in \R\}$ define a strongly continuous one-parameter family of unitaries associated with the regular state \cite{bratteli2002operatorv2}. Indeed, the vacuum representation defines a regular state. However, for several technical reasons we need additional requirements, though some are naturally afforded by quasifree states. We also need to impose some requirements coming from the background geometry itself. We briefly mention these below. 

Starting from the relativistic consideration, we need at least two additional conditions. First, we saw earlier that in relativistic QFT, a physical requirement is that the state is a Hadamard state \cite{KayWald1991theorems,Radzikowski1996microlocal,wald1994quantum}, which enforces the correct short-distance singularity behaviour of the correlation functions and the finiteness of the renormalized stress-energy density. Second, we need to assume that there exists no zero mode \cite{KayWald1991theorems}. In the context of massless scalar field in a two-dimensional Einstein cylinder, it is known that there is a zero mode \cite{EMM2014zeromode,Tjoa2019zeromode} that admits no Fock representation, which leads to the ambiguity of the ground state or KMS state \cite{crawford2022lorentzian}. 

From the quantum-mechanical side, regular states are not sufficient to define creation and annihilation operators since their domains do not coincide with the field operators $\hat{\phi}(f)$. So additional conditions are needed if we wish to calculate expectation values of the form
\begin{align}
    \omega_{ij}(\hat{a}^{\#}(g_{1})...\hat{a}^{\#}(g_{n}))\,,
    \label{eq: ladder-expectation}
\end{align}
for all $g_{j}\in \mathcal{H}$, where $\hat a^\#$ refers to the creation or annihilation operators\footnote{Formally  each $\hat{a}(f_\bk)$ can be viewed as the complex-smeared field operator $\frac{1}{\sqrt{2}}\hat{\phi}(f_\bk+\ii \omega_\bk f_\bk)$.}. A simple solution to this issue is to consider \textit{analytic states}, namely those where \cite{bratteli2002operatorv2}
\begin{align}
    t\in\R\mapsto \omega_{ij}(W(tf_\bk))
\end{align}
{is analytic in $t$ and hence $\sigma$-weakly (and strongly) continuous} \cite{amann1991spinboson,spohn1989spinboson,fannes1988equilibrium}.

In particular, this implies that the expectation value $\omega(W(tf_\bk))$ can be computed using series expansions in $t^n\braket{\Omega_{\omega}|\pi_\omega(\hat{\phi}(f))^n|\Omega_\omega}$, hence the knowledge of $n$-point functions specify completely the properties of the analytic state. For analytic states the expectation value \eqref{eq: ladder-expectation} is well-defined. From analyticity we also obtain $\sigma$-weak convergence in expectation values \cite{amann1991spinboson}, i.e., 
\begin{align}
    \lim_{n\to\infty} \tr(\hat \rho \pi(W(f_{\bk,n}))) = \tr(\hat \rho W(f_{\bk}))
\end{align}
for any sequence $(f_{\bk,n})_{n=1}^\infty$ converging to $f_\bk$ and for all density operators $\hat \rho$ {associated with states in the folium of $\pi$}. 

We remark that quasifree states form a particularly nice class of analytic states and the GNS representation of quasifree states gives rise to a Hilbert space with the familiar Fock space structure \cite{fewster2019algebraic}. Apart from the vacuum representation which we are most familiar with, two other quasifree representations that are well-known and suitable for our discussion in this section are the \textit{thermal representation} and the \textit{Araki-Woods representation} (see \ref{appendix: araki-woods}).}

\subsection{Dynamics}
\label{sec: dynamics}

In the algebraic framework, time evolution can be described by specifying what is known as a \textit{dynamical system}. We follow the exposition in \cite{Bratteli1972afalgebra,pillet2006open}.

We say that we have a \textbf{$C^*$ dynamical system} if we are given a triple $(\mathfrak{B},G,\alpha)$ where $\mathfrak{B}$ is a $C^*$ algebra, $G$ is a locally compact group, and $\alpha\in \text{Aut}(\mathfrak{B})$ is an automorphism of $\mathfrak{B}$ that is also a \textit{strongly continuous} representation of $G$:
\begin{align}
    \alpha_e = \Id_{\mathfrak{B}}\,,\quad \alpha_g\alpha_h=\alpha_{gh}
\end{align}
for all $g,h\in G$, $e$ the identity element of $G$, and $\Id_{\mathfrak{B}}$ is an identity map on $\mathfrak{B}$. Note that the map $g\mapsto \alpha_g(A)$ is continuous for all $A\in\mathfrak{B}$. 

We say that we have a \textbf{$W^*$ dynamical system} if $\mathfrak{B}$ is a von Neumann algebra\footnote{Historically $W^*$ algebra refers to a von Neumann algebra.} and $\alpha$ is \textit{weakly continuous} representation of $G$. For our purposes, we will be interested in the case when $G=\R$, so that $\alpha_t$ implements time evolution of observables in $\mathfrak{B}$ and the automorphism is associated with time translations in the sense that $\alpha_{t_1}\alpha_{t_2} = \alpha_{t_1+t_2}$ since $G$ is Abelian. 

A \textbf{covariant representation} of a dynamical system is a triple $(\mathcal{H},\pi, U)$ where $\mathcal{H}$ is a Hilbert space, $\pi$ is a non-degenerate representation of the $C^*$ algebra $\mathfrak{B}$ acting on $\mathcal{H}$ (for von Neumann algebra we require it to also be a \textit{normal representation}\footnote{A normal representation maps is one that is associated to the GNS representation of normal states. Normal states are precisely the states that have convergence in expectation values.}), and $U$ is a strongly continuous unitary representation of $g\in G$ such that 
\begin{align}\label{eq: 1 param evo}
    \pi(\alpha_g(A)) = U_g^\dagger \pi(A)U_g\quad \forall A\in \mathfrak{B}\,.
\end{align}
In the context $G=\R$, a covariant representation implies that all automorphisms are implementable as a unitary time evolution in the representation.

Roughly speaking, having a dynamical system means we are given an algebra of observables associated with the physical system at hand and how to evolve these observables in time (via the map $\alpha_t$). In a $C^*$ dynamical system, we can perform time evolution of observables without reference to any notion of states or representations. Furthermore, if the quantum system is finite-dimensional, the $C^*$ algebra of observables is simply $M_n(\C)$ and its state space is naturally identified with finite-dimensional Hilbert space. In this case, it is possible to realize the time evolution as a unitary conjugation $\alpha_t(A) = U_t^\dagger A U_t$ and $U_t = e^{-\ii H t}$ for some bounded operator $H$. Crucially, it shows that the time evolution is given by the {Schr\"odinger equation {and is equivalent to working in the Heisenberg picture}.}

It turns out that there are situations where it is not possible to define a $C^*$ dynamical system with respect to some evolution map $\alpha_t$ and only $W^*$ dynamical system is obtained. This occurs, for example, when we consider a UDW detector with small energy gap $\Omega$ with Hamiltonian $H_{\Omega,\Delta,\lambda}$, where the \textit{perturbed dynamics} $\alpha_t^{\Omega}$ around the gapless detector model $\hat{H}_{0,\Delta,\lambda}$ does not make sense at the level of $C^*$ algebras as it may bring the observables outside the joint algebra (i.e., it is not an automorphism). We say that such dynamics is a \textit{pseudodynamics} \cite{amann1991spinboson}. The issue is also related to the fact that the automorphism is not continuous at the $C^*$ algebra level, hence one needs to pass to the $W^*$ setting where the evolution is continuous in the weak operator topology \cite{pillet2006open,jakvsic1996model}. 

Finally, we should mention that one can in principle formulate some dynamical map for time-dependent Hamiltonians.     Let $H(t)$ be a time-dependent Hamiltonian such that it satisfies the time-dependent Schr\"odinger equation
\begin{align}
    \ii \frac{\dd}{\dd t} \hat{U}(t,s) &= \hat{H}(t)\hat{U}(t,s)\,,
\end{align}
where $s,t\in \R$ and $\hat{U}(s,s)=\openone$. The two-parameter family of operators $\hat{U}(t,s)$ are called \textit{unitary propagators} that are formally given by the {time-ordered exponential}
\begin{align}
    \hat{U}(t,s)\coloneqq\mathcal{T}\exp\rr{-\ii\int_s^t\dd\tau\,\hat{H}(\tau)}\,.
    \label{eq: unitary-propagator}
\end{align}
Formally, the unitary propagators define a two-parameter family of automorphisms $\tau_{t,s}:\mathfrak{B}\to\mathfrak{B}$ such that
\begin{align}
    \tau_{t,s}(A) &= \hat{U}(t,s)^\dagger A\hat{U}(t,s)\,.
\end{align}
satisfying the standard composition property $\hat{U}(u,t)\hat{U}(t,s)=\hat{U}(u,s)$, but in general the unitary propagators fail the semi-group property $\hat{U}(t,s) = \hat{U}(t-s,0)$. Whenever $\hat{U}(t,s)$ exists, this gives us a generalization of the dynamical system for time-dependent Hamiltonians.

{The time evolution in the standard UDW model is given by the unitary propagator \eqref{eq: unitary-propagator}. However, working with $\hat{U}(t,s)$ rigorously is difficult especially in situations involving massless fields \cite{hubner1995radiative}. The unitary propagator has been rigorously shown for bounded time-dependent Hamiltonians or time-dependent bounded perturbations \cite[Thm. X.69]{reed1975vol2}, but these do not apply to the {time dependent} UDW model {since the interaction Hamiltonian is an unbounded operator}. There are exceptions, such as in the context of gapless qubit model where the Magnus expansion applies, but here one still needs some additional assumptions on $\hat{U}(t,s)$  \cite{batkai2012norm}.}

\subsection{Time evolution of observables}
\label{sec: time evolution}

In Section~\ref{sec: UDW-model} we argued that we should distinguish between the observables that a local observable can access from the support of the standard UDW interaction specified by some compactly supported spacetime smearing $f\in\CS$. That is, while the support of the UDW interaction is related to where the observer ``carrying'' the detector is located, the support of the observables in the algebra $\A(\mathcal{E}(\gamma))$ for the time envelope $\mathcal{E}(\gamma)$ will not generically be contained in $\supp f$. This distinction allows us to speak of local algebras of observables while extending the interaction region to $\R\times U$ where $U=\supp(F)$ is the support of the spatial profile of the detector-field interaction. To close this section, we show how the calculations of time evolution of observables can be done in the operator-algebraic framework, following the spirit of \cite{fannes1988equilibrium,amann1991spinboson}.

It suffices to calculate the time evolution for the Weyl generators and Pauli operators. From a mathematical standpoint the computation is in a sense trivial, but it is instructive to do them explicitly since they may provide guidance for further generalizations and the potential applications to studying quantum information-theoretic problems related to the UDW model. For the gapless model $\Omega=0$, the time evolution is unitarily implementable via Eq.~\eqref{eq: 1 param evo}. In particular, we are interested in computing the time evolution of the following observables:
\begin{align}
    \alpha_t(\hat{\sigma}^j)\,,\qquad \alpha_t(W(g))
\end{align}
where $j=x,y,z$ are the Pauli operators and $g\in \CS$ is some spacetime smearing function that the observer can specify. From these basic observables we can in principle generate all other observables of interest. 

Let us first show that $\alpha_t(\hat{\sigma}^x) = \hat{\sigma}^x$. This follows directly from the fact that
\begin{align}
    \alpha_t \equiv e^{\ii t \delta}\,,\quad \delta(\hat A)\coloneqq [\hat{H},\hat A]
\end{align}
for all $\hat{A}\in \mathfrak{B}$. Since the full Hamiltonian commutes with $\hat{\sigma}^x$, $\delta(\hat\sigma^x)=0$ and hence $e^{\ii\delta}(\hat{\sigma}^x) \equiv \openone_{\mathfrak{B}}(\hat{\sigma}^x) = \hat{\sigma}^x$. It remains to calculate $\hat{\sigma}^z$, since $\hat{\sigma}^y = \ii \hat\sigma^x\hat{\sigma}^z$. The idea is to first write, in the $\ket{\pm}$ basis
\begin{align}
    W(g)\equiv \openone_{\C^2}\otimes W(g) = \begin{bmatrix}
        W(g) & 0 \\ 0 & W(g) 
    \end{bmatrix}\,.
\end{align}
The full Hamiltonian decomposes into
\begin{align}
    \hat{H} \equiv \begin{bmatrix}
        h + \phi & 0 \\ 0 & h - \phi
    \end{bmatrix}
\end{align}
where we used the shorthand $h$ for the field's free Hamiltonian and $\phi\equiv \hat{\phi}(\lambda F_\bk)$ to reduce notational clutter. The time evolved operator is given by
\begin{align}
    \alpha_t(W(g)) \equiv \begin{bmatrix}
        e^{\ii t(h+\phi)}W(g)e^{-\ii t(h+\phi)} & 0 \\ 0 & e^{\ii t(h-\phi)}W(g)e^{-\ii t(h-\phi)}
    \end{bmatrix}\notag
\end{align}
It suffices to compute one of them since the other matrix element is obtained by the substitution $\phi\to-\phi$. 

There are several ways to do this, but one efficient way is to use Lie-algebraic technique \cite{gilmore2008lie}. Observe that since the scalar field is non-interacting, in the $\bk$-space each mode acts like an independent harmonic oscillator. Hence it is sufficient to work, formally, with one mode $\bk$. Consider a four-dimensional subalgebra $\mathfrak{S}$ of the general linear algebra for $3\times 3$ complex matrices $\mathfrak{gl}_3(\C)$, defined by upper triangular matrices of the form
\begin{align}
    \mathfrak{S}\coloneqq \left\{\sum_{j=1}^4\alpha_j T^j\equiv \begin{bmatrix}
        0 & \alpha_1 & \alpha_2 \\ 0 & \alpha_3 & \alpha_4 \\ 0 & 0& 0
    \end{bmatrix}\right\}
\end{align}
where $\alpha_j\in\C$ and $T^j$ are the Lie algebra generators. The four operators $\{\hat{n}_\bk\equiv \hat{a}_\bk^\dagger\hat{a}_\bk^{\phantom{\dagger}},\hat{a}_\bk,\hat{a}^\dagger_\bk,\openone \}$ satisfy the same Lie algebra relations as $\{T^j\}$ via the identification
\begin{align}
    T^1 &\to \hat{a}_\bk\,, \quad T^2 \to \openone \,,\quad T^3\to \hat{n}_\bk\,,\quad T^4 \to \hat{a}_\bk^\dagger\,.
\end{align}
This is a non-Hermitian Lie algebra representation of the four bosonic operators. The key idea is that since the four operators form a Lie algebra, we can ask for the solution of the following matrix equation:
\begin{align}
    e^{\ii t \sum_j\alpha_j T^j} &= \prod_{j=1}^4e^{\ii t \beta_j T^j}\,.
\end{align}
This solution always exists and can be found explicitly by direct computation.  By considering all the modes together, it is possible to prove the following:

\begin{lemma}\label{lemma: lie-algebra-factorization}
    We have
    \begin{align}
        e^{\ii t(h+\phi)} &= e^{\ii\theta(t)}e^{\ii t h}e^{\ii\hat{\phi}(\lambda F_\bk(t))}\,,
    \end{align}
    where 
    \begin{align}
        F_\bk(t) &= {-} \frac{\ii F_\bk}{\omega_\bk} (1-e^{{-}\ii\omega_\bk t})\,,\\
        \theta(t) &= \frac{\lambda^2|F_\bk|^2}{\omega_\bk^2 }\rr{\sin(\omega_\bk t)-\omega_\bk t}\,.
    \end{align}
\end{lemma}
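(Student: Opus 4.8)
The plan is to reduce the claim to a single momentum mode — where the operators involved close into a finite-dimensional Lie algebra — establish the factorization there, and then reassemble. Because the scalar field is non-interacting, in the $\bk$-representation $h=\int\dd^n\bk\,\omega_\bk\hat{a}_\bk^\dagger\hat{a}_\bk^{\phantom{\dagger}}$ and $\phi=\hat{\phi}(\lambda F_\bk)=\int\dd^n\bk\,\rr{\lambda F_\bk^*\hat{a}_\bk^{\phantom{\dagger}}+\lambda F_\bk\hat{a}_\bk^\dagger}$, so $h+\phi=\int\dd^n\bk\,\hat{H}_\bk$ with mutually commuting single-mode pieces $\hat{H}_\bk=\omega_\bk\hat{n}_\bk+\lambda F_\bk^*\hat{a}_\bk+\lambda F_\bk\hat{a}_\bk^\dagger$, and it suffices to factorize each $e^{\ii t\hat{H}_\bk}$.

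For fixed $\bk$ the element $\ii t\hat{H}_\bk$ lies in the span of $\{\hat{n}_\bk,\hat{a}_\bk,\hat{a}_\bk^\dagger,\openone\}$, which obeys the same relations as the algebra $\mathfrak{S}$ of upper-triangular $3\times3$ matrices under $T^1\to\hat{a}_\bk$, $T^2\to\openone$, $T^3\to\hat{n}_\bk$, $T^4\to\hat{a}_\bk^\dagger$. I would seek an ordered factorization $e^{\ii t\hat{H}_\bk}=e^{\ii\theta_\bk(t)}e^{\ii t\omega_\bk\hat{n}_\bk}e^{\mu_\bk(t)\hat{a}_\bk^\dagger}e^{\nu_\bk(t)\hat{a}_\bk}$ with $\theta_\bk,\mu_\bk,\nu_\bk$ vanishing at $t=0$; by faithfulness of $\mathfrak{S}$ these scalar functions are fixed by solving the analogous factorization for the explicit matrix $\ii t\sum_j\alpha_jT^j$ (a Wei--Norman-type ODE system that integrates in closed form). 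Rewriting the normal-ordered string $e^{\mu_\bk\hat{a}_\bk^\dagger}e^{\nu_\bk\hat{a}_\bk}$ as a Weyl generator absorbs a Gaussian factor into the phase; collecting terms gives $\mu_\bk(t)=\ii\lambda F_\bk(t)$, $\nu_\bk(t)=\ii\lambda F_\bk(t)^*$ with $F_\bk(t)=-\tfrac{\ii F_\bk}{\omega_\bk}\rr{1-e^{-\ii\omega_\bk t}}$, and the per-mode phase $\tfrac{\lambda^2|F_\bk|^2}{\omega_\bk^2}\rr{\sin(\omega_\bk t)-\omega_\bk t}$. Equivalently, one can bypass the Lie-algebra bookkeeping entirely: $\hat{H}_\bk$ is the displaced oscillator $\omega_\bk\hat{b}_\bk^\dagger\hat{b}_\bk-\lambda^2|F_\bk|^2/\omega_\bk$ of Proposition~\ref{proposition: unitary-equivalence}, so $e^{\ii t\hat{H}_\bk}$ follows from conjugating $e^{\ii t\omega_\bk\hat{n}_\bk}$ by displacement operators and applying the Weyl composition rule. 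A consistency check is in order: $\lambda F_\bk(t)$ should be a legitimate one-particle vector, and indeed $|1-e^{-\ii\omega_\bk t}|^2=2(1-\cos\omega_\bk t)$ is bounded by $4$ for large $|\bk|$ and by $\omega_\bk^2t^2$ for small $|\bk|$, so $\|\lambda F_\bk(t)\|_{\mathcal H}^2$ is controlled by $R_2(\lambda F_\bk,n)$ and $R_0(\lambda F_\bk,n)$, and the total phase is finite once $R_1$ and $R_2$ are (cf. Assumptions~\ref{assumption: sensible}--\ref{assumption: finite-energy}).

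Reassembling the modes, $\prod_\bk e^{\ii t\omega_\bk\hat{n}_\bk}=e^{\ii t h}$, the product of single-mode Weyl generators is $e^{\ii\hat{\phi}(\lambda F_\bk(t))}$ with $F_\bk(t)$ as above, and the per-mode phases integrate to $\theta(t)=\int\dd^n\bk\,\tfrac{\lambda^2|F_\bk|^2}{\omega_\bk^2}\rr{\sin(\omega_\bk t)-\omega_\bk t}$, which is the asserted identity.

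The main obstacle is promoting the finite-dimensional matrix identity to an honest operator identity on Fock space: since $\hat{a}_\bk,\hat{a}_\bk^\dagger,\hat{n}_\bk$ are unbounded, one must exhibit a common dense core of analytic vectors (finite-particle vectors, or coherent states) on which $\hat{H}_\bk$ — and $h+\phi$ after integration — is essentially self-adjoint, check that both sides of the factorization preserve that core and agree there, and control the continuum product over modes; this is precisely where the UV decay of $F_\bk$ and the finiteness of $R_1$ and $R_2$ are used. A cleaner route that avoids the disentangling computation is to verify the claimed formula by differentiating both sides in $t$ and checking that they solve the same first-order differential equation with the same value at $t=0$; this still needs the domain control but turns the remaining work into a one-line calculation.
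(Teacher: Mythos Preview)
Your proposal is correct and follows essentially the same route as the paper: reduce to a single mode using $[\hat{\mathsf{U}}_\bk,\hat{\mathsf{U}}_{\bk'}]=0$, exploit the four-dimensional Lie algebra $\mathfrak{S}\subset\mathfrak{gl}_3(\C)$ with the identification $T^1\to\hat{a}_\bk$, $T^2\to\openone$, $T^3\to\hat{n}_\bk$, $T^4\to\hat{a}_\bk^\dagger$ to factorize $e^{\ii t\hat{H}_\bk}$, and then recombine modes. The only cosmetic difference is that the paper writes the last factor directly in Weyl form $e^{\ii(\theta_3\hat{a}_\bk+\theta_4\hat{a}_\bk^\dagger)}$ and checks $\theta_4=\theta_3^*$, whereas you first normal-order and then convert; your additional remarks on the displaced-oscillator shortcut, the finiteness via $R_0,R_1,R_2$, and the differentiation check go beyond the paper's brief sketch but do not change the argument.
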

\begin{proof}
    One strategy is to do this mode-by-mode, i.e., by instead doing a simpler formal computation for
    \begin{align}
        \hat{\mathsf{U}}_\bk \coloneqq e^{ \ii t (\omega_\bk\hat{n}_\bk + \lambda F_\bk^*\hat{a}_\bk^{\phantom{\dagger}} +  \lambda F_\bk^{\phantom{*}} \hat{a}^\dagger_\bk) }
    \end{align}
    noting that $[\hat{\mathsf{U}}_\bk, \hat{\mathsf{U}}_{\bk'}]=0$ for $\bk\neq \bk'$. We then ask for the solution of
    \begin{align}
        \hat{\mathsf{U}}_\bk = e^{\ii \theta_1 \openone}e^{\ii \theta_2\hat{n}_\bk}e^{\ii \rr{\theta_3\hat{a}_\bk^{\phantom{\dagger}} + \theta_4 \hat{a}^\dagger_\bk}}
    \end{align}
    where $\theta_j$ are dependent on $\lambda, F_\bk$ and $\bk$. The result follows immediately after showing that $\theta_4=\theta_3^*$ and then combine all the modes together.

\end{proof}

Using Lemma~\ref{lemma: lie-algebra-factorization}, it is now straightforward to calculate the matrix elements of $\alpha_t(W(g_\bk))$. As before, we relabel $W(g)\equiv W(g_\bk)$ and using Eq.~\eqref{eq: k-space Weyl} and Eq.~\eqref{eq: k-space-causal-propagator}, it follows from direct computation
\begin{align}
    e^{\ii t(h+\phi)}W(g_\bk)e^{-\ii t(h+\phi)}
    &= e^{\ii\theta(t)}e^{ \ii t h}e^{\ii\hat{\phi}(\lambda F_\bk(t))}W(g_\bk)e^{-\ii\hat{\phi}(\lambda F_\bk(t))} e^{-\ii t h}e^{-\ii\theta(t)} \notag\\
    &= e^{{-} \ii {\Im(2\lambda F_\bk(t),g_\bk)_{\mathcal{H}}}} e^{ \ii t h} W(g_\bk) e^{-\ii t h}\notag\\
    &= e^{{-}\ii {\Im(2\lambda F_\bk(t),g_\bk)_{\mathcal{H}}}}  W(g_\bk{e^{\ii\omega_\bk t}})\,.
\end{align} 
Writing $ \varphi(t) = {-}{\Im(2\lambda F_\bk(t),g_\bk)_{\mathcal{H}}}$, we get
\begin{align}
    \alpha_t(W(g)) &= \begin{bmatrix}
        e^{\ii\varphi(t)}W(g_\bk e^{\ii\omega_\bk t}) & 0 \\ 0 & e^{-\ii\varphi(t)}W(g_\bk e^{\ii\omega_\bk t})
    \end{bmatrix}\notag\\
    &\equiv W(g_\bk e^{\ii\omega_\bk t})\otimes e^{\ii\varphi(t) \hat{\sigma}^x}\,.
\end{align}
Next, let us calculate the time evolution of $\hat{\sigma}^z$. We have in the $\ket{\pm}$ basis
\begin{align}
    \alpha_t(\hat{\sigma}^z) 
    &= 
    \begin{bmatrix}
        0 & e^{\ii t(h+\phi)}e^{-\ii t (h-\phi)} \\ e^{\ii t(h-\phi)}e^{-\ii t (h+\phi)} & 0
    \end{bmatrix}\,.
\end{align}
As before only one of the matrix elements is sufficient by substituting $\phi\to -\phi$. Using Lemma~\ref{lemma: lie-algebra-factorization}, we get
\begin{align}
    e^{\ii t(h+\phi)}e^{-\ii t (h-\phi)} 
    &=W(2\lambda F_\bk(t))
\end{align}
and hence
\begin{align}
    \alpha_t(\hat{\sigma}^z) &= \begin{bmatrix}
        0 & W(2\lambda F_\bk(t)) \\ W(-2\lambda F_\bk(t)) & 0
    \end{bmatrix}\notag\\
    &\equiv e^{\ii \hat\phi(2\lambda F_\bk(t))}\otimes \hat{\sigma}^+ + e^{-\ii \hat\phi(2\lambda F_\bk(t))}\otimes \hat{\sigma}^- 
\end{align}
where $\hat{\sigma}^\pm= \frac{1}{2}(\hat{\sigma}^z\mp \ii\hat{\sigma}^y)$ is the $\mathfrak{su}(2)$ ladder operator {associated with the $\ket{\pm}$ basis}, i.e., $\hat{\sigma}^+\ket{-}=\ket{+}$ and $\hat{\sigma}^-\ket{+}=\ket{-}$. By linearity, one can work out easily the time evolution of operators of the form
\begin{align}
    A &= \begin{bmatrix}
        a_1 W(f_1) & a_2 W(f_2) \\ a_3 W(f_3) & a_4 W(f_4)
    \end{bmatrix}
\end{align}
for any $a_i\in \C$ and $f_{j}\in \CS$. One gets
\begin{align}
    \alpha_t(A) 
    &= \displaystyle
    \begin{bmatrix}
    a_1 e^{\ii\varphi(t)}W(f_{\bk}^{1} e^{\ii\omega_\bk t}) &  a_2 W(f_{\bk}^{2}e^{\ii\omega_\bk t}+2\lambda F_\bk(t))   \\ a_3 W(f_{\bk}^{3}e^{\ii\omega_\bk t}-2\lambda F_\bk(t)) 
    & a_4 e^{-\ii\varphi(t)}W(f_{\bk}^{4} e^{\ii\omega_\bk t})
     \end{bmatrix}
\end{align}
where $f_\bk^j$ is the $\bk$-space representation of $f_j$ using Eq.~\eqref{eq: f-fourier}.

\subsection{Interacting ground state and thermal state}

In what follows we assume $\Omega=0$ unless otherwise stated. First, we showed in Section~\ref{sec: UDW-model} that demanding the interaction to be compactly supported in spacetime is \textit{not} sufficient in general to guarantee that the UDW model admits a ground state in the full interacting theory. However, the results can be interpreted in several ways assuming that the interaction is sufficiently localized in spacetime --- namely, that the spacetime smearing is \textit{at least} ``strongly supported'' in $\M$: by this we mean either it is compactly supported or it has tails whose Fourier transforms decay superpolynomially. Then we have the following results:
\begin{enumerate}[label=(\arabic*),leftmargin=*]
    \item For massive scalar fields in any spacetime dimension, the UDW model has a well-defined ground state that lives in $\C^2\otimes\fock$. This is because the mass regulates the IR and the strong support of the spatial smearing regulates the UV. In other words, the interacting Hilbert space coincides with the tensor product of the free Hilbert spaces of the detector and the free scalar field. This would also occur for massless scalar fields that non-minimally couple to curvature since {the curvature term $\xi R\phi^2$ that appears in the Klein-Gordon equation \eqref{eq: KGE} with $\xi\neq \frac{n-1}{4(n-2)}$ }acts as an effective mass. 

    \item For {conformally coupled ($\xi = \frac{n-1}{4(n-2)}$ in Eq.~\eqref{eq: KGE})}
    massless scalar fields in (4+1)-dimensional spacetimes and higher ($n\geq 4$), $R_1(\lambda F_\bk,n)$ is always IR-finite even if $F_\bk\sim |\bk|^{-1/2}$ at small $|\bk|$ (e.g., for Gaussian smearing), hence the full Hamiltonian is self-adjoint and bounded from below. However, for $R_2(\lambda F_\bk,n)$ to be finite the spatial profile requires a small IR regulator, i.e., $F_\bk\sim |\bk|^\epsilon$ at small $|\bk|$ for any $\epsilon>0$. Without this requirement, the interacting ground state does not live in the free Hilbert space and Haag's theorem applies: that is, the interaction picture calculations using the free Hilbert space will not agree with the calculations using interacting Hilbert space unless some sort of ``renormalization'' is performed \cite{derezinski2004scattering}. 

    \item For {conformally coupled} massless scalar fields in (3+1) dimensional spacetimes, the Hamiltonian is bounded from below if we mildly regulate the IR to make $R_1(\lambda F_\bk,3)<\infty$ by demanding that Eq.~\eqref{eq: mild-IR-smearing} holds. To ensure finite number of soft bosons, we need $R_2(\lambda F_\bk,3)<\infty$ by demanding that Eq.~\eqref{eq: harsh-IR-smearing} holds. Crucially, it means that spacetime smearings that do not regulate the IR --- such as Gaussian switching and Gaussian smearing --- will \textit{not} work because the exponential decay of the spacetime smearing only resolve the UV part of Haag's theorem. Just as in (2), if we do not impose sufficient IR regularity, the interacting Hilbert space will not be a tensor product of the standard Fock space even if we have the mild IR regularity to make the Hamiltonian bounded from below.
\end{enumerate}
Since (1) and (2) are the ``easy'' cases, we focus on (3).

Observe that for massless scalar fields in (3+1)-dimensional spacetimes, we are essentially faced with two choices. The first is to impose that the spatial profile has sufficient IR regularity to make $R_2(\lambda F_\bk,3)<\infty$, which restricts the number of soft bosons produced by the interaction, and hence the interacting Hilbert space fits into the free theory, i.e., $\ket{\psi_0}\in \C^2\otimes\fock$. This restricts the possible choice for the spatial profile $F$ but will make interaction picture computation reliable since the Hilbert space is unitarily eqivalent to the free Hilbert space. {The other, possibly more interesting option, is to allow for infinitely many soft bosons ($R_2(\lambda F_\bk,3)=\infty$). The fact that there exists Hilbert space representations that can accommodate infinite particle number is well-known in quantum statistical mechanics and QFT \cite{bratteli2002operatorv2,pillet2006open,jakvsic1996model,morfa2012deformations}.}

\begin{assumption}[Infinite soft bosons]
    The choice of spatial smearing $F$ is such that Assumptions~\ref{assumption: sensible} and \ref{assumption: finite-energy} hold, but $R_2(\lambda F_\bk,3) = \infty$.
    \label{assumption: infinite-soft-bosons}
\end{assumption}
The question then boils down to which Hilbert space the ground state should live in when Assumption~\ref{assumption: infinite-soft-bosons} holds. There are two possible \textit{a priori} routes that one can take. The first possibility is to consider algebraic state $\omega$ where the GNS reconstruction theorem gives rise to a \textit{non-Fock representation}. This argument can be traced to Greenberger and Licht \cite{greenberg1963quantum} where they show that an interacting QFT whose connected $n$-point functions vanish after some finite order has trivial scattering, i.e., it is in fact a non-interacting theory. The significance of this is that the GNS Hilbert space of a pure quasifree state is unitarily equivalent to the vacuum Fock space (see discussions in \cite{earman2006haag}), hence in general one cannot expect the interacting Hilbert space to be built out of the Fock vacuum. One well-known non-Fock representation is the so-called \textit{coherent representation} used for constructing \textit{asymptotic Hilbert spaces} in scattering theory (see  \cite{derezinski2004scattering,derezinski2003vanHove} in the context of van Hove and Pauli-Fierz models). 

The second possibility is to consider algebraic state $\omega$ whose GNS representation gives a Fock representation that is not unitarily equivalent to the vacuum sector and supports states with infinitely many bosons {with respect to the number operator} of the vacuum GNS representation $\mathcal{H}_{\omega_0}$. Two such representations are well-known: they are the \textit{Kubo-Martin-Schwinger} (KMS) \textit{thermal representation} and the \textit{Araki-Woods representation}. These representations are also distinguished by the fact that they are both Fock representations that are unitarily inequivalent to the vacuum representation. The KMS representation is familiar in the context of Unruh effect: the expectation value of the number operator of Rindler observers (with proper acceleration $a$) with respect to the Minkowski vacuum is infinite, with number \textit{density} given by the Planckian distribution. Operator-algebraically this means that the Minkowski vacuum is an $(\alpha_\tau,\beta)$-KMS state with respect to the Rindler time $\tau$ and $\beta = 2\pi/a$. This suggests that we can define the ground state as zero-temperature limit of some $(\alpha_\tau,\beta)$-KMS state, and indeed this observation is known for a long time \cite{fannes1988equilibrium,spohn1989spinboson,amann1991spinboson}. 

We will be primarily interested in the second possibility since the non-Fock representations are mainly relevant for scattering theory. This is also natural from physical point of view since we should be able to describe closed-system thermal equilibrium between the detector and the field, i.e., we expect that $(\alpha_\tau,\beta)$-KMS state for any finite $\beta$ must exist for the UDW model. 

\begin{definition}[\cite{bratteli2002operatorv2}]
    \label{def: KMS}
    A state $\omega_{\beta}^{0}$ is an $(\alpha_t^0,\beta)$-KMS state with respect to thermal time $t$ if for all $A,B\in \mathfrak{B}''_{\alpha_t^0}$ a weakly-dense $\alpha_t^0$-invariant subalgebra of $\mathfrak{B}''$, we have
    \begin{align}
        \omega_\beta(A\alpha_{\ii\beta}^{0}(B)) = \omega_\beta(BA)\,.
    \end{align}
\end{definition}
\noindent Here $\mathfrak{B}''$ refers to the \textit{bicommutant} of $\mathfrak{B}$ which is a von Neumann algebra \cite{bratteli2013operator,bratteli2002operatorv2}\footnote{For our purposes, it is sufficient to know that the bicommutant $\mathfrak{B}''$ guarantees good functional-analytic properties of the algebra of observables as it is the closure of $\mathfrak{B}$ in both the strong and weak operator topologies.  Therefore the (local) algebra of observables in QFT is often defined after taking the bicommutant of a unital $C^*$-subalgebra of $\mathcal{B}(\mathcal{H})$. See \cite{sorce2023intuitive,sorce2023notes,witten2018entanglement} for accessible introduction to von Neumann algebra in physics and its role in the definition of thermality in infinite systems.}. The claim is that for the spin-boson model, and hence the UDW model, there exists such a joint equilibrium state.
\begin{proposition}
    Let $\hat{H}_{0,\Delta, \lambda}$ be the Hamiltonian of the UDW model with zero gap and possibly nonzero detuning. There exists a $(\alpha^0_t,\beta)$-KMS state of $\mathfrak{B}$ satisfying the appropriate regularity and continuity conditions in Section~\ref{sec: UDW-model}. Furthermore, in the $\ket{\pm}$ basis it is given by
    \begin{align}
        \omega_{\beta,\Delta}^0 &= \begin{bmatrix}
            \frac{1}{1+e^{-2\beta\Delta}}\omega_{\beta,+} & 0 \\ 0 & \frac{1}{1+e^{2\beta\Delta}}\omega_{\beta,-}
        \end{bmatrix}
    \end{align}
    where $\omega_{\beta,\pm}:\tilde{\mathcal{W}}(\R^n)\to \C$ such that
    \begin{align}
        \omega_{\beta,\pm} (W(g_\bk)) &= e^{-\frac{1}{2}(g_\bk, \coth(\frac{\beta\omega_\bk}{2})g_\bk)_{\mathcal{H}}} e^{\pm 2\ii\Im \bigr(\frac{\ii\lambda  F_\bk}{\omega_\bk},g_\bk\bigr)_{\mathcal{H}}}
    \end{align}
    for all $g_\bk$ in $\mathcal{H}$. 
\end{proposition}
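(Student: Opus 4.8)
The plan is to exploit the block-diagonal structure of $\hat H_{0,\Delta,\lambda}$ in the $\ket{\pm}$ eigenbasis of $\hat\sigma^x$, reducing the problem to a KMS condition for each field-sector Hamiltonian $\hat h_{0,\pm}$ separately, and then checking that the two sectors are weighted correctly by the Gibbs factors coming from the $\mp\Delta\openone$ terms. Concretely, recall from Proposition~\ref{proposition: unitary-equivalence} that $\hat H_{0,\Delta,\lambda} = \ketbra{+}{+}\otimes \hat H_+ + \ketbra{-}{-}\otimes \hat H_-$ with $\hat H_\pm = \hat h_{0,\pm} - (R_1(\lambda F_\bk,n)\mp\Delta)\openone$, and $\hat h_{0,\pm} = \hat U_\pm^\dagger \hat h_0 \hat U_\pm$ is the free field Hamiltonian conjugated by the dressing unitary $\hat U_\pm = \hat D(\pm\alpha_\bk)$, $\alpha_\bk = \lambda F_\bk/\omega_\bk$. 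Since $\hat\sigma^x$ is conserved, the dynamics $\alpha_t^0$ acts diagonally: on the $\ketbra{\pm}{\pm}$ block it is conjugation by $e^{\ii t\hat h_{0,\pm}}$ (the constants drop out of the automorphism), i.e. $\alpha_t^0(\openone\otimes W(g_\bk))\big|_{\pm} = \hat U_\pm^\dagger\, \alpha_t^{\mathrm{free}}(W(g_\bk))\, \hat U_\pm$, where $\alpha_t^{\mathrm{free}}(W(g_\bk)) = W(e^{\ii\omega_\bk t}g_\bk)$ is the free Weyl evolution computed in Section~\ref{sec: time evolution}.

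The key steps, in order: (i) Establish that the free quasifree thermal (Gibbs/KMS) state $\omega_\beta^{\mathrm{free}}$ for $\hat h_0$ on $\tilde{\mathcal W}(\R^n)$ satisfies $\omega_\beta^{\mathrm{free}}(W(g_\bk)) = e^{-\frac12(g_\bk,\coth(\beta\omega_\bk/2)g_\bk)_{\mathcal H}}$; this is the standard Araki--Woods/KMS two-point function and one may cite \ref{appendix: araki-woods} together with \cite{bratteli2002operatorv2}. (ii) Observe that pulling back $\omega_\beta^{\mathrm{free}}$ through the dressing automorphism $\mathrm{Ad}_{\hat U_\pm}$ gives a state $\omega_{\beta,\pm}$ which is automatically $(\alpha_t^0|_\pm,\beta)$-KMS on that sector, since KMS is preserved under conjugating both the state and the dynamics by the same unitary. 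Then compute $\omega_{\beta,\pm}(W(g_\bk)) = \omega_\beta^{\mathrm{free}}(\hat U_\pm W(g_\bk)\hat U_\pm^\dagger)$; using the Weyl relation $\hat D(\pm\alpha_\bk) W(g_\bk)\hat D(\pm\alpha_\bk)^\dagger = e^{\pm 2\ii\Im(\alpha_\bk,g_\bk)_{\mathcal H}} W(g_\bk)$ and $\alpha_\bk = \lambda F_\bk/\omega_\bk = -\ii(\ii\lambda F_\bk/\omega_\bk)$ (to match the stated form with $\ii\lambda F_\bk/\omega_\bk$ inside the imaginary part), the displacement contributes precisely the phase $e^{\pm 2\ii\Im(\ii\lambda F_\bk/\omega_\bk,\,g_\bk)_{\mathcal H}}$, reproducing the claimed formula. (iii) Assemble the block-diagonal state with weights: the $\mp\Delta$ shifts in $\hat H_\pm$ are $c$-numbers, so they enter only through the qubit Gibbs factor; the correct normalized weights for the $\ketbra{+}{+}$ and $\ketbra{-}{-}$ blocks are $e^{\beta\Delta}/(e^{\beta\Delta}+e^{-\beta\Delta}) = 1/(1+e^{-2\beta\Delta})$ and $1/(1+e^{2\beta\Delta})$ respectively. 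Verify the global KMS condition $\omega_{\beta,\Delta}^0(A\alpha_{\ii\beta}^0(B)) = \omega_{\beta,\Delta}^0(BA)$ directly on block-diagonal $A,B$ (the off-diagonal Weyl generators $W(g)$ are block-diagonal, and products of $\hat\sigma^z,\hat\sigma^y$ with Weyl elements are handled by the $2\times2$ matrix bookkeeping of Section~\ref{sec: time evolution}); the analytic continuation $t\mapsto\ii\beta$ of $W(e^{\ii\omega_\bk t}g_\bk)$ together with the explicit Gaussian form makes this a finite Gaussian computation. (iv) Finally confirm the regularity/continuity conditions from Section~\ref{sec: UDW-model}: $t\mapsto \omega_{\beta,\pm}(W(tg_\bk))$ is manifestly analytic (it is the exponential of a quadratic in $t$), so the state is analytic, hence regular, $\sigma$-weakly continuous, and the needed expectation values of ladder operators exist; the Hadamard and no-zero-mode requirements are inherited from the choice of the underlying free quasifree state.

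The main obstacle I anticipate is step (iii)'s verification that the proposed weighted block-diagonal functional genuinely satisfies the \emph{global} KMS condition of Definition~\ref{def: KMS} on a weakly dense $\alpha_t^0$-invariant subalgebra of $\mathfrak B''$ — not merely sector by sector. One must handle operators with nontrivial qubit content (e.g. $\hat\sigma^z\otimes W(g)$, whose time evolution mixes the $\ket{\pm}$ sectors via the phases $\varphi(t)$ and the displacements $2\lambda F_\bk(t)$ from Section~\ref{sec: time evolution}); the cross terms must reorganize exactly so that the $\Delta$-dependent weights and the sector-dependent displacement phases conspire to give $\omega_{\beta,\Delta}^0(BA)$. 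A secondary subtlety is ensuring the dressing automorphism $\mathrm{Ad}_{\hat U_\pm}$ is well-defined \emph{on $\tilde{\mathcal W}(\R^n)$ as a state pullback} even when $\alpha_\bk\notin L^2$ (i.e. when $R_2=\infty$, Assumption~\ref{assumption: infinite-soft-bosons}): here one uses that $\hat U_\pm$ need not be implemented as a unitary on $\fock$ for the \emph{state} $\omega_{\beta,\pm}$ to remain a well-defined positive normalized Weyl functional — its defining formula only involves $\alpha_\bk/\omega_\bk^{1/2}$-type pairings that converge under Assumption~\ref{assumption: finite-energy} — and one should remark that this is precisely why the KMS/thermal (Araki--Woods) representation, rather than the vacuum Fock representation, is the natural home for the interacting equilibrium state.
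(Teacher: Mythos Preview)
Your proposal is correct and follows the same approach the paper adopts: the paper does not give an explicit proof but states that it ``essentially parallels the one given in \cite{fannes1988equilibrium}'', which is precisely the block-diagonalization via the dressing unitaries $\hat U_\pm$ together with the pullback of the free KMS state that you outline. One minor bookkeeping point: in your intermediate Weyl-displacement identity the phase should be $e^{\mp 2\ii\Re(\alpha_\bk,g_\bk)_{\mathcal H}}$ rather than $e^{\pm 2\ii\Im(\alpha_\bk,g_\bk)_{\mathcal H}}$; once you use $\Im(\ii\alpha_\bk,g_\bk)_{\mathcal H}=-\Re(\alpha_\bk,g_\bk)_{\mathcal H}$ this matches the stated formula, so the strategy and final answer are unaffected.
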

\noindent The proof essentially parallels the one given in \cite{fannes1988equilibrium} so we do not repeat it here. We note that the KMS condition in the context of QFT is technically slightly different, in that it is intimately tied to the ``direction of time'' that is not unique in relativistic settings (see the discussions in \cite{verch2005distillability}). 

We make two comments on this result. First, observe that if there is no interaction ($\lambda F_\bk=0$) then $\omega_{\beta,+}=\omega_{\beta_-} \equiv \omega_{\beta,f}$ is precisely the KMS state of the \textit{free field theory}, and the joint state is the tensor product of the thermal states of each subsystem, i.e.,
\begin{align}
    \lim_{\lambda\to 0}\omega_{\beta,\Delta}^0 &\equiv  \omega_{\beta,\Delta}^{d} \otimes \omega_{\beta}^{f}\,,
\end{align}
where $\omega_{\beta,\Delta}^{d}$ is the KMS state of the detector, i.e., for all $X\in M_2(\C)$ we have
\begin{align}
    \omega_{\beta}^{d}(X) &= \tr(\rho_\beta X)\,,\quad \rho_{\beta} = \frac{e^{-\beta\Delta\hat{\sigma}_x}}{\tr (e^{-\beta\Delta\hat{\sigma}_x})}\,.
\end{align}
Second, and perhaps more importantly, the KMS state $\omega_{\beta,\Delta}^{0}$ is generically not unique because of the possible existence of \textit{Bose-Einstein condensation} (BEC), although this has been proven to not occur in $(1+1)$-dimensional case \cite{fannes1988equilibrium} (in which case $\omega_\beta^0$ is the \textit{unique} $(\alpha_t,\beta)$-KMS state). However this will not concern us since our goal is to study a ground state of the UDW model, and the one obtained in this fashion will not contain a condensate\footnote{A simple can be found in \cite{brunetti2021algebraic} in the context of free massive complex scalar field with continuous internal $U(1)$ symmetry.} (see \cite{spohn1989spinboson} for more detailed discussions). If we exclude BEC, then the KMS state above is unique. 

Observe that the limits $\Delta\to 0$ and $\beta\to \infty$ do \textit{not} commute, so one has to be careful in defining the ground state. For non-zero bias the ground state constructed this way (without condensate) is essentially unique.
\begin{proposition}[zero-temperature limit \cite{spohn1989spinboson}]
    The limit 
    \begin{align}
        \omega_{\infty,\Delta}^0 \coloneqq \lim_{\beta\to\infty}\omega_{\beta,\Delta}^0
    \end{align}
    exists and is a ground state for the UDW Hamiltonian $\hat{H}_{0,\Delta,\lambda}$. Furthermore, the limits $\Delta\to 0^\pm$ are related by the $\Z_2$-symmetry:
    \begin{align}
        \lim_{\Delta\to 0^\pm}\omega^{0}_{\infty,\Delta} &= \omega_{\infty,\pm}^0
    \end{align}
    where 
    \begin{align*}
        \omega_{\infty,+}^0 = \begin{bmatrix}
            \omega_{\infty,+} & 0 \\ 0 & 0 
        \end{bmatrix}\,,\quad \omega_{\infty,-}^0 = \begin{bmatrix}
            0 & 0 \\ 0 & \omega_{\infty,-} 
        \end{bmatrix}\,.
    \end{align*}
    
\end{proposition}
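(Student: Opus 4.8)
The plan is to push the explicit formulas of the preceding KMS proposition through the two successive limits. First I would reduce the existence of $\omega^0_{\infty,\Delta}=\lim_{\beta\to\infty}\omega^0_{\beta,\Delta}$ to a pointwise statement: since the linear span of the generators $E_{ij}\otimes W(g_\bk)$ (with $E_{ij}$ the matrix units of $M_2(\C)$ and $g_\bk$ ranging over the admissible test functions in $\mathcal{H}$) is norm-dense in $\mathfrak{B}=M_2(\C)\otimes\tilde{\mathcal{W}}(\R^n)$, while each $\omega^0_{\beta,\Delta}$ has norm one, it suffices that $\omega^0_{\beta,\Delta}(E_{ij}\otimes W(g_\bk))$ converge as $\beta\to\infty$ for every generator; the limit then extends uniquely by linearity and norm-continuity to a bounded functional on $\mathfrak{B}$. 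For the field factor one uses that $\coth(\beta\omega_\bk/2)-1\geq 0$ decreases monotonically to $0$ in $\beta$, so by monotone (equivalently dominated) convergence $(g_\bk,\coth(\tfrac{\beta\omega_\bk}{2})g_\bk)_{\mathcal{H}}\downarrow(g_\bk,g_\bk)_{\mathcal{H}}$ for every $g_\bk$ in the common domain, whence
\begin{align}
    \omega_{\beta,\pm}(W(g_\bk))\ \longrightarrow\ \omega_{\infty,\pm}(W(g_\bk))\coloneqq e^{-\frac12(g_\bk,g_\bk)_{\mathcal{H}}}\,e^{\pm 2\ii\,\Im\left(\frac{\ii\lambda F_\bk}{\omega_\bk},\,g_\bk\right)_{\mathcal{H}}}\,.\notag
\end{align}
For the detector factor the scalar weights satisfy $\tfrac{1}{1+e^{-2\beta\Delta}}\to 1$ and $\tfrac{1}{1+e^{2\beta\Delta}}\to 0$ when $\Delta>0$, and vice versa when $\Delta<0$; hence the limit exists and collapses onto the first block, $\omega^0_{\infty,\Delta}=\omega^0_{\infty,+}$, when $\Delta>0$, and onto the second, $\omega^0_{\infty,\Delta}=\omega^0_{\infty,-}$, when $\Delta<0$.

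Next I would check that $\omega^0_{\infty,\Delta}$ is an admissible state. As a pointwise (hence weak-$*$) limit of states it is automatically positive and unital, so it is a state; the regularity and analyticity conditions of Section~\ref{sec: UDW-model} survive because $t\mapsto\omega_{\infty,\pm}(W(tg_\bk))$ is entire in $t$ (a Gaussian times an exponential). The Hadamard property and the absence of zero modes are inherited from the underlying field theory, since replacing $\coth(\beta\omega_\bk/2)$ by $1$ turns the quadratic form back into the vacuum one: $\omega_{\infty,\pm}$ is merely a coherent (displacement) transform of the Hadamard vacuum with displacement $\bk\mapsto\mp\lambda F_\bk/\omega_\bk$, i.e. the coherent data $\ket{\mp\alpha}$ of Proposition~\ref{proposition: unitary-equivalence}.

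For the ground-state property I would invoke the standard fact that a weak-$*$ limit of $(\alpha^0_t,\beta_n)$-KMS states along $\beta_n\to\infty$ is a ground state for $(\mathfrak{B},\R,\alpha^0_t)$ as soon as it is $\alpha^0_t$-invariant \cite{bratteli2002operatorv2}; the invariance is immediate from the block decomposition $\hat{H}_{0,\Delta,\lambda}=\ketbra{+}{+}\otimes\hat{H}_+ +\ketbra{-}{-}\otimes\hat{H}_-$ of Proposition~\ref{proposition: unitary-equivalence} together with the fact that $\omega_{\infty,\pm}$ is the ground (hence stationary) state of the van Hove Hamiltonian $\hat{h}_{0,\pm}$. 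As a cross-check one may verify the autocorrelation inequality $-\ii\,\omega^0_{\infty,\Delta}(A^\dagger\delta(A))\geq 0$ blockwise from this decomposition, and note that when $R_2(\lambda F_\bk,n)<\infty$ the GNS representation of $\omega_{\infty,\pm}$ is unitarily equivalent to the vacuum Fock representation with cyclic vector $\ket{\mp\alpha}$, recovering exactly $\ket{\psi^\pm_0}$ of Proposition~\ref{proposition: unitary-equivalence}. I expect this step to be the real content: either the KMS-to-ground-state limiting theorem must be quoted with its hypotheses genuinely discharged for the gapless ($\Omega=0$) $C^*$-dynamics $\alpha^0_t$ — chiefly $\alpha^0_t$-invariance and the regularity that keeps $\alpha^0_t$ an honest $C^*$-dynamics rather than a pseudodynamics — or the autocorrelation inequality must be established directly.

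The $\Delta\to 0^\pm$ relation is then immediate: by the collapse of the scalar weights above, $\omega^0_{\infty,\Delta}=\omega^0_{\infty,+}$ for \emph{every} $\Delta>0$ and $\omega^0_{\infty,\Delta}=\omega^0_{\infty,-}$ for every $\Delta<0$, so $\lim_{\Delta\to 0^\pm}\omega^0_{\infty,\Delta}=\omega^0_{\infty,\pm}$ holds trivially. These two states are exchanged by the $\Z_2$-automorphism $\tau$ of Section~\ref{sec: consequences} ($\tau(\hat{\sigma}_x)=-\hat{\sigma}_x$, $\tau(W(F_\bk))=W(-F_\bk)$), which sends $\ii\lambda F_\bk/\omega_\bk\mapsto -\ii\lambda F_\bk/\omega_\bk$ and interchanges $\ket{+}\leftrightarrow\ket{-}$, so $\omega^0_{\infty,+}\circ\tau=\omega^0_{\infty,-}$; and the failure of $\Delta\to 0$ to commute with $\beta\to\infty$ reflects exactly the symmetry breaking, since $\omega^0_{\beta,0}=\tfrac12(\omega_{\beta,+}\oplus\omega_{\beta,-})$ has $\beta\to\infty$ limit the \emph{mixed} ground state $\tfrac12(\omega^0_{\infty,+}+\omega^0_{\infty,-})$, whereas the order used here selects an extremal one.
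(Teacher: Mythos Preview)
The paper does not supply its own proof of this proposition: it is stated with the citation \cite{spohn1989spinboson} in the heading and the text moves on immediately to discuss the zero-bias limit. So there is no ``paper's proof'' to compare against beyond the deferral to Spohn.

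Your argument is correct and is essentially the natural way to fill in the details behind that citation. Pushing the explicit KMS formula from the preceding proposition through the limit $\beta\to\infty$ on the dense span of generators, with the monotone collapse $\coth(\beta\omega_\bk/2)\downarrow 1$ for the field block and the elementary collapse of the Boltzmann weights for the qubit block, is exactly what the explicit form invites; the resulting functional is manifestly a state as a weak-$*$ limit of states, and the identification of $\omega_{\infty,\pm}$ as the coherently displaced vacuum matches the paper's subsequent Proposition~4. Your appeal to the Bratteli--Robinson result that weak-$*$ accumulation points of $(\alpha,\beta)$-KMS states along $\beta\to\infty$ are ground states is the standard route and is what Spohn's argument relies on as well; your caveat about verifying that $\alpha^0_t$ is a genuine $C^*$-dynamics (not merely a pseudodynamics) in the gapless case is well placed, since the paper itself flags this distinction in Section~\ref{sec: dynamics}. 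Your closing remark that taking $\Delta\to 0$ first yields the symmetric mixture $\tfrac12(\omega^0_{\infty,+}+\omega^0_{\infty,-})$ rather than an extremal state is precisely the non-commutativity of limits the paper alludes to just before stating the proposition.
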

\noindent The zero-bias limit is the analog to the degenerate ground states we found earlier but is valid under Assumption~\ref{assumption: infinite-soft-bosons}, we can check by direct computation:
\begin{proposition}
    Let $\omega_{0,\pm} \equiv \braket{\psi_{0}^\pm|(\cdot)|\psi^\pm_0}$ be two degenerate ground states of the UDW model with finite $R_2(\lambda F_\bk,3)$. Then for any $g_\bk\in \mathcal{H}$ we have
    \begin{align}
        \omega_{0,\pm}(W(g_\bk)) = e^{-\frac{1}{2}\braket{g_\bk,g_\bk}}e^{\pm 2i\Im\bigr(\frac{i\lambda F_\bk}{\omega_\bk},g_\bk\bigr)_\mathcal{H}}\,.
    \end{align}
\end{proposition}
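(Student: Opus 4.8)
The plan is to evaluate the expectation value by a direct computation in the field sector, using that the strengthened hypothesis $R_2(\lambda F_\bk,3)<\infty$ promotes the (degenerate) ground states of Proposition~\ref{proposition: unitary-equivalence} to honest unit vectors of $\C^2\otimes\fock$. Concretely, $\ket{\psi_0^\pm} = \ket{\pm}\otimes\ket{\mp\alpha}$ with $\alpha_\bk = \lambda F_\bk/\omega_\bk$, and $\ket{\mp\alpha} = \hat D(\mp\alpha)\ket{0}$ is a normalizable coherent state of the free Fock space precisely because $\|\alpha_\bk\|^2 = R_2(\lambda F_\bk,3)<\infty$. Since $W(g_\bk)$ acts as $\openone_{\C^2}\otimes W(g_\bk)$, the qubit factor contributes only $\braket{\pm|\pm}=1$, so the problem reduces to computing the coherent-state expectation $\braket{\mp\alpha|W(g_\bk)|\mp\alpha}$.

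The next step is this coherent-state expectation of a Weyl operator. Writing $\ket{\mp\alpha} = \hat D(\mp\alpha)\ket{0}$ and conjugating the generator $\hat\phi(g_\bk) = \hat a(g_\bk^*)+\hat a^\dagger(g_\bk)$ by the displacement operator $\hat D(\beta) = e^{\hat a^\dagger(\beta)-\hat a(\beta^*)}$ via $\hat D(\beta)^\dagger\hat a_\bk\hat D(\beta) = \hat a_\bk+\beta_\bk$ gives $\hat D(\beta)^\dagger W(g_\bk)\hat D(\beta) = e^{2\ii\,\Re(g_\bk,\beta_\bk)_{\mathcal{H}}}\,W(g_\bk)$, where $(\cdot,\cdot)_{\mathcal{H}}$ is the one-particle inner product identified with $\mathsf{W}$ in Eq.~\eqref{eq: k-space-causal-propagator}. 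Combining this with the vacuum (quasifree) value $\omega_0(W(g_\bk)) = e^{-\frac12(g_\bk,g_\bk)_{\mathcal{H}}}$ and setting $\beta_\bk = \mp\alpha_\bk = \mp\lambda F_\bk/\omega_\bk$ yields $\omega_{0,\pm}(W(g_\bk)) = e^{-\frac12\braket{g_\bk,g_\bk}}\,e^{\mp 2\ii\,\Re(g_\bk,\lambda F_\bk/\omega_\bk)_{\mathcal{H}}}$. As an alternative to the displacement argument one may normal-order $e^{\ii\hat\phi(g_\bk)}$ directly through the Weyl/CCR relations and act on $\ket{\mp\alpha}$; both routes produce the same Gaussian prefactor and phase.

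The last step is cosmetic: using $\Re(g_\bk,\lambda F_\bk/\omega_\bk)_{\mathcal{H}} = -\Im(\ii\lambda F_\bk/\omega_\bk,g_\bk)_{\mathcal{H}}$, the phase $\mp 2\ii\,\Re(g_\bk,\lambda F_\bk/\omega_\bk)_{\mathcal{H}}$ becomes $\pm 2\ii\,\Im(\ii\lambda F_\bk/\omega_\bk,g_\bk)_{\mathcal{H}}$, which is exactly the claimed form; this also exhibits the result as the $\beta\to\infty$ limit of $\omega_{\beta,\pm}$ since $\coth(\beta\omega_\bk/2)\to 1$, a consistency check between the explicit coherent-state ground state and the zero-temperature KMS construction. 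I do not expect a serious obstacle here: the content is essentially bookkeeping. The one point that genuinely requires the hypothesis $R_2(\lambda F_\bk,3)<\infty$ is the legitimacy of treating $\hat D(\mp\alpha)$ as a unitary on the vacuum Fock space and $\ket{\mp\alpha}$ as a genuine Fock vector; when $R_2 = \infty$ the displacement is not unitarily implementable and $\omega_{0,\pm}$ must instead be regarded as an algebraic state on $\Wk$, which is precisely the regime handled by the KMS/Araki--Woods route of the preceding propositions. A secondary, routine care point is tracking the conventions $\hat a(u^*)$ versus $\hat a^\dagger(v)$ and the identification $\mathsf{W}(f,g)\equiv(f_\bk,g_\bk)_{\mathcal{H}}$ consistently, so that the factor of $2$ and the sign in the phase come out correctly.
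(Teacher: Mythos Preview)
Your proposal is correct and is precisely the ``direct computation'' the paper alludes to without spelling out: reduce to the field sector, conjugate $W(g_\bk)$ by the displacement $\hat D(\mp\alpha_\bk)$ with $\alpha_\bk=\lambda F_\bk/\omega_\bk$, and use the vacuum quasifree value together with the elementary identity $\Re(g_\bk,\alpha_\bk)_{\mathcal H}=-\Im(\ii\alpha_\bk,g_\bk)_{\mathcal H}$. Your remarks on where the hypothesis $R_2<\infty$ enters (unitary implementability of the dressing) and the consistency with the $\beta\to\infty$ limit of $\omega_{\beta,\pm}$ are accurate and match the surrounding discussion in the paper.
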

\noindent In other words, the expectation values of the Weyl generators using $\omega_{0,\pm}$ are functionally identical to those of $\omega_{\infty,\pm}^0$, except the fact that in one case $R_2(\lambda F_\bk,3)$ is finite and in another case it is infinite. This highlights the fact that a proper representation of the CCR algebra is necessary to fit different physical scenarios.

\subsection{Detector-field correlations at equilibrium}

In the standard interaction picture calculations, typically one assumes that in the asymptotic past the initial state is the uncorrelated ground states of the detector and the field. In general, a coupled system is not expected to have the same ground state as its uncoupled counterpart. We have also seen that in the case where the joint Hilbert space is $\C^2\otimes\fock$, the ground state does not coincide with the tensor product of the qubit ground state and the Fock vacuum; instead, it is a tensor product of the qubit ground state and \textit{displaced vacuum} (i.e., a coherent state). 

We now have all the tools to ask simple questions about how correlations can be produced via the interactions in the joint interacting model. As a first try, let us show that the interacting ground state associated with $\hat{H}_{0,\Delta,\lambda}$, defined as the zero-temperature limit of a thermal state, is also separable with respect to the detector-field bipartition: this follows from the fact that
\begin{align}
    \omega_{\infty,\pm|\Delta|}^0 = \ketbra{\pm}{\pm}\otimes \omega_{\infty,\pm} \equiv \omega_{\pm,d}\otimes \omega_{\infty,\pm }
\end{align}
where $\omega_{\pm,d}(A)=\tr(\ketbra{\pm}{\pm} A)$ corresponds to the detector's algebraic pure state and the field state $\omega_{\infty,\pm}$ is pure. One way to see this is to observe that formally $\omega_{\infty,\pm}$ is the vacuum state displaced by the ``dressing operator'' that has divergent coherent amplitude (because $R_2(\lambda F_\bk,3)=\infty$), so this is the analog of $\ket{\pm}\otimes \ket{\mp\alpha}$. This also follows from  the fact that the Schmidt rank of a separable state is 1. Since one subsystem is finite-dimensional (in this case the detector), the Schmidt rank of the joint system is given by the rank of the density operator $\rho_d$ associated with the detector's expectation values \cite{van2023schmidt}
\begin{align*}
    \tr(\rho_d A) \equiv \omega(A\otimes\openone)\,.
\end{align*}
The joint state is therefore separable since $\rho_d=\ketbra{\pm}{\pm}$ has Schmidt rank equal to 1.

In contrast, the joint finite-temperature state will generically be entangled across the bipartition, since the factorization above is no longer possible due to the relative phase. One exception is in the high-temperature regime $\beta\to 0$ where
\begin{align*}
    \omega_{\beta,\Delta} &\approx  \tau\otimes\omega_{\beta,\pm}
\end{align*}
where $\tau:M_2(\C)\to \C$ is a \textit{tracial state}, i.e., an algebraic state satisfying $\tau(AB)=\tau(BA)$. The tracial state has the property that $\tau(A)\equiv \frac{1}{2}\tr(A)$ for all $A\in M_2(\C)$, hence it corresponds to a maximally mixed state $\rho=\openone/2$, as we would expect from high-temperature limit of the Gibbs state. {This is one way to justify what we mean by thermal state for gapless detector model: we ask for the KMS state of the joint system and show that it is consistent with $\hat{\rho}_{d} \equiv \openone/2$, even though on its own the qubit does not have a length scale to make sense of its own temperature.} 

In the standard UDW model, the detector-field interaction can generate entanglement: that is, if the initial state is separable, the joint state will evolve to some entangled state due to the interaction between them. Clearly, this does not contradict the fact that {the interacting ground state is separable between the partition of the qubit and the field since the initial state in the standard UDW framework is not an equilibrium state. By construction the ground state is an eigenstate of the \textit{full} Hamiltonian}, hence evolving the interacting ground state will not produce any entanglement. More generally, one can check that the qubit channel $\Phi_t$ defined by
\begin{align}
    \Phi_t(\rho) &= \tr_\phi(\hat{U}\rho\otimes\ketbra{\mp\alpha}{\mp\alpha}U^\dagger)
\end{align}
where $\hat{U} = \exp(-\ii t\hat{H}_{0,\Delta,\lambda})$ is non-entangling for any coherent state $\ket{\mp\alpha}$  if $\rho$ is diagonal in the $\ket{\pm}$ basis as $\bigr[\ketbra{\pm}{\pm},\hat{H}_{0,\Delta,\lambda}\bigr]=0$. 

The analogous computation to the standard framework would be to consider initial states that are not an equilibrium state of the joint Hamiltonian. For example, since the Hamiltonian is chosen to have vanishing gap, an initial state that would evolve to some entangled state is given by
\begin{align}
    \omega_\phi &= \ketbra{g}{g}\otimes\omega_{0}\,.
\end{align}
where $\ket{g}$ is the eigenstate of $\hat{\sigma}_z$ and $\omega_0$ is some algebraic ground state. This example is relevant if we consider the gapless model as an approximation where the detector gap is much smaller than any other frequency scale in the system. For brevity let us denote corresponding algebraic state by $\omega_\infty^0$. The idea is that if we start off from initially separable observables, i.e., observables of the form
\begin{align}
    {A}\otimes{B}\in \mathfrak{B}\,,
\end{align}
then the separability of the ground state implies that
\begin{align}
    \omega_\infty^0(A\otimes B) &= \omega_{\pm,d}(A)\omega_{\infty,\pm}(B) 
\end{align}
for some $\omega_d,\omega_f$ being the (algebraic) states of the detector and the field respectively. Correlations are therefore generated by the interaction if
\begin{align}
    \omega_\infty^0(\alpha_t(A\otimes B)) \equiv (\alpha_t^*\omega_\infty^0)(A\otimes B) &\neq \omega_{d,t}(A)\omega_{f,t}(B)\notag 
\end{align}
for some state $\omega_{d,t}$ of the detector and $\omega_{\phi,t}$ of the field that depend on $t$ and $\alpha_t^*\omega_\infty^0$ is the pullback of $\omega_\infty^0$. Indeed, by direct computation we can show that the resulting algebraic state takes the form
\begin{align}
    \alpha_*^t\omega_\infty^0 &= 
    \frac{1}{2}
    \begin{bmatrix}
    \omega_{++} & \omega_{+-} \\ \omega_{-+}& \omega_{--}     
    \end{bmatrix}
\end{align}
where for any $A\in \mathcal{W}(\M)$ we have
\begin{align}
    \omega_{ab}(A) = \omega_0(\hat{D}(a\alpha_\bk)^\dagger A\hat{D}(b\alpha_\bk))\,,\quad a,b=\pm\,.
\end{align}
{It can be checked that this state is a pure state with Schmidt rank 2, hence it is a genuine entangled state and the von Neumann entropy of the reduced states of either subsystem can be computed exactly \cite{tjoa2023nonperturbative}.}

{In summary, the procedures for studying how interactions generate correlations at the level of the states or the observables are the same as the standard interaction picture scenario. The technical advantage of using the operator-algebraic approach is that we are able to ensure that there is a well-defined ground state for the UDW model to make sense and we can find appropriate Hilbert space representations to support them. The time evolution can then be implemented using the evolution map $\alpha_t$. Note that the classical trajectory of the qubit is essentially hidden in the spacetime smearing function and in principle we can choose any trajectory and perform quantization with respect to any choice of global time but only a few special cases are amenable to simple explicit computations. }

\section{Discussion and outlook}
\label{sec: discussion}

\subsection{Summary}

Let us summarize the main points in this work. In quantum statistical mechanics and relativistic QFT, it is well-known that there exists (infinitely many) unitarily inequivalent representations of the CCR algebra for bosonic fields (even for the non-interacting case), and an analogous statement holds for fermionic fields with anti-commutation relations. The corresponding Hilbert spaces associated to two inequivalent representations are disjoint. The van Hove model serves as a textbook example of an ``interacting'' theory where the vacuum representation of the model is generically unitarily inequivalent to the vacuum representation of the free bosonic field with zero current, thus demonstrating the relevance of Haag's theorem even in the simplest Gaussian settings \cite{fewster2019algebraic}. In particular, the interaction picture, which relies on the zero-current Hilbert space representation, is of limited applicability unless one imposes sufficient constraints on the UV and the IR behaviour of the model. 

Our work is motivated by the observation that the UDW model is a simple variant of the SB model \cite{fannes1988equilibrium,spohn1989spinboson,amann1991spinboson,hasler2011ground,hasler2021existence}, where the choice of coupling function is fundamentally determined from the spatial profile of the interaction in the background spacetime. By construction, such UDW model is an interacting theory of a qubit and a scalar field that has good UV behaviour due to spatial localization of the detector-field interaction, thus any issues related to the inequivalent representations are largely due to the IR behaviour of the interaction. By using the zero-gap limit of the UDW Hamiltonian, we can work out the constraints on the spatial smearing function the detector so that the interacting Hilbert space ``fit'' into the non-interacting Hilbert space (the tensor product of qubit Hilbert space and the zero-current bosonic field). By turning the issue of inequivalent representations around, our work provides us with some heuristics under which the UDW model calculations in the interaction picture, which depend on non-interacting Hilbert space representations, are reliable. For example, in Section~\ref{sec: consequences} we list some examples where this complication does not arise, such as for massive scalar fields and for massless scalar fields that couple to curvature (non-conformal coupling) and gain effective mass. 

Our work requires that we view the UDW model in the standard paradigm of {closed} interacting systems, where the joint system has a time-independent Hamiltonian. Such a paradigm is used from the Rabi model in quantum optics to the Standard Model of particle physics. On the other end, time-dependent Hamiltonians imply the existence of classical external control that mediates the interaction Hamiltonian. Such a paradigm is used in quenched quantum systems or Floquet systems (time-periodic driven systems). The standard UDW Hamiltonian is closer to the latter where the switching function serves as an external control, however rigorous results on time-dependent unbounded Hamiltonians for infinite systems is much more lacking than the time-independent counterpart. Our argument is based on reframing the UDW Hamiltonian within the time-independent framework and then adapt the rigorous results from the SB model.

Overall, we have shown that by lifting the rigorous results on the spin-boson model to the UDW model with suitable modifications, the IR regularity of the model  {contains} essentially the same  {physical questions as those posed in} Haag’s theorem for the van Hove model. On the one hand, it means that the interaction picture should be used judiciously in the UDW framework. On the other hand, we can turn this around and say that for many interesting applications, our work tells us how to avoid the obstruction altogether and instead work in the settings where the non-interacting Hilbert space can be used. Furthermore, our work suggests that conceptually, we should separate the localization of interactions from the localization of observables: this has implications for further analysis involving multiple detectors, e.g., in the context of measurement theory of QFTs, relativistic causality, field-mediated communication, and entanglement harvesting protocols. Last but not least, our work shows that in a way the IR  {behaviour} is familiar to us: in high energy physics this is essentially a toy version of the soft photon problem, and in condensed matter this is a toy version of the Bose-Einstein condensation.  

We should mention two technical points. Firstly, since we are using the time-independent reformulation of the UDW model, in principle this does \textit{not} prove that whenever the (time-independent) UDW model has ``good'' IR behaviour then the standard (time-dependent) UDW model will also inherit this property. By ``good'' we mean that the Hilbert space representation of the UDW model will be unitarily equivalent to the non-interacting theory (i.e., when the detector is not coupled to the field). Since at a fixed time the two Hamiltonians are essentially identical, our argument shows that whenever the UDW model has ``bad'' IR behaviour, the standard time-dependent UDW model should not be expected to have good IR behaviour. For example, we argued that in some cases the well-used Gaussian smearing function in flat spacetime will not suffice. Secondly, even in the original SB model, one does not quite have rigorous results for all choice of parameters. In particular, due to the possibility of Bose-Einstein condensation and phase transition \cite{bratteli2002operatorv2}, the existence of ground states of the SB model (hence also the UDW model) can really be proven only for zero-gap limit \cite{fannes1988equilibrium} or nonzero gap but small detector-field coupling \cite{spohn1989spinboson,amann1991spinboson,hasler2011ground,hasler2021existence}.

\subsection{Further directions}

Our work poses some open questions that are closely related to the existing RQI literature on the use of UDW model in various contexts. We list some of them below.

The first natural extension of our work is for multiple detectors\footnote{Note that the operator-algebraic formulation for a single UDW detector has actually been used once to study the Unruh effect \cite{deBievre2006unruh}. They also use the time-independent formulation but the focus was on the thermalization of an accelerating qubit detector.}. To the best of our knowledge, this has not been worked out even in the original SB model. Presumably in the SB model, having multiple qubits do not modify the problem in important ways, but in the UDW model the underlying spacetime has a causal structure. Thus the choice of the coupling functions are intimately connected to spacetime localization of the detectors. This problem requires that the notion of \textit{spacelike-separated} qubits be modified, possibly using the clustering-type argument in QFT, since the total Hamiltonian in this framework will not temporally localized. This affects the interpretation of certain tasks such as the entanglement harvesting protocol \cite{pozas2015harvesting,pozas2016entanglement,reznik2003entanglement,reznik2005violating,Valentini1991nonlocalcorr}, which is sensitive to the causal relationships between the qubit interaction regions \cite{tjoa2021harvesting,Caribe2023harvestingBH}. One also would like to understand what happens when only one detector couples with a ``bad'' IR behaviour: could the other detector see it, or would this only affect the choice of Hilbert space representations one works in (e.g., by choosing Araki-Woods representation)? 

Second, it is worth noting that the potential use of operator-algebraic techniques for two-qubit entanglement harvesting has already been somewhat anticipated in \cite{verch2005distillability}. The idea is to rigorously justify the pioneering work on entanglement harvesting protocol \cite{reznik2003entanglement,reznik2005violating,Valentini1991nonlocalcorr} and make connections to the rigorous understanding of entanglement in quantum field theory. Essentially, the natural question is this: how much can we say about the relationships between the distillability, positive partial transpose property, and entanglement measures between the two subsystems of the field that couple to the two detectors? A precise connection between them would provide a more (if not the most) complete answer to the question of how much information can finite-dimensional quantum-mechanical probes obtain from infinite-dimensional quantum field through physically motivated couplings. We find it rather fortunate that there has been quite a lot of progress in our understanding of entanglement theory in infinite dimensions (see, e.g., \cite{vanLuijk2023schmidt,vanLuijk2024embezzlement,hollands2017entanglement,sanders2023separable}). 

Third, for simplicity we have not dealt with the scenario where the detector has nonzero gap $\Omega$ in addition to (possibly) nonzero detuning/bias parameter $\Delta$. In \cite{fannes1988equilibrium}, the approach to this is to treat small gap as a bounded perturbation to the gapless model we consider here; in \cite{hasler2021existence}, the strategy is to instead treat the non-interacting spin and bosonic field as the primary Hamiltonian and then consider unbounded perturbation through the spin-boson interaction. These problems are strongly connected to phase transition and symmetry-breaking: it was shown in \cite{spohn1989spinboson} for gapped qubit, there is critical coupling strength beyond which the joint ground state ``leaves the free Hilbert space'' when the bosonic field has Ohmic spectral density. One can now ask: in the context of QFT in curved spacetimes, if we view the UDW-type interaction as being physically relevant for probing the Unruh and Hawking effects, are these affected by the possibility of having Bose-Einstein condensation and the generation of infinitely many IR bosons due to the detector-field interaction? 

Last but not least, it is noteworthy that many aspects of QFT, such as having local algebras that are not Type I von Neumann algebras, also arise in infinite quantum spin chains. Since spin chains already can exhibit various types of phase transitions, it is of interest to see whether the kind of interactions considered here also carries through to spin systems. The point is to identify which part of the physics has to do intimately with relativity and which part has to do ``merely'' with the fact that there are infinitely many degrees of freedom in the system of interest.

We leave these questions for the near future.

\section*{Acknowledgment}

E. T. also thanks Robert Jonsson for hospitality at Nordita, Sweden during which part of this work is performed, and also Roie Dann for useful discussions. E. T. acknowledges funding from the Munich Center for Quantum Science and Technology (MCQST), funded by the Deutsche Forschungsgemeinschaft (DFG) under Germany’s Excellence Strategy (EXC2111 - 390814868).

\appendix 

\section{Convention for propagators}
\label{appendix: green-function}

Here we briefly review the conventions used for defining the causal propagator in quantum field theory and some of the ambiguities that may arise, since it caused quite a bit of headache for the authors.  The issue has to do with the fact that the causal propagator $E$ is a \textit{kernel} of the wave equation, in the sense that for the Klein-Gordon wave operator $\hat P$ we have
\begin{align}
    \hat PE = 0
\end{align}
in the distributional sense. In particular, it means that $\pm E$ is in $\ker \hat P$. 

Our convention, which follows \cite{tjoa2022holographic-essay}, can be regarded as the ``least-minus-sign convention'' in that our definition removes as many minus signs as possible in the formalism from any other conventions. For concreteness, we compare our convention with that of \cite{KayWald1991theorems} where the metric signature is (fortunately) positive and is the closest to our convention. The CCR of the field operator is $[\hat\phi(f),\hat{\phi}(g)]=-\ii E(f,g) = +\ii \sigma(Ef,Eg)$. This occurs because for the definition of the field operator, they use the symplectic form $\tilde\sigma$ on $\Sol_\C(\M)$ that has \textit{future-directed} unit normal, i.e., $\sigma=-\Tilde{\sigma}$. In other words, the smeared field operator reads
\begin{align}
    \hat{\phi}(f)\coloneqq \sigma(Ef,\hat{\phi}) \equiv -\sigma(\hat{\phi},Ef)= \tilde{\sigma}(\hat{\phi},Ef) \,.
\end{align}
Our convention imposes that $\sigma(Ef,Eg) = E(f,g)$ and removes the minus sign difference by having the unit timelike normal 1-form of a spacelike Cauchy surface be \textit{past-directed} \cite{poisson2009toolkit}. Notice that the Weyl generators in \cite{KayWald1991theorems} are also defined with a minus sign:
\begin{align}
    W_{\textsc{kw}}(Ef)\coloneqq e^{-\ii\hat{\phi}(f)}
\end{align}
which gives the Weyl relation
\begin{align}
    W_{\textsc{kw}}(Ef)W_{\textsc{kw}}(Eg) = e^{-\frac{\ii\tilde{\sigma}(Ef,Eg)}{2}}W_{\textsc{kw}}(Ef+Eg)\,.
\end{align}
In any case, it is a convention so one is free to choose which one to adopt.

The ambiguity is not yet fully resolved from the symplectic structure alone, because in \cite{KayWald1991theorems} they define $E$ also as advanced-minus-retarded propagator, so in order to be consistent with our convention it must be the case that somehow
\begin{align}
    E^{\text{R/A}}_{\textsc{kw}} = -E^{\text{R/A}}\,.
    \label{eq: kw-green}
\end{align}
The minus sign does not change the spacetime supports of these bi-distributions, so it will not affect the formalism we use in this work. However, it does affect the convention for the Green's function technique for solving the wave equation. In the literature of classical wave theory, the standard convention in flat spacetime is \cite{poisson2011motion}
\begin{align}
    (\partial_t^2-\nabla^2)G(\sx,\sx') &= \delta^{n+1}(\sx-\sx')\,.
\end{align}
For mostly-plus metric signature, this is typically written as
\begin{align}
    \hat P G(\sx,\sx') = -\delta^{n+1}(\sx-\sx')\,,
\end{align}
while in the QFT literature, the convention is to instead write it as \mbox{$(-\hat{P})G(\sx,\sx') = \delta^{n+1}(\sx-\sx')$}. Following \cite{poisson2011motion}, the retarded and advanced propagators are given by 
\begin{equation}
    \begin{aligned}
    E^{\text{R}}(\sx,\sx') &\coloneqq +\frac{1}{4\pi|\bx-\bx'|}\delta(t-t'-|\bx-\bx'|)\,,\\
    E^{\text{A}}(\sx,\sx') &\coloneqq +\frac{1}{4\pi|\bx-\bx'|}\delta(t-t'+|\bx-\bx'|)\,,
    \label{eq: adv-ret-green}
    \end{aligned}
\end{equation}
{where the supports of the solution $E^{\text{R/A}} f$ is contained in the causal future and causal past of the support of $f$, i.e., }
\begin{align}
    \supp (E^{\text{R/A}} f)\subseteq J^\pm (\supp f)\,.
\end{align}
Comparing this with the vacuum expectation value of the field commutator (as done in \cite{tjoa2021harvesting,Causality2015Eduardo}), we see that it is consistent with
\begin{align}
    [\hat{\phi}(\sx),\hat\phi(\sx')]=\ii (E^\text{A}-E^\text{R})(\sx,\sx')\openone
\end{align}
and hence our definition of causal propagator is indeed the advanced-minus-retarded propagator. Importantly, this means that our convention for the Green's function (which follows that of \cite{poisson2011motion}) is the one where
\begin{align}
    \hat PE^{\text{R/A}} f = -f\,.
\end{align}
Alternatively, we could interpret our $E^{\text{R/A}}$ as distributional inverse of $-\hat{P} \equiv -\nabla_a\nabla^a +m^2 + \xi R$: this is used fairly often especially when the metric signature is mostly-minus.

Finally, comparing this to the convention by \cite{KayWald1991theorems}, we have the relationship
\begin{align}
    \hat PE^{\text{R/A}} f = -f \Leftrightarrow  (-\hat P)E^{\text{R/A}} f = f \Leftrightarrow \hat PE^{\text{R/A}}_{\textsc{kw}}f = f  
    \label{eq: wave-inverse}
\end{align}
Therefore, the convention in \cite{KayWald1991theorems} is such that $E^{\text{R/A}}_\textsc{kw}$ is a distributional inverse of  $P$ (mostly-plus signature). Indeed, since their commutator reads
\begin{align}
    [\hat{\phi}(\sx),\hat\phi(\sx')]= -\ii E_\textsc{kw}(\sx,\sx')
\end{align}
it is necessary that they define 
\begin{align}
    {
    E^{\text{R/A}}_\textsc{kw}(\sx,\sx')} &\coloneqq -\frac{1}{4\pi|\bx-\bx'|}\delta(t-t'\mp|\bx-\bx'|)\notag\\
    &\equiv -E^{\text{R/A}}
\end{align}
in order for $E_\textsc{kw}$ to be  defined as advanced-minus-retarded propagator. Alternatively, one can define instead {\textit{retarded-minus-advanced propagator}}
\begin{align}
    \tilde{E}(\sx,\sx') \coloneqq E_\textsc{kw}^\text{R}-E_\textsc{kw}^\text{A}
\end{align}
in which case $[\hat{\phi}(\sx),\hat\phi(\sx')]=\ii \Tilde{E}(\sx,\sx')$, so it takes the same functional form as our convention despite the opposing sign of Green's functions.

\section{Fock representation}
\label{appendix: Fock-quantization}

In this section we review some basic properties of Fock representation, as this will be relevant for discussing different unitarily inequivalent Fock representations that arise in the context of the spin-boson and UDW models. We benefit mainly from the combination of \cite{derezinski2006introduction,morfa2012deformations} together with standard references \cite{fewster2019algebraic,bratteli2002operatorv2}.
\begin{definition}
    Let $\mathcal{H}^{\odot n}$ be the $n$ symmetric tensor product of a one-particle Hilbert space $\mathcal{H}$. The \textbf{bosonic Fock space} with respect to $\mathcal{H}$ is defined as
    \begin{align}
        \Gamma(\mathcal{H}) \coloneqq \bigoplus_{n=0}^\infty \mathcal{H}^{\odot n}\,.
    \end{align}
\end{definition}
\noindent For $n=0$ --- the \textit{vacuum sector} --- we have $\mathcal{H}^{\odot 0}\cong \C$, and each $n$ labels disjoint subspaces called the \textit{$n$-particle sectors}. 
\noindent In the physics literature, we often see the notation $\fock$ or its variants such as $\mathcal{F}(\mathcal{H})$. In this Appendix we follow the mathematical convention to make comparisons easier, but we will use $\Gamma(\mathcal{H})$ and $\fock$ interchangeably.  

If $\hat{h}$ is a closed operator on $\mathcal{H}$, then we define two closed operators
\begin{align}
    \Gamma^n(\hat{h}) &\coloneqq \hat{h}^{\otimes n} \,,\\
    \dd\Gamma^n(\hat{h}) &\coloneqq\sum_{j=1}^n \openone^{\otimes (j-1)}\otimes\hat{h}\otimes\openone^{\otimes (n-j)}\,,
\end{align}
and hence the operators
\begin{align}
    \Gamma(\hat{h})\coloneqq\bigoplus_{n=0}^\infty\Gamma^n(\hat{h})\,,\quad 
    \dd \Gamma(h) &\coloneqq \bigoplus_{n=0}^\infty \dd\Gamma^n(h)\,.
\end{align}
are automorphisms on $\Gamma(\mathcal{H})$. These operators are self-adjoint if $\hat{h}$ is self-adjoint.

The notations are well-chosen since they are reminiscent of those found in Lie group theory. Consider a Lie group $G$, its corresponding Lie algebra $\mathfrak{g}$, together with their representations $\Pi: G\to GL(V)$ and $\pi: \mathfrak{g}\to \mathfrak{gl}(V)$. Recall that the connected component of $G$ satisfies
\begin{align}
    \pi(g) &= \frac{\dd}{\dd t}\Bigr|_{t=0}\Pi(e^{\ii t g}) \equiv \frac{\dd}{\dd t}\Bigr|_{t=0} e^{i t\pi(g)}\,.
\end{align}
where $g\in\mathfrak{g}$ and $e^{\ii tg}\in G$. In other words, the Lie algebra can be viewed as the tangent space of the corresponding Lie group and its elements are obtained as derivatives of the Lie group elements. Something similar appears in the construction of Fock spaces. 
\begin{proposition}[\cite{derezinski2006introduction}]
    Let $\hat{h},\hat{h}_1,\hat{h}_2$ be some closed operators acting on the one-particle Hilbert space $\mathcal{H}$. The following identities hold:
    \begin{itemize}
        \item $\Gamma(\hat{h}_1)\Gamma(\hat{h}_2) = \Gamma(\hat{h}_1\hat{h}_2)$
        \item $[\dd\Gamma(\hat{h}_1),\dd\Gamma(\hat{h}_2)] = \dd\Gamma([\hat{h}_1,\hat{h}_2])$
        \item $\Gamma(e^{\ii t\hat{h}}) = e^{\ii t\,\dd\Gamma(\hat{h})}$.

    \end{itemize}
\end{proposition}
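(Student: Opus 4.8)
The plan is to prove each of the three identities first on the finite-particle sectors, where everything reduces to elementary tensor-algebra, and then reassemble the sectors via the direct-sum structure, keeping careful track of domains since the $\hat{h}_i$ need not be bounded.

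First I would reduce to a single $n$-particle sector. Since $\Gamma(\cdot)=\bigoplus_{n}\Gamma^n(\cdot)$ and $\dd\Gamma(\cdot)=\bigoplus_{n}\dd\Gamma^n(\cdot)$ act on each sector separately, it suffices to establish the analogous relations for $\Gamma^n,\dd\Gamma^n$ on each $\mathcal{H}^{\odot n}$. On $\mathcal{H}^{\odot n}$ I would work on the linear span of symmetrized elementary tensors $\psi_1\odot\cdots\odot\psi_n$ with each $\psi_j$ lying in a common core of $\hat{h}_1,\hat{h}_2$ (in the single-operator case, in $\mathrm{Dom}(\hat{h})$); this span is a core for all operators that appear, so the identities will follow by taking closures. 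For the first identity I would use that tensor products compose factorwise, $(A_1\otimes\cdots\otimes A_n)(B_1\otimes\cdots\otimes B_n)=(A_1B_1)\otimes\cdots\otimes(A_nB_n)$, giving $\hat{h}_1^{\otimes n}\hat{h}_2^{\otimes n}=(\hat{h}_1\hat{h}_2)^{\otimes n}$ on elementary tensors; since $\hat{h}^{\otimes n}$ commutes with the symmetrization projection it descends to $\mathcal{H}^{\odot n}$, and closing up yields $\Gamma^n(\hat{h}_1)\Gamma^n(\hat{h}_2)=\Gamma^n(\hat{h}_1\hat{h}_2)$, hence the first claim. For the second identity I would expand
\begin{align*}
\big[\dd\Gamma^n(\hat{h}_1),\dd\Gamma^n(\hat{h}_2)\big]=\sum_{i,j=1}^{n}\big[\openone^{\otimes(i-1)}\!\otimes\hat{h}_1\otimes\openone^{\otimes(n-i)},\ \openone^{\otimes(j-1)}\!\otimes\hat{h}_2\otimes\openone^{\otimes(n-j)}\big],
\end{align*}
observe that the $i\neq j$ terms vanish since the two factors act on disjoint tensor slots, while the $i=j$ term is $\openone^{\otimes(i-1)}\otimes[\hat{h}_1,\hat{h}_2]\otimes\openone^{\otimes(n-i)}$; summing over $i$ reproduces $\dd\Gamma^n([\hat{h}_1,\hat{h}_2])$.

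For the exponential identity I would take $\hat{h}=\hat{h}^\dagger$, so that $\{e^{\ii t\hat{h}}\}_{t\in\R}$ is a strongly continuous one-parameter unitary group. First I would check that $t\mapsto\Gamma(e^{\ii t\hat{h}})$ is again a strongly continuous unitary group on $\Gamma(\mathcal{H})$: the group law $\Gamma(e^{\ii t\hat{h}})\Gamma(e^{\ii s\hat{h}})=\Gamma(e^{\ii(t+s)\hat{h}})$ is the first identity, unitarity is sectorwise obvious, and strong continuity follows from sectorwise strong continuity of $(e^{\ii t\hat{h}})^{\otimes n}$ together with the uniform bound $\|\Gamma(e^{\ii t\hat{h}})\|=1$, which promotes convergence on the dense set of finite-particle vectors to convergence on all of $\Gamma(\mathcal{H})$. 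By Stone's theorem this group has a self-adjoint generator, which I would identify by differentiating at $t=0$ on a finite-particle vector $\psi$ built from $\mathrm{Dom}(\hat{h})$: the Leibniz rule for tensor products gives $\tfrac{\dd}{\dd t}\big|_{t=0}(e^{\ii t\hat{h}})^{\otimes n}\psi=\ii\,\dd\Gamma^n(\hat{h})\psi$, and since such vectors form a core for $\dd\Gamma(\hat{h})$ the generator equals $\dd\Gamma(\hat{h})$, i.e.\ $\Gamma(e^{\ii t\hat{h}})=e^{\ii t\,\dd\Gamma(\hat{h})}$. Alternatively, once the group law is known one may simply differentiate it in $t$ and $s$ to reach the same conclusion.

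The hard part will be the functional-analytic bookkeeping for unbounded closed $\hat{h}_i$: exhibiting a single common core on which the algebraic manipulations hold verbatim, checking that $\Gamma^n(\hat{h}_1\hat{h}_2)$ and $\dd\Gamma^n([\hat{h}_1,\hat{h}_2])$ are closed (or essentially self-adjoint when the $\hat{h}_i$ are self-adjoint) so that the equalities extend from the core to genuine operator identities with matching domains, and verifying strong continuity of $\Gamma(e^{\ii t\hat{h}})$ across the infinite direct sum. For bounded $\hat{h}_i$ all of this is automatic and the three relations are pure tensor algebra; the general case is the standard tensor-product and Stone's-theorem argument, for which I would refer to \cite{derezinski2006introduction,reed1975vol2}.
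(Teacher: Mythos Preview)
The paper does not supply its own proof of this proposition: it is stated with a citation to \cite{derezinski2006introduction} and used as a standard fact, so there is nothing to compare against. Your sketch is the standard textbook argument (sectorwise reduction, tensor algebra on elementary tensors, Stone's theorem for the exponential identity) and is correct at the level of approach; the domain and closure caveats you flag are exactly the places where care is needed in the unbounded case.
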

\noindent We see that $\Gamma,\dd\Gamma$ takes the role of $\Pi$ and $\pi$ in the Lie group theory and they behave like representations for the algebra of operators acting on $\mathcal{H}$ by setting $V=\Gamma(\mathcal{H})$. Indeed, if $\hat{h}$ is the one-particle Hamiltonian, then
\begin{align}
    \hat{H}_0\coloneqq \dd\Gamma(\hat{h}) \equiv \int\dd^n\bk\,\omega^{\phantom{\dagger}}_\bk\hat{a}^\dagger_\bk\hat{a}_\bk^{\phantom{\dagger}}
\end{align}
is precisely the full Hamiltonian in the Fock representation\footnote{This notation is technically not correct since $\hat{a}_\bk$ is not a proper operator acting on the one-particle Hilbert space.}. Sometimes one writes $\hat{H}_0 = \dd \Gamma(\omega) $. Schematically speaking, the eigenvectors of $e^{\ii t \hat{h}}$ are $\{e^{\ii t\omega_\bk}\}$, so the eigenvalues of $\hat{h}$ are precisely the derivative of $e^{\ii t\omega_\bk}$ evaluated at $t=0$, namely $\omega_\bk$. From these, we see that the number operator would correspond to an operator with eigenvalue $n$:
\begin{definition}
    The \textbf{number operator} is defined by
    \begin{align}
        \hat{N}\coloneqq \dd\Gamma(\openone )\,.
        \label{eq: number-operator-rigorous}
    \end{align}
\end{definition}
\noindent This is the reason why the free Hamiltonian is sometimes written as $\dd \Gamma(\omega)$. 

A vector state $\Psi\in\fock$ is defined to be \textit{a sequence} $(\Psi_n)_{n\in \mathbb{N}_0}$:
\begin{align}
    {\Psi}\coloneqq (\Psi_0,\Psi_1,\Psi_2,...)
\end{align}
where each $\Psi_n\in \mathcal{H}^{\odot n}$. A vacuum state is written as $\Omega\coloneqq (1,0,0,...)$. A more physicist or elementary linear-algebraic notation would be to regard $\mathcal{H}^{\odot n}$ as subspaces of $\fock$ and write $\ket{\Psi}$ as a unique decomposition
\begin{align}
    \ket{\Psi} = \ket{\Psi_0} + \ket{\Psi_1} + \ket{\Psi_2} + ... 
\end{align}
where each $\ket{\Psi_n}\in \mathcal{H}^{\odot n}$.

The ladder operators $\hat{a},\hat{a}^\dagger$ are viewed as linear maps on $\fock$. First, let us consider the annihilation operator  defined to through the action
\begin{align}
    \hat{a}(\Phi) \ket{\Psi_{n+1}} = \sqrt{n+1} \hat{l}_{n+1}[\Phi]\ket{\Psi_{n+1}}\in \mathcal{H}^{\odot n}\,,
\end{align}
where the map $\hat{l}_{n+1}[\Phi]:\mathcal{H}^{\otimes n+1}\to \mathcal{H}^{\otimes n}$ is defined via
\begin{align}
    \hat{l}_{n+1}[\Phi](\Psi_1\otimes ...\otimes\Psi_{n+1}) = \braket{\Phi|\Psi_1}\Psi_2\otimes...\Psi_{n+1}\,.
\end{align}
and this map restricts to the map $\mathcal{H}^{\odot n+1}\to \mathcal{H}^{\odot n}$. Similarly, the creation operator is defined via \cite{fewster2019algebraic}
\begin{align}
    \hat{a}(\Psi)\ket{\Psi_n} = \sqrt{n+1}S_{n+1}[\ket{\Psi}\otimes\ket{\Psi_n}]\,,
\end{align}
where $S_{n+1}$ is an orthogonal projection onto $\mathcal{H}^{\odot n+1}$.

Morally speaking, these definitions account for the fact that both the ladder operator and the state are \textit{smeared} in the occupation number basis. Let us demonstrate this by considering the action of the smeared ladder operators on the one-particle state
\begin{align}
    \ket{\Psi_1} &\equiv \int\dd^n\bk\,\Psi(\bk)\hat{a}_\bk^\dagger \ket{\Omega} \equiv \hat{a}^\dagger(\Psi)\ket{\Omega}\,,
\end{align}
noting that the subscript ``$1$'' labels the sector (not the smearing function). Since the ladder operators at fixed $\bk$ are ill-defined, we need to consider smeared ladder operators that can be formally written as
\begin{equation}
    \begin{aligned}
            \hat{a}(\Phi^*) = \int\dd^n\bk\,\Phi(\bk)\hat{a}_\bk\,, \quad \hat{a}^\dagger(\Phi) = \int\dd^n\bk\,\Phi(\bk)\hat{a}^\dagger _\bk\,.
    \end{aligned}
\end{equation}
For the annihilation operator have
\begin{align}
    \hat{a}(\Phi^*)\ket{\Psi_1} &\equiv \hat{a}(\Phi^*)\hat{a}^\dagger(\Psi)\ket{\Omega} \notag\\
    &= \Bigr([\hat{a}(\Phi^*),\hat{a}^\dagger(\Psi)] + \hat{a}^\dagger(\Psi)\hat{a}(\Phi^*)\Bigr)\ket{\Omega} \notag\\
    &= \braket{\Phi,\Psi}_{\kg}\ket{\Omega}\,.
\end{align}
For the creation operator we have
\begin{align}
    \hat{a}^\dagger(\Phi)\ket{\Psi_1} &\equiv \hat{a}^\dagger(\Phi)\hat{a}^\dagger(\Psi) \ket{\Omega} \notag\\
    &\equiv \int\dd^n\bk\,\dd^n\bk'\,\Phi(\bk)\Psi(\bk')\hat{a}^\dagger_\bk\hat{a}_{\bk'}^\dagger\ket{\Omega}\,.
\end{align}
We have two cases. First, if $\Phi = \Psi$ then we are adding ``two particles on the same mode'' to the vacuum and we should get $\sqrt{2}(\hat{a}^\dagger(\Psi))^2\ket{\Omega}$; conversely, we should not have an additional factor of $\sqrt{2}$ if we add two particles into two different modes, which corresponds to $\Phi\not\propto\Psi$. Indeed, the algebraic definition enforces this by symmetrization through $S_{n+1}$:
\begin{align}
    S_2[\Phi\otimes\Psi] \equiv \frac{1}{\sqrt{2}}(\ket{\Phi}\otimes\ket{\Psi} + \ket{\Psi}\otimes\ket{\Phi})
\end{align}
and hence
\begin{align}
    \hat{a}^\dagger(\Phi)\hat{a}^\dagger(\Psi) = \begin{cases}
        \sqrt{2}(\hat{a}^\dagger(\Psi))^2\ket{\Omega}\qquad \Psi\propto \Phi\\
        \hat{a}^\dagger(\Phi)\hat{a}^\dagger(\Psi)\ket{\Omega}\qquad \Phi\not\propto \Psi\,.
    \end{cases}
\end{align}
which is indeed what we expect from the unsmeared version of the creation operators. These definitions have the advantage that $\hat{a}(\Phi)\ket{\Psi_1}$ will be a proper vector in $\fock$. This is to be contrasted with the physicists' liberal usage of $\hat{a}_\bk,\hat{a}_\bk^\dagger$ where rigorously speaking $\hat{a}_\bk^\dagger\ket{\Omega}$ is not normalizable and hence lives outside the Fock space ($\hat{a}_\bk$ has no densely defined adjoint \cite{fewster2019algebraic}).

The number operator $\hat{N}$ is the operator that satisfies 
\begin{align}
    \hat{N}\ket{\Psi_n} = n\ket{\Psi_n}
\end{align}
with domain being all vectors in $\fock$ with finite \textit{variance}: $\braket{\Psi|\hat{N}^2|\Psi} = \sum_{n=0}^\infty n^2||\Psi_n||^2<\infty$. The eigenvalue equation above is consistent with Definition~\ref{eq: number-operator-rigorous}. In terms of the smeared language, we first define \cite{bratteli2002operatorv2}
\begin{align}
    \hat{N}_{\Psi}\coloneqq\hat{a}^\dagger(\Psi)\hat{a}(\Psi)
\end{align}
so that formally
\begin{align}
    \hat{N} = \sum_{n=0}^\infty\hat{N}_{\Psi_n}\,.
\end{align}
To define this properly one needs to take care of domain issues since it is unbounded (see \cite{bratteli2002operatorv2} for more details). To see that this is consistent with physics literature, observe that in the sharp momentum limit we have
\begin{align}
    \hat{N}_{u_\bk} &= \hat{a}^\dagger(u_\bk)\hat{a}(u^*_\bk) 
    \equiv \hat{a}_\bk^\dagger\hat{a}_\bk^{\phantom{\dagger}} \eqqcolon \hat{N}_\bk\,.
\end{align}

We make a passing remark that the definition of the ladder operators is \textit{basis-independent} \cite{fewster2019algebraic}. To see this, suppose that we have a function $\Psi\in\mathcal{H}$, we can write
\begin{align}
    \hat{a}^\dagger(\Psi) &\equiv \int \dd^n\bk\, (u_\bk^*,\Psi)_\mathcal{H}\hat{a}(u_\bk) \notag\\
    &\equiv \int \dd^n\bk\, (v_\bk^*,\Psi)_\mathcal{H}\hat{a}(v_\bk) 
\end{align}
for two different mode functions $\{u_\bk\}$ and $\{v_\bk\}$ that are related by some Bogoliubov transformations. Indeed, in the standard canonical quantization approach (see, e.g., \cite{birrell1984quantum}) $\hat{a}^\dagger_\bk\equiv \hat{a}^\dagger(u_\bk)$ could correspond to the creation operator associated with an inertial observer (hence defines a Minkowski vacuum), while $\hat{b}^\dagger_\bk\equiv \hat{a}^\dagger(v_\bk)$ is associated to an accelerating observer. The algebraic approach is cleaner in the sense that $\hat{a}(\Psi)$ is well-defined without appealing to improper (unnormalizable) eigenmodes.

\section{Finite-density quasifree representations}
\label{appendix: araki-woods}

Here we briefly review two different quasifree representations known as the \textit{Kubo-Martin-Schwinger representation} and \textit{Araki-Woods representation} \cite{araki1963representations,pillet2006open,morfa2012deformations,derezinski2006introduction}. The important property of the both representations is that since they arise from quasifree states, the GNS construction gives rise to Fock representations that are generically not unitarily equivalent to the vacuum representation (which we recall is also a Fock representation).  For clarity it is actually easier to explain the Araki-Woods (AW) representation first since it is easier to see how the three representations are related to one another.

Let $\mathcal{H}_0$ be the one-particle Hilbert space associated with the vacuum representation, so that the Fock space of  the vacuum representation is $\Gamma(\mathcal{H}_0)$. The AW representation can be viewed as a Fock representation with Hilbert space
\begin{align}
    \mathcal{H}_{\textsc{aw}} &= \Gamma(\mathcal{H}_0)\otimes \bar{\Gamma(\mathcal{H}_0)}
\end{align}
where $\bar{\Gamma(\mathcal{H}_0)}$ is the complex conjugate of the vacuum Fock space. Using the momentum-space labeling, the ladder operators in the AW representation read 
\begin{align}
    \hat{a}_{\textsc{aw}}(\varphi_\bk) \equiv \hat{a}(
    \sqrt{1+\rho_\bk}\varphi_\bk) \otimes \bar{\openone} + \openone\otimes\hat{a}(\sqrt{\rho_\bk}\varphi_\bk) 
\end{align}
where $\varphi_\bk\in L^2(\R^n)$ with appropriate on-shell relativistic measure $\dd^n\bk/\sqrt{2(2\pi)^n\omega_\bk}$. 
\begin{proposition}[\cite{araki1963representations}]
    The AW representation is a quasifree and hence regular representation, with the expectation value of the Weyl generator is given by
    \begin{align}
        \omega_{\mathrm{AW}}(W(f)) 
        &= e^{-\frac{1}{2}(f_\bk,(1+2\rho_\bk)f_\bk)_{\mathcal{H}}}
    \end{align}
    Furthermore, the AW representation is unitarily equivalent to the KMS representation by setting the momentum density to be
    \begin{align}
        \rho_\bk(\beta) \coloneqq \frac{1}{e^{\beta\omega_\bk}-1}\,.
        \label{eq: Planck-distribution}
    \end{align}
    and $\beta$ is the inverse KMS temperature.
\end{proposition}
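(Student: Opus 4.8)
The plan is to establish the three assertions in order: (i) that the operators $\hat{a}_{\textsc{aw}}(\varphi_\bk)$ obey the canonical commutation relations, so the representation is regular; (ii) that the vector $\Omega_{\textsc{aw}}\coloneqq\Omega_0\otimes\bar{\Omega_0}$ reproduces the stated Gaussian generating functional, which exhibits $\omega_{\textsc{aw}}$ as a quasifree state; and (iii) that matching this functional against the two-point function of the free KMS state identifies the two GNS representations.

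First I would check the CCR. Treating $\rho_\bk\geq0$ as a real multiplication operator on $\mathcal{H}_0\subset L^2(\R^n)$ and using $[\hat{a}(u^*),\hat{a}^\dagger(v)]=(u,v)_{\mathcal{H}}\openone$ on each Fock factor, a direct computation gives $[\hat{a}_{\textsc{aw}}(\varphi_\bk),\hat{a}_{\textsc{aw}}^\dagger(\psi_\bk)]=(\varphi_\bk,(1+\rho_\bk)\psi_\bk)_{\mathcal{H}}-(\varphi_\bk,\rho_\bk\psi_\bk)_{\mathcal{H}}=(\varphi_\bk,\psi_\bk)_{\mathcal{H}}$, the relative minus sign reflecting that the second tensor factor is the conjugate Fock space $\bar{\Gamma(\mathcal{H}_0)}$, on which the physical annihilation operator $\hat{a}(\sqrt{\rho_\bk}\,\cdot\,)$ is realized by a creation operator; the remaining commutators vanish. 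Hence $\hat{\phi}_{\textsc{aw}}(f)\coloneqq\hat{a}_{\textsc{aw}}(f_\bk^*)+\hat{a}_{\textsc{aw}}^\dagger(f_\bk)$ is essentially self-adjoint on the finite-particle domain and $W_{\textsc{aw}}(f)\coloneqq e^{\ii\hat{\phi}_{\textsc{aw}}(f)}$ is a strongly continuous family of unitaries, so the representation is regular.

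Next comes the generating functional. Since $\hat{\phi}_{\textsc{aw}}(f)$ splits as $\Phi_1\otimes\bar{\openone}+\openone\otimes\Phi_2$ into pieces acting on the two commuting tensor factors, $W_{\textsc{aw}}(f)=e^{\ii\Phi_1}\otimes e^{\ii\Phi_2}$ and hence $\omega_{\textsc{aw}}(W(f))=\braket{\Omega_0|e^{\ii\Phi_1}|\Omega_0}\braket{\bar{\Omega_0}|e^{\ii\Phi_2}|\bar{\Omega_0}}$. Each factor is an ordinary Fock-vacuum expectation of a Weyl operator, so applying $\braket{0|e^{\ii(\hat{a}(g^*)+\hat{a}^\dagger(g))}|0}=e^{-\frac12(g,g)_{\mathcal{H}}}$ with $g=\sqrt{1+\rho_\bk}\,f_\bk$ on the first factor and $g=\sqrt{\rho_\bk}\,f_\bk$ on the second yields $e^{-\frac12(f_\bk,(1+\rho_\bk)f_\bk)_{\mathcal{H}}}\,e^{-\frac12(f_\bk,\rho_\bk f_\bk)_{\mathcal{H}}}=e^{-\frac12(f_\bk,(1+2\rho_\bk)f_\bk)_{\mathcal{H}}}$, the claimed formula. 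Being of the Gaussian form $e^{-\mu(Ef,Ef)/2}$ that defines a quasifree state, with $\mu$ the positive bilinear form determined by $(f_\bk,(1+2\rho_\bk)f_\bk)_{\mathcal{H}}$ (which dominates the symplectic form because $\rho_\bk\geq0$), $\omega_{\textsc{aw}}$ is quasifree, hence analytic and in particular regular.

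For the unitary equivalence I would compute the symmetrized two-point function of the free $(\alpha_t,\beta)$-KMS state $\omega_\beta$, using $\alpha_t(\hat{a}_\bk^\dagger)=e^{\ii\omega_\bk t}\hat{a}_\bk^\dagger$. The KMS condition applied to $\hat{a}_\bk$ and $\hat{a}_{\bk'}^\dagger$ gives $\omega_\beta(\hat{a}_\bk\hat{a}_{\bk'}^\dagger)\,e^{-\beta\omega_{\bk'}}=\omega_\beta(\hat{a}_{\bk'}^\dagger\hat{a}_\bk)$, which together with the CCR forces the Planckian occupation $\omega_\beta(\hat{a}_\bk^\dagger\hat{a}_{\bk'})=\rho_\bk(\beta)\,\delta(\bk-\bk')$ with $\rho_\bk(\beta)=(e^{\beta\omega_\bk}-1)^{-1}$; gauge invariance kills $\omega_\beta(\hat{a}(f^*)^2)=\omega_\beta(\hat{a}^\dagger(f)^2)=0$, so $\omega_\beta(W(f))=e^{-\frac12(f_\bk,(1+2\rho_\bk(\beta))f_\bk)_{\mathcal{H}}}$, coinciding with $\omega_{\textsc{aw}}(W(f))$ upon setting $\rho_\bk=\rho_\bk(\beta)$. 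Since agreement on all Weyl generators forces equality of states, $\omega_{\textsc{aw}}=\omega_\beta$; and since $\rho_\bk(\beta)>0$ almost everywhere, $\Omega_{\textsc{aw}}$ is cyclic (indeed separating) for the von Neumann algebra generated by $\{W_{\textsc{aw}}(f)\}$, so $(\mathcal{H}_{\textsc{aw}},\pi_{\textsc{aw}},\Omega_{\textsc{aw}})$ is a GNS triple for this common state and, by uniqueness of the GNS construction, is unitarily equivalent to the thermal (KMS) representation of $\omega_\beta$. I expect the cyclicity of $\Omega_{\textsc{aw}}$ to be the main obstacle: it is precisely where the strict positivity $\rho_\bk(\beta)>0$ enters, and it is what separates the genuinely thermal case from the vacuum case $\rho_\bk=0$, in which the second Fock factor is inert and one merely recovers $\Gamma(\mathcal{H}_0)$.
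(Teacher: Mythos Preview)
The paper does not actually supply a proof of this proposition: it is stated with a citation to Araki--Woods \cite{araki1963representations} and then the surrounding text moves on to describe the KMS representation and its relation to the vacuum and AW representations. So there is no ``paper's own proof'' to compare against.

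Your argument is the standard one and is essentially correct. The CCR check in step~(i) is right, with the sign flip coming from the conjugate Fock factor; the factorization in step~(ii) is exactly how one computes $\omega_{\textsc{aw}}(W(f))$ and recovers the Gaussian form; and in step~(iii) the KMS condition together with the CCR does force the Planckian occupation, after which equality of states on all Weyl generators gives $\omega_{\textsc{aw}}=\omega_\beta$ and GNS uniqueness finishes the job. Your caveat about cyclicity of $\Omega_{\textsc{aw}}$ is well placed: this is where one genuinely needs $\rho_\bk>0$ almost everywhere (equivalently $\beta<\infty$), and it is precisely the content of the original Araki--Woods analysis that the doubled Fock vacuum is cyclic and separating in that regime. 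If you wanted to tighten the write-up, that is the step to expand, since the rest is routine computation.
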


The KMS representation, also known as \textit{thermal representation}, is typically defined in terms of analytic properties of the two-point correlators of the field \cite{kubo1957statistical,martinSchwinger1959theory,bratteli2002operatorv2} (see Definition~\ref{def: KMS}). The KMS representation can be viewed as a Fock representation with the
 Hilbert space\footnote{The KMS representation can also be thought of as specifying a different one-particle structure $(\mathcal{H}_0\oplus\bar{\mathcal{H}_0},K_\beta)$ from the one in the vacuum representation $(\mathcal{H}_0,K)$ \cite{fewster2019algebraic,KayWald1991theorems}. Roughly speaking, the one-particle Hilbert space of the KMS representation is the ``purified'' (doubled) version of the vacuum representation.}
\begin{align}
    \mathcal{H}_{\beta} &= \Gamma(\mathcal{H}_0 \oplus \bar{\mathcal{H}_0})
\end{align}
where $\bar{\mathcal{H}_0}$ is the complex conjugate of the one-particle Hilbert space. The ladder operators in the KMS representation is related to the AW one by setting the momentum density to the Planck distribution \eqref{eq: Planck-distribution}. 
{For completeness, in terms of the Weyl generators and the cyclic vector $\ket{\Omega}$, the vacuum and the AW representations are related via \cite{morfa2012deformations}
\begin{align}
    \pi_{\textsc{aw}}(W(f_\bk)) 
    &= 
    \pi_{\omega_0}(W(\sqrt{1+\rho_\bk}f_\bk))
    \otimes 
    \bar{\pi_{\omega_0}(W(\sqrt{\rho_\bk}f_\bk))} \,,
\end{align}
where the RHS are tensor products of the vacuum representation of two Weyl generators that depend on $\rho_\bk$. The GNS cyclic vector of the AW representation is given in terms of the one in the vacuum representation as $\ket{\Omega}_{\textsc{aw}} 
= \ket{\Omega}\otimes\ket{\bar{\Omega}}$.}

For scalar field theory, a result of Takesaki (1969) \cite{takesaki1970disjointness} implies that two KMS states with different $\beta$ lead to unitarily inequivalent thermal representations. Since a vacuum state can be viewed as the zero-temperature limit of thermal states, it follows that the vacuum representation (the `default' Fock space of the scalar field theory where the vacuum state lives) is not unitarily equivalent to thermal representations or the AW representations unless the momentum density $\rho_\bk$ vanishes (or equivalently, when $\beta\to\infty$). 

Recall that when the UDW coupling satisfies $R_2(\lambda f_\bk,3)=\infty$, the physical interpretation is that there are infinite number of IR bosons produced by the interactions. Since the standard Fock representation cannot support infinitely many \textit{total} particle number, the appropriate Hilbert space to find the ground state is the AW representation. Indeed, for the Planckian distribution \eqref{eq: Planck-distribution} the total number \textit{density} (integrated over all $\bk$) is finite, but in the thermodynamic limit the total number of excitations is divergent due to the infinite spatial volume of the spacetime (unless we specifically choose compact spatial geometries which exclude Minkowski spacetime). The choice of zero-temperature limit of equilibrium KMS state as a candidate for the ground state of the UDW model is therefore an immediate consequence of the AW representation being able to support infinite particle number for finite particle number density. The KMS state, in particular, is natural since both the vacuum and the thermal states are stationary states of the Hamiltonian.

\bibliography{ref}
\bibliographystyle{iopart-num}

\end{document}